\newtheorem{invariant}{Invariant}
\newcommand{\lpar}{\left (}
\newcommand{\rpar}{\right )}
\newcommand{\lbrac}{\left \lbrace}
\newcommand{\rbrac}{\right \rbrace}
\newcommand{\graham}[3]{\mbox{\ensuremath{#1\mid#2\mid#3}}}
\newcommand{\opt}{\mathit{OPT}}
\newcommand{\alg}{\mathit{ALG}}
\newcommand{\jobset}{\ensuremath{\mathcal{J}}}
\newcommand{\objsum}{\ensuremath{\sum_j C_j}}
\newcommand{\compl}{\ensuremath{C_j}}
\newcommand{\proc}[1][j]{\ensuremath{p_{#1}}}
\newcommand{\test}[1][j]{\ensuremath{t_{#1}}}
\newcommand{\total}[1][j]{\ensuremath{\sigma_{#1}}}
\newcommand{\delay}[2][i, j]{\ensuremath{d_{#1}^{#2}}}
\newcommand{\pdelay}[2][i, j]{\ensuremath{D^{#2}{(#1)}}}
\newcommand{\nodes}{\ensuremath{V}}
\newcommand{\arcs}{\ensuremath{A}}
\newcommand{\red}{\ensuremath{R}}
\title{Scheduling with Obligatory Tests}
\author{Konstantinos Dogeas}{Department of Computer Science, Durham University, UK}{konstantinos.dogeas@durham.ac.uk}{https://orcid.org/0009-0001-1528-3221}{Funded by EPSRC grant EP/S033483/2.}
\author{Thomas Erlebach}{Department of Computer Science, Durham University, UK}{thomas.erlebach@durham.ac.uk}{https://orcid.org/0000-0002-4470-5868}{Supported by EPSRC grants EP/S033483/2 and EP/T01461X/1.}
\author{Ya-Chun Liang}{Data Science Institute, Columbia University, USA}{yl5608@columbia.edu}{}{}
\authorrunning{K. Dogeas et al.}
\keywords{Competitive ratio, Online algorithm, Scheduling with testing, Sum of completion times}
\begin{document}
\maketitle

\begin{abstract}
Motivated by settings such as medical treatments or aircraft
maintenance, we consider a scheduling problem with
jobs that consist of two operations, a test and a processing part.
The time required to execute the test is known in advance while the time required
to execute the processing part becomes known only upon completion of the test.
We use competitive analysis to study algorithms for
minimizing the sum of completion times for $n$ given jobs on a single machine.
As our main result, we prove using a novel analysis technique
that the natural $1$-SORT algorithm has competitive ratio
at most $1.861$. For the special case of uniform test times,
we show that a simple threshold-based algorithm has
competitive ratio at most $1.585$. We also prove a lower bound
that shows that no deterministic algorithm can be better
than $\sqrt{2}$-competitive even in the case of uniform test
times.
\end{abstract}

\section{Introduction} \label{sec:intro}
Settings where the processing time of a job is initially uncertain but
can be determined by executing a test have received increasing
attention in recent years. 
Levi et al.~\cite{LeviMS19} considered a setting where the weight
and processing time of a job follow a known joint probability distribution,
a job can be tested to reveal its weight and processing time,
and the goal is to find a scheduling policy that minimizes the expectation
of the weighted sum of completion times.
D\"urr et al.~\cite{DurrEMM20} introduced an adversarial setting 
of scheduling with testing where
each job $j$ is given with an upper bound $u_j$ on its processing time.
The scheduler can either execute the job untested (with processing
time~$u_j$), or test it first to reveal its actual processing time
$p_j\le u_j$ and then execute it with processing time~$p_j$.
They studied the setting of uniform test times and gave
competitive algorithms for minimizing the sum of completion
times and for minimizing the makespan on a single machine.
Subsequent work considered this
adversarial model with arbitrary test times for minimizing the sum
of completion times on a single machine~\cite{AlbersE20,LiuLWZ23}
or on multiple machines~\cite{GongCH23},
for makespan minimization on parallel machines~\cite{AlbersE21,GongL21,GongGLM22},
and for minimizing energy or maximum speed in scheduling with speed
scaling~\cite{BampisDKLP21}.

In all these studies, it is \emph{optional} for the scheduler whether to test a job
or not. In many application settings, however, it is natural
to assume that a test \emph{must} be executed for each job before the job can
be executed. For example, for a repair job, it is necessary to first
diagnose the fault (this corresponds to a test) before the repair can
be carried out, and the result of the fault diagnosis yields information
about how long the repair job will take. For a maintenance job (for example,
aircraft maintenance~\cite{LeviMS19}), it is
necessary to determine the maintenance needs (this corresponds to a test)
before the maintenance can be carried out. In a medical emergency department,
patients need to be diagnosed (i.e., `tested') before they can be treated. Therefore,
we propose to study scheduling with testing in a setting with \emph{obligatory tests}. Initially,
each job $j$ is given with a test time~$t_j$, and nothing is known
about its processing time. Testing the job takes time~$t_j$ and
reveals the processing time $p_j$ of the job. The processing part
of the job can then
be scheduled any time after the completion of the test and takes
time $p_j$ to be completed. We study algorithms for minimizing the
sum of the completion times on a single machine and evaluate the
performance of our algorithms using competitive analysis. Note that
in our setting the offline optimum must also test every job.

We consider both the setting with arbitrary test times and the setting
with uniform test times where we assume w.l.o.g.~that $t_j=1$ for all jobs~$j$.
The latter setting
is motivated by applications where the test operation takes
the same time for every job; for example, in a medical setting, every
patient may have to undergo the same test procedure to be diagnosed.

For minimizing the sum of completion times on a single machine
in our setting with obligatory tests, obtaining a $2$-competitive
algorithm is straightforward: Treating each job (test plus processing part)
as a single entity with unknown processing time and applying the
Round Robin (RR) algorithm (which executes all unfinished jobs
simultaneously at the same rate) gives a $2$-competitive preemptive
schedule~\cite{MotwaniPT94},  and in our setting this algorithm can
be made non-preemptive without any increase in job completion times:
At any time, among all tests or processing parts currently available
for execution, it is known which of them will complete first in the
preemptive schedule, and hence that test or processing part can be
chosen to be executed non-preemptively first (the same observation
has been made previously for the setting with optional
tests~\cite{DurrEMM20,LiuLWZ23,GongCH23}). Our aim is therefore to
design algorithms that are better than $2$-competitive.

\subsection{Our contributions}
For the setting with arbitrary test times, we consider the
algorithm $1$-SORT, which is a natural adaptation of the
$(\alpha,\beta)$-SORT algorithm proposed by Albers and Eckl~\cite{AlbersE20}
to the setting with obligatory tests. Using a novel analysis
technique that we consider our main contribution, we show
that the competitive ratio of $1$-SORT is at most $1.861$.
In our analysis, we consider a complete graph
on the jobs, where each edge is associated with the delay that
the two jobs connected by the edge create for each other.
The sum of the delays associated with the edges
and the job processing times is then equal to the sum of
completion times of the schedule.
The graph can contain edges where the associated delay in the schedule
computed by the algorithm is arbitrarily close to twice the delay in the
optimal schedule, and therefore a straightforward analysis
would only yield a competitive ratio of~$2$. We show that
for edges with delay ratio close to~$2$ there are always
sufficiently many other edges whose delay ratio is much
smaller than~$2$, so that overall the ratio of the objective
values of the algorithm and the offline optimum is bounded
by a value smaller than~$2$.

For the setting with unit test times, we consider
an adaptation of the \textsc{Threshold} algorithm by D\"urr et al.~\cite{DurrEMM20}
to the setting with obligatory tests: When the test
of a job reveals a processing time smaller than
a threshold $y$, the algorithm executes the processing part of the job
immediately; otherwise, the execution of the processing part is deferred
to the end of the schedule, where all the processing parts that have
been deferred are executed in SPT (shortest processing time) order.
We show that the algorithm is $1.585$-competitive (and this analysis is tight
for the algorithm). We also give a lower bound showing that
no deterministic algorithm can be better than $\sqrt{2}$-competitive.

\subsection{Related Work} \label{subsec:related_work}
For the classical offline scheduling problem (without tests) of minimizing the sum of completion times
on a single machine, denoted by $1\mid\,\mid\sum C_j$,
it is known that always executing first the job with the shortest processing
time (SPT) among all unscheduled jobs gives the optimal schedule (a generalisation
to the weighted sum of completion times was proven by Smith~\cite{Smith56}).
For the setting with unknown processing times (i.e., the scheduler
does not know the processing time of a job until the job completes),
Motwani et al.~\cite{MotwaniPT94} 
showed that the Round Robin (RR) algorithm, a preemptive algorithm that schedules 
all unfinished jobs simultaneously,
is $\lpar 2 - \frac{2}{n+1} \rpar$-competitive, where $n$ is the number of
jobs, and that this is best possible.

As mentioned earlier, D\"urr et al.~\cite{DurrEMM20} introduced the adversarial model for
scheduling with testing in a setting with optional tests: For each job $j$ its test
time $t_j$ and an upper bound $u_j$ on its processing time are given. The algorithm
can either execute the job untested with processing time $u_j$ or test it first.
The test takes time~$t_j$ and reveals the actual processing time $p_j$, which satisfies
$0\le p_j\le u_j$. The job can then be executed at any time after the test and takes
time~$p_j$. They considered only the case of uniform test times ($t_j=1$ for all jobs~$j$)
and provided a $2$-competitive deterministic algorithm and a $1.7453$-competitive
randomized algorithm for minimizing the sum of completion times on a single machine. Their deterministic $2$-competitive algorithm is the algorithm \textsc{Threshold} that tests all jobs with $u_j\ge 2$ and executes the processing part of a job $j$ immediately after its test if $p_j\le 2$ and otherwise defers the job to the end of the schedule (where the processing parts of all unfinished jobs are executed in SPT order).
They also gave lower bounds of $1.8546$ and $1.6257$ for deterministic and
randomized algorithms, respectively. Albers and Eckl~\cite{AlbersE20}
considered the problem with arbitrary test times and gave a deterministic
$4$-competitive algorithm, a $3.3794$-competitive randomized algorithm,
and a preemptive deterministic algorithm with competitive ratio $2\phi\approx 3.2361$,
where $\phi=(1+\sqrt{5})/2$ is the golden ratio.
Their preemptive deterministic algorithm can be made non-deterministic as outlined
above, thus giving a $2\phi$-competitive deterministic algorithm.
The algorithm for which they showed competitive ratio~$4$ is called $(\alpha,\beta)$-SORT:
It tests a job $j$ if $u_j\ge \alpha t_j$ and, at any time, executes the test or processing part
of smallest priority, where the priority of the test of a job $j$ is taken to be $\beta t_j$ and the priority of the processing part of a tested job $j$ is taken to be~$p_j$. In their
analysis, choosing $\alpha=\beta=1$ optimizes the resulting ratio, giving the bound
of~$4$.
Liu et al.~\cite{LiuLWZ23} showed that a more careful analysis of $(\alpha,\beta)$-SORT
yields that the algorithm achieves ratio $1+\sqrt{2}\approx 2.414$ for $\alpha=\beta=\sqrt{2}$.
They also gave improved algorithms with deterministic
competitive ratio $2.316513$ and randomized competitive ratio $2.152271$.
Gong et al.~\cite{GongCH23} considered the problem of minimizing the sum of completion
times in the setting with optional tests on multiple machines. Among other results,
they presented a $3.2361$-competitive algorithm for arbitrary test times and
an algorithm with competitive ratio approaching $2.9271$ for large $m$
for uniform test times.

For the problem of minimizing the makespan in scheduling with optional tests
on a single machine, D\"urr et al.~\cite{DurrEMM20}
gave a deterministic $\phi$-competitive algorithm and a randomized $4/3$-competitive algorithm
for uniform test times,
and showed that both bounds are best possible. Albers and Eckl~\cite{AlbersE20} showed that
the same bounds hold for the case of arbitrary test times.
Albers and Eckl~\cite{AlbersE21} then considered the case of $m$ parallel machines
and gave algorithms with competitive ratio~$2$ for the case where the processing part of a job
can be executed at any time after the completion of the test, possibly even on a different
machine. For the setting where the processing part of a job must be executed
immediately after its test on the same machine, they presented algorithms with
ratios approaching $3.016$ for arbitrary test times and $3$
for uniform test times for large~$m$. The latter ratios were improved to
$2.9513$ and $2.8081$, respectively, by Gong and Lin~\cite{GongL21},
and to $2.8681$ and $2.5276$, respectively, by Gong et al.~\cite{GongGLM22}.

\subsection{Outline of the paper}
We give a formal problem definition and discuss preliminaries in Section~\ref{sec:prelim}.
Our algorithm for arbitrary test times and its analysis are presented in Section~\ref{sec:arbitrary}.
The threshold-based algorithm and its analysis as well as the lower bound for uniform test times
are given in Section~\ref{sec:uniform}.
Conclusions are presented in Section~\ref{sec:conc}.

\section{Problem Definition and Preliminaries}
\label{sec:prelim}%
\subparagraph{Problem definition.}
We are given a job set $\jobset = \lbrac 1, 2, \dots, n \rbrac$ that must be scheduled on a single 
machine.
Each job~$j \in \jobset$ has an \emph{unknown} processing time $\proc \ge 0$ and a \emph{known} 
test time $\test \ge 0$, where $\proc$ and $\test$ are non-negative real numbers.
We denote the \emph{total size} (or just \emph{size}) of job $j$ by $\total = \test + \proc$.
Furthermore, we denote the maximum of the test time and the processing time
of job $j$ by $m_j = \max\{t_j,p_j\}$.
Testing job~$j$ takes time $\test$ and reveals its processing time~$p_j$.
Once job~$j$ has been tested, its processing part can be executed and takes
time~$p_j$. The completion time $C_j$ of a job is the point in time when
its processing part finishes.
We consider the setting with \emph{obligatory tests} where every job must be tested 
before the processing part of the job can be executed.
Note that the test of every job must be executed both by the algorithm and by the optimal solution.
The machine can execute at any time only one test or one processing part of a job.
The tests and processing parts must be scheduled non-preemptively, but the
processing part of a job does not have to be started immediately after its test.
As is common in the literature on scheduling with testing for minimizing the
sum of completion times~\cite{DurrEMM20,AlbersE20,GongCH23}, we refer to this setting
as \emph{non-preemptive} but note that it has been called \emph{test-preemptive}
in the context of makespan minimization~\cite{AlbersE21,GongL21,GongGLM22}.
The objective is to minimize the sum of completion times $\sum_{j\in\jobset} C_j$. 

In the setting of \emph{uniform test times}, we assume that $t_j=1$ for all $j\in\jobset$.
In the setting of \emph{arbitrary test times}, the test time of each job $j$ is an arbitrary
real number~$t_j\ge 0$.
Using Graham's notation for describing scheduling problems, these two variations can
be denoted by
$\graham{1}{\test = 1}{\objsum}$ and $\graham{1}{\test}{\objsum}$, respectively.

\subparagraph{The objective function.}
For the purpose of analyzing the competitive ratio of algorithms,
it will be useful to consider different ways of expressing the
objective function.
For two different jobs $k$ and $j$ in the schedule produced
by the algorithm under consideration, we use $d_{k,j}$ to denote
the amount of time that the test and/or processing part of job
$k$ get executed before the completion of job~$j$. The completion
time of job $j$ can then be written as
$$C_j=\sigma_j + \sum_{\substack{k \in \jobset \\ k \neq j}} \delay[k,j]{} \,.$$
For any pair of different jobs $j$ and~$k$, we use $D(j,k)=d_{j,k}+d_{k,j}$
to denote the delay that job $j$ causes for job $k$ plus the delay that job $k$
causes for job~$j$. We then have:
\begin{equation} \label{eq:obj_alg}
	\sum_{j \in \jobset} \compl  = \sum_{j \in \jobset} \lpar \sigma_j + \sum_{\substack{k \in \jobset \\ k \neq 
	j}} \delay[k, j]{} \rpar = \sum_{j \in \jobset} \sigma_j + \sum_{\substack{j, k \in \jobset \\ j < k}} 
	\pdelay[j, k]{}
\end{equation}

For the case of uniform test times, we will sometimes use that
a schedule of $n$ jobs with $p_1\ge p_2\ge\cdots\ge p_n$
that executes the jobs in SPT order, with the execution of the test of each job
immediately followed by the execution of its processing part,
has sum of completion times
\begin{equation}
\label{eq:uniformsum}
\sum_{j=1}^n C_j = \sum_{j=1}^n j(1+p_j) = \lpar \sum_{j=1}^n j\rpar + \lpar \sum_{j=1}^n p_j\rpar
=\frac{n(n+1)}{2} + \lpar \sum_{j=1}^n p_j\rpar
\,.
\end{equation}

\subparagraph{The optimal schedule.}
An optimal offline schedule views each job as a single operation
that takes total time $\sigma_j$ to be executed
and schedules the jobs in SPT order with respect
to those times. We use $d^*_{j,k}$ and $D^*(j,k)$ to denote the values
corresponding to $d_{j,k}$ and $D(j,k)$ in the optimal schedule. For
jobs $k$ and $j$ with $\sigma_k<\sigma_j$, we have
$d^*_{k,j}=\sigma_k$, $d^*_{j,k}=0$ and $D^*(j,k)=\sigma_k$.
In general, $D^*(j,j')=\min\{\sigma_j,\sigma_{j'}\}$ for any
pair of jobs $j$ and~$j'$.
For the sum of completion times $\opt$ in the optimal schedule, we have:
\begin{align} \label{eq:obj_opt}
	\opt =
	\sum_{j \in \jobset} \sigma_j + \sum_{\substack{j, k \in \jobset \\ j < k}} D^*(j,k)
	=
	\sum_{j \in \jobset} \sigma_j + \sum_{\substack{j, k \in \jobset \\ j < k}}
	\min \{ \sigma_j,\sigma_k \}
\end{align}

\subparagraph{Competitive ratio.}
For an algorithm under consideration, we
use $\alg$ to denote the sum of completion times in the schedule
produced by the algorithm for a given instance. By $\opt$ we denote the sum of completion
times in the optimal offline schedule for that instance. We say that the algorithm
is \emph{$\rho$-competitive} (or has \emph{competitive ratio} at most $\rho$)
if $\alg/\opt \le \rho$ holds for all instances of the problem.
\section{Arbitrary Test Times} \label{sec:arbitrary}
In this section, we consider the problem $1\mid t_j \mid \sum C_j$ where
the test times can be arbitrary non-negative real numbers.
We refer to the test and the processing part of a job~$j$ as \emph{operations}
and denote the test operation by~$\tau_j$ and the processing operation
by~$\pi_j$.
In Section~\ref{subsec:arbitrary:algorithm}, we present the algorithm
$\beta$-SORT.
In Section~\ref{subsec:arbitrary:upper}, we prove an upper bound
of $1.861$ on the competitive ratio of $\beta$-SORT with $\beta=1$.
In Section~\ref{subsec:arbitrary:lower}, we present input examples
showing that the competitive ratio of $\beta$-SORT is at least
$1.618$ for $\beta=1$ and at least some larger value for all other
values of~$\beta$.
\subsection{Algorithm $\beta$-SORT} \label{subsec:arbitrary:algorithm}
\begin{algorithm}[tbp]
	\caption{$\beta$-SORT \label{alg:betaSort}}
	$\mathcal{R} = \emptyset$; // empty priority queue\\
	\For{$j\in\jobset$}{
	 insert the test operation $\tau_j$ with priority $\beta\times t_j$ into $\mathcal{R}$
	}
	\While{$\mathcal{R} \neq \emptyset$}{
		$o = \mathcal{R}.\mathrm{deleteMin()}$\;
		execute $o$\;
		\If{$o$ was the test operation $\tau_j$ of a job $j$}{
		    insert the processing operation $\pi_j$ with priority $p_j$ into $\mathcal{R}$
			}
		}
\end{algorithm}%
For the problem variant with optional tests,
Albers and Eckl~\cite{AlbersE20} proposed the algorithm
$(\alpha,\beta)$-SORT that tests a job $j$ if
$u_j\ge \alpha t_j$ and always schedules the shortest
available operation, but uses
$\beta\times t_j$ instead of $t_j$ when comparing
the test time of job~$j$ with the processing
time of another job that has already been tested.
They showed that the algorithm is $4$-competitive
with $\alpha=\beta=1$.
We adapt their algorithm to our setting
with obligatory tests. The parameter $\alpha$ is not
relevant in our setting as every job must be tested,
so we refer to the resulting algorithm as $\beta$-SORT
(see Algorithm~\ref{alg:betaSort}).
The algorithm maintains a priority queue $\mathcal{R}$ of available test and
processing operations (i.e., the test operations of jobs that
have not yet been tested and the processing parts of jobs
that have already been tested). The priority of a test
operation $\tau_j$ is $\beta \times t_j$ and the priority
of a processing operation $\pi_j$ is~$p_j$.
The algorithm always schedules next the operation with minimum
priority in $\mathcal{R}$ (returned and removed from $\mathcal{R}$
by the call to $\mathcal{R}.\mathrm{deleteMin}()$) and, if that operation
was a test, inserts the corresponding processing operation into~$\mathcal{R}$.

\subsection{Upper bound on the competitive ratio of $1$-SORT} \label{subsec:arbitrary:upper}
By adapting the analysis by Albers and Eckl~\cite{AlbersE20} in
a straightforward way, one gets that $\beta$-SORT is 
$\lpar 1 + \max \lbrac 1 + \frac{1}{\beta}, 1 + \beta \rbrac \rpar$-competitive.
This bound is minimized for $\beta=1$, showing that the competitive ratio
of $1$-SORT is at most~$3$.
We fix $\beta=1$ and
prove the substantially better bound of $1.861$ on
the competitive ratio of $1$-SORT. We do not believe that
$\beta$-SORT with a value of $\beta$ different from~$1$ has
a better competitive ratio than that obtained with $\beta=1$ in
our setting; adapting our analysis to values of $\beta$ different from~$1$,
we found that the resulting bound on the competitive ratio became
larger.

\subparagraph{Intuitive overview of analysis.}
\begin{figure}
\centering
\begin{minipage}{0.6\textwidth}
\input{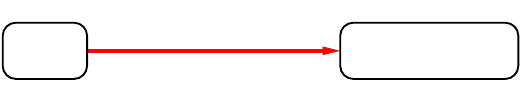_t}\\[1ex]
\small
Job set: $\jobset=\{1,2\}$ with jobs $(0,M)$ and $(M-\epsilon,M+\epsilon)$\\
$1$-SORT schedule: $\tau_1,\tau_2,\pi_1,\pi_2$ with $\alg = 5M-\epsilon$\\
optimal schedule: $\tau_1,\pi_1,\tau_2,\pi_2$ with $\opt=  4M$
\end{minipage}
\caption{Instance with two jobs where the delay ratio on the arc $(1,2)$
is arbitrarily close to~$2$. A job with test time $t_j$ and processing time $p_j$ is written
as a pair $(t_j,p_j)$.}
\label{fig:intuition}
\end{figure}%
We consider an oriented complete graph $G=(V,A)$ where $V=\jobset$ and
each edge is directed towards the job with larger size.
We write $jk$ for the
arc (directed edge) from $j$ to~$k$. By~(\ref{eq:obj_alg}),
we can view the sum of completion times of a schedule as if it was produced
by a contribution $\sigma_j$ of each vertex $j\in V$ and a contribution
$D(j,k)$ of each arc $jk\in A$. The contributions of the vertices are the same
in the algorithm's schedule and in the optimal schedule. If
the \emph{delay ratio} $D(j,k)/D^*(j,k)$ is bounded by $\rho$
for every arc~$jk$, it follows that $\alg/\opt\le \rho$.
Unfortunately, the delay ratio of an individual arc can be arbitrarily
close to~$2$. Consider for example an instance with two jobs
with $t_1=0$, $p_1=M$ and $t_2=M-\epsilon$, $p_2=M+\epsilon$
for a large constant $M$ and
an infinitesimally small~$\epsilon>0$ (see Fig.~\ref{fig:intuition}).
Algorithm $1$-SORT schedules the operations in the order
$\tau_1,\tau_2,\pi_1,\pi_2$ giving $D(1,2)=t_1+p_1+t_2=2M-\epsilon$,
while the optimal schedule is $\tau_1, \pi_1,\tau_2,\pi_2$
with $D^*(1,2)=t_1+p_1=M$. Hence, the delay ratio
on the arc $(1,2)$ is arbitrarily close to~$2$. Nevertheless,
the ratio $\alg/\opt$ on this example does not exceed~$5/4$,
as the term $\sigma_1+\sigma_2=3M$ that makes the same contribution
to $\alg$ and $\opt$ is relatively large compared to the delays
on the arc $(1,2)$. We refer to arcs with large delay ratios
(to be defined precisely later on) as \emph{red} arcs. The
example suggests the idea of analyzing red arcs
together with other terms contributing
to the objective function in order to show a competitive ratio smaller than~$2$.

\begin{figure}
\centering
\begin{minipage}{0.8\textwidth}
\input{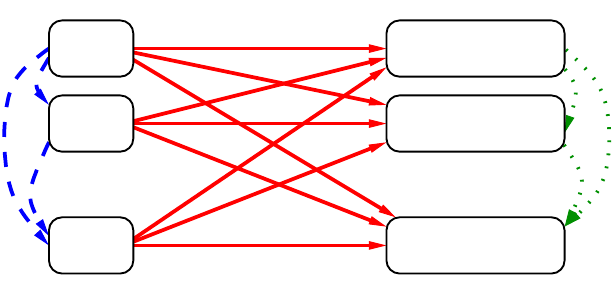_t}\\[1ex]
\small
Job set: $\jobset$ contains jobs $1,2,\ldots,k$ of type $(0,M)$\\
\phantom{Job set:}
and jobs $k+1,k+2,\ldots,2k$ of type $(M-\epsilon,M+\epsilon)$\\
$1$-SORT schedule: $\tau_1,\ldots,\tau_{2k},\pi_1,\ldots,\pi_{2k}$ with $\alg \approx 4k^2M + kM$\\
optimal schedule: $\tau_1,\pi_1,\tau_2,\pi_2,\ldots,\tau_{2k},\pi_{2k}$, with $\opt\approx 2.5k^2M+1.5kM$
\end{minipage}
\caption{Instance with $2k$ jobs to illustrate red (drawn solid), blue (drawn
dashed) and green (drawn dotted) arcs.
 A job with test time $t_j$ and processing time $p_j$ is written
as a pair $(t_j,p_j)$.}
\label{fig:intuition2}
\end{figure}%

In the example of Fig.~\ref{fig:intuition} it was enough to consider
the red arc $(1,2)$ together with the contributions to the objective value made
by vertices $1$ and~$2$, but this kind of argument cannot suffice in general because
the number of arcs is quadratic in the number of vertices. Consider
the example of a job set with $n=2k$ jobs that contains $k$ copies of
each of the jobs from the previous example (see Fig.~\ref{fig:intuition2}). 
We call the $k$ jobs with $t_j=0$, $p_j=M$ \emph{left} jobs and the $k$ jobs
with $t_j=M-\epsilon$, $p_j=M+\epsilon$ \emph{right} jobs in the
following.
There are now $k^2$ arcs between left and right jobs, each with a delay ratio arbitrarily
close to~$2$. The contribution $k\cdot M+k\cdot 2M$ that the $2k$ vertices make
to the objective function is no longer sufficient to show a bound smaller
than $2$ for the competitive ratio, as it is negligible (for large $k$) compared to
the total delay on all the $k^2$ arcs between left and right jobs,
which is $k^2(2M-\epsilon)$ for $1$-SORT and $k^2M$ in the optimal schedule.
What we can exploit here instead is that the $k(k-1)/2$ arcs between left jobs
have the same delay $M$ in the schedule produced by $1$-SORT and in the
optimal schedule (delay ratio~$1$), and that the $k(k-1)/2$ arcs between right jobs have
delay $2M$ in the optimal schedule and delay $3M-\epsilon$ in the schedule
produced by $1$-SORT (delay ratio~$\approx 1.5$). We refer to the
arcs between left jobs as \emph{blue} arcs and to the arcs between right
jobs as \emph{green} arcs. The total delay on all the blue, red and green
arcs in this example is approximately $\frac{k^2}{2}M+k^2\cdot 2M+\frac{k^2}{2}\cdot 3M
= 4k^2M$ for $1$-SORT and approximately $\frac{k^2}{2}M+ k^2 M + \frac{k^2}{2}\cdot 2M =
2.5k^2M$ for the optimal schedule, where we have set $\epsilon=0$ and omitted terms linear in~$k$.
Thus, analyzing the red arcs together with the green and blue arcs
is sufficient to show that $\alg/\opt \le  4/2.5 = 1.6$ in this example.

\begin{figure}
\scalebox{0.8}{\input{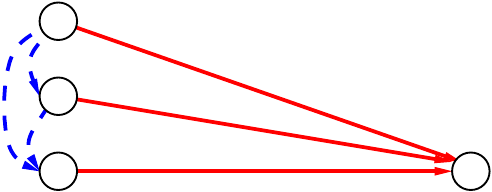_t}}\hfill
\scalebox{0.8}{\input{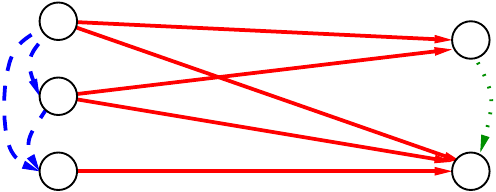_t}}
\caption{Illustration of the idea underlying the analysis of a red vertex~$j$:
If the blue arcs have not yet been used in the
analysis of a previous red vertex, they can be used in
combination with the incoming red arcs of~$j$ (left). If blue
arcs have already been used in the analysis of a previous
red vertex $j'$, there must be a green arc between $j'$ and $j$
that is also available to be used in the analysis of the incoming
red arcs of~$j$.}
\label{fig:intuition3}
\end{figure}%
To turn these observations into a rigorous analysis, we will proceed as follows:
We give a formal definition of red arcs and refer to the vertices with incoming
red arcs as red vertices.
We then consider the red vertices in order of increasing~$t_j$. For
a red vertex~$j$, we would like to analyze the
delay of the incoming red arcs together with the blue arcs between their
tail vertices. If those blue arcs have not been used in the analysis of
previously considered red vertices, that suffices. If some of those blue arcs
have already been used in the analysis of previously considered red
vertices, however, then we can additionally use the green arcs that those previously
considered red vertices have to~$j$ in order to make up for the unavailability
of blue arcs. The crux of the analysis is a carefully specified invariant
that ensures that there are always sufficiently many green arcs available for
the analysis of a red vertex to make up for blue arcs that have been used in
the analysis of previously considered red vertices. See Fig.~\ref{fig:intuition3}
for an illustration of this idea. Overall, the outcome
is that the incoming red arcs of each red vertex can be analyzed together
with a sufficient number of blue and green arcs (which are not used
for the analysis of any other red vertex) to get a ratio smaller
than~$2$. One slight complication is that an arc may play the role
of a green arc for one red vertex and the role of a blue arc for another
red vertex, but we can handle this by treating that arc as a combination
of a distinct \emph{special} blue arc and a distinct \emph{special} green arc.

\begin{table}
    \caption{Overview of some of the notation used in the analysis of $1$-SORT}
    \label{tab:notation}
    \centering
    \begin{tabular}{|l|p{0.8\textwidth}|}
    \hline
    $\mu$ & parameter that determines which jobs are imbalanced\\
    $\nu$ & parameter that determines when an operation is `not much smaller' than another\\
    $G=(V,A)$ & auxiliary graph with the $n$ jobs as vertices \\
    $R$ & set of red arcs in $G$\\
    $V_R$ & vertices with incoming red arcs, ordered by $t_j$\\
    $V_I$ & vertices representing imbalanced jobs, ordered by $m_j$\\
	$I_k$ & set of incoming red arcs of vertex $k$\\
    $N^-(k)$ & vertices with outgoing red arc to vertex $k$\\
	$o(k)$ & cardinality of $N^-(k)$\\
	$N^-_{\ge j}(k)$ & subset of $N^-(k)$ consisting of $j$ and vertices after $j$ (in the order of $V_I$)\\
	$o_{\ge j}(k)$ & cardinality of $N^-_{\ge j}(k)$\\
    $P(k)$ & vertices $r$ that come before $k$ in $V_R$ and have a red arc from a vertex in $N^-(k)$ \\
    $G_k$ & subgraph of $G=(V,A)$ induced by $\{k\}\cup N^-(k)\cup P(k)$\\
	$\Gamma_k$ & green arcs between $k$ and vertices in $P(k)$\\
	$\Gamma_k^S$ & special green arcs between $k$ and vertices in $P(k)$\\
	$B_k$ & subset of still unused blue arcs between vertices in $N^-(k)$\\
    $C_k$ & set with $C_k\supseteq I_k$ that has total delay ratio smaller than $2$\\
    \hline
    \end{tabular}
\end{table}
\subparagraph{Formal analysis.} Having given an intuitive overview of the
ideas underlying our analysis of $1$-SORT, we now proceed to present the
technical details. We use two parameters $\mu>1$ and $\nu$ with $0<\nu<1$, satisfying
$\mu > 1/\nu$ and $1+\frac{1}{\mu}\le \nu+\nu^2$.
Intuitively,
the parameter $\mu$ determines which jobs we view as imbalanced (having
a large factor between test time and processing time),
and the parameter $\nu$ determines when we view a test time or processing
time to be `not much smaller' than another value (namely, when it
is at least $\nu$ times the other value). Table~\ref{tab:notation} gives
an overview of notation used in the proof.

\begin{theorem}
\label{th:betaSort}%
	The $\beta$-SORT algorithm with $\beta = 1$ has competitive ratio at most $\rho$
	with
	\begin{align*}
\rho =  &\max\Biggl\{
\frac{\nu+\nu^2+2+\frac{2}{\mu}}{\nu+\nu^2+1+\frac{1}{\mu}}, 
1+\frac{1}{2+\nu},
\frac{\frac{4}{\nu}+\frac{4}{\mu\nu}+\nu+\nu^2+1}{\frac{2}{\nu}+\frac{2}{\mu\nu}+\nu+\nu^2},\\
&\frac{\frac{4}{\nu}+\frac{4}{\nu\mu}+\nu+\frac{1}{\mu+1}}{\frac{2}{\nu}+\frac{2}{\mu\nu}+\nu},
\frac{4+\frac{5}{\mu}+\nu}{2+\frac{2}{\mu}+\nu},\\
&1+\frac{1}{\nu+\nu^2},
1+\frac{1}{\nu(\mu+1)},
\frac{2\mu+1}{\mu+1}, 1+\nu \Biggr\}
\end{align*}
\end{theorem}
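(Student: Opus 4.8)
The plan is to build on the decomposition of the objective in~(\ref{eq:obj_alg}) and~(\ref{eq:obj_opt}). Writing $\alg=\sum_{j\in V}\sigma_j+\sum_{jk\in A}D(j,k)$ and $\opt=\sum_{j\in V}\sigma_j+\sum_{jk\in A}D^*(j,k)$ with $D^*(j,k)=\min\{\sigma_j,\sigma_k\}$, and observing that the vertex terms $\sum_j\sigma_j$ are identical in both, it suffices to organise the arcs of the complete oriented graph $G=(V,A)$ — after splitting some arcs into a \emph{special blue} part and a \emph{special green} part, each carrying a share of that arc's delay in both schedules — into pairwise disjoint groups such that every group has \emph{aggregate delay ratio} (the ratio of the sum of its $D$-values to the sum of its $D^*$-values) at most~$\rho$. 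First I would make precise the notions from the overview: a job $j$ is \emph{imbalanced} when $m_j>\mu\min\{t_j,p_j\}$, and an arc $jk$ with $\sigma_j\le\sigma_k$ is declared \emph{red} exactly in the situation where $1$-SORT's priority rule is forced to execute $\tau_j$, $\tau_k$ and $\pi_j$ all before $\pi_k$, making $D(j,k)$ close to $2D^*(j,k)$; a short structural analysis of the priority rule shows this requires $j$ imbalanced with $t_j$ small and $t_k$ `not much smaller' than $p_j$, i.e.\ $t_k\ge\nu\sigma_j$ up to the slack governed by~$\nu$. A vertex is \emph{red} if it has an incoming red arc.

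Next I would bound each non-red arc on its own. Such an arc fails at least one of the red conditions, so a case analysis on the order in which $1$-SORT executes the operations of its two endpoints — using $\mu$ to control the imbalance and $\nu$ to control the `not much smaller' slack — shows that its delay ratio is at most one of the simpler terms of the statement: $1+\nu$, $1+\tfrac1{2+\nu}$, $1+\tfrac1{\nu+\nu^2}$, $1+\tfrac1{\nu(\mu+1)}$, or $\tfrac{2\mu+1}{\mu+1}$. The hypotheses $\mu>1/\nu$ and $1+\tfrac1\mu\le\nu+\nu^2$ are exactly what makes these case bounds mutually consistent and keeps them below~$2$.

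The core of the argument is the treatment of the red arcs. I would order the red vertices $V_R$ by increasing $t_j$ and process them in that order, assigning to each red vertex $k$ a group $C_k\supseteq I_k$, where $I_k$ is its set of incoming red arcs; to $C_k$ I also add a set $B_k$ of still-unused \emph{blue} arcs among the tails $N^-(k)$ of those red arcs, together with the \emph{green} arcs $\Gamma_k$ joining $k$ to the vertices of $P(k)$ — the already-processed red vertices that share a red-arc tail with~$k$. The structural point is that two red vertices sharing a red-arc tail $v$ are both `right-type' relative to $v$, so the arc between them is green and has delay ratio safely below~$2$; and whenever a blue arc inside $N^-(k)$ has already been consumed by an earlier group, the green arc to the corresponding earlier red vertex remains available to substitute for it, an arc that must serve as blue in one group and green in another being split into its special blue and special green parts to form $\Gamma_k^S$. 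To guarantee that this substitution is always possible I would carry a quantitative invariant bounding, for every not-yet-processed red vertex $k$ and every tail $j\in N^-(k)$, how many blue arcs within $N^-_{\ge j}(k)$ may already have been used, so that $|B_k|$ together with the number of usable green arcs always suffices to pull the aggregate delay ratio of $C_k$ below~$\rho$; the more intricate fractions in the statement — the first, third, fourth and fifth terms — are precisely the worst-case aggregate ratios of $C_k$ in the various sub-cases, depending on whether the compensating `credit' comes mainly from blue arcs, mainly from green arcs, or a mixture, and on which of the jobs involved are imbalanced. I would then verify that the invariant is re-established after $k$ is processed, that the groups $C_k$ are pairwise disjoint, and that every red arc and every blue or green arc (or special part) used lies in exactly one group, so that all remaining arcs are non-red and already bounded; summing the per-group and per-arc bounds and adding the common vertex term yields $\alg/\opt\le\rho$.

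The hard part, as the overview foreshadows, is the design and maintenance of this invariant: it must be strong enough that a red vertex with many incoming red arcs can always gather enough blue-and-green compensation, yet weak enough to be restorable after each vertex is handled, and it must mesh correctly with the blue/green double-use of arcs mediated by the special-arc splitting. Checking, sub-case by sub-case, that the aggregate ratio of each $C_k$ is captured by one of the listed terms is where nearly all the effort lies; the individual bounds for non-red arcs, while requiring a patient case analysis, are comparatively routine.
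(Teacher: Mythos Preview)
Your plan follows the paper's approach closely --- red arcs, blue arcs among their tails, green arcs to earlier red vertices, special-arc splitting, and a quantitative invariant governing blue-arc consumption --- but two points in your accounting are off and would trip you up when you try to make the numbers come out.

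First, the bound for a generic non-red arc is only $\max\{\frac{2\mu+1}{\mu+1},\,1+\nu\}$; the terms $1+\frac{1}{2+\nu}$, $1+\frac{1}{\nu+\nu^2}$ and $1+\frac{1}{\nu(\mu+1)}$ do \emph{not} arise there. Rather, $\rho^G=1+\frac{1}{\nu+\nu^2}$ is the individual bound for a green arc, $\rho^S=1+\frac{1}{\nu(\mu+1)}$ the bound for a special blue or special green arc, and $1+\frac{1}{2+\nu}$ is the ratio obtained when a single red arc is grouped with the \emph{vertex} $k$ itself. This last point is the second gap: your groups $C_k$ consist only of arcs, but in the paper the vertex $k$ (carrying $\sigma_k$ with ratio~$1$) is an essential member of $C_k$. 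Without it you cannot get the bound $1+\frac{1}{2+\nu}$ for the first red arc, nor the first listed term $\frac{\nu+\nu^2+2+\frac{2}{\mu}}{\nu+\nu^2+1+\frac{1}{\mu}}$, which is exactly the ratio of $C_k=\{k,jk\}$ in the degenerate case $|N^-(k)|=1$. Finally, your tentative red-arc condition $t_k\ge\nu\sigma_j$ should be sharpened to the paper's $m_j\ge t_k\ge\nu m_j$ together with $p_k\ge\nu t_k$; the latter condition on $p_k$ is what makes the green-arc bound $\rho^G$ work and what forces an arc serving double duty to have $p_i\ge\mu t_i$ on both endpoints, yielding the special-arc bound~$\rho^S$.
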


We will prove Theorem~\ref{th:betaSort} in the remainder of this section.
Choosing $\mu$ and $\nu$ so as to minimize the ratio of Theorem~\ref{th:betaSort}
(computation done using Mathematica) yields the following corollary.

\begin{corollary}
\label{cor:betaSort}
The ratio $\rho$ of Theorem~\ref{th:betaSort} is minimized for
$\mu=\mu_0\approx 6.16277$ and $\nu=\nu_0\approx 0.860389$, yielding that $\beta$-SORT
with $\beta=1$ has competitive ratio at most $1.86039$.
Here $\mu_0$ is the only real root of the polynomial
$-2-8\mu-13\mu^2-11\mu^3-4\mu^4+\mu^5$ and $\nu_0=\frac{\mu_0}{\mu_0+1}$.
The ratio is $\rho=\frac{1+2\mu_0}{1+\mu_0}$.
\end{corollary}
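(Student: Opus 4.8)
Write $E_1,\dots,E_9$ for the nine expressions appearing (in order) inside the braces of Theorem~\ref{th:betaSort}, so $\rho=\max_i E_i$; in particular $E_3=\frac{\frac4\nu+\frac4{\mu\nu}+\nu+\nu^2+1}{\frac2\nu+\frac2{\mu\nu}+\nu+\nu^2}$, $E_8=\frac{2\mu+1}{\mu+1}=2-\frac1{\mu+1}$, and $E_9=1+\nu$. Let $\mathcal F=\{(\mu,\nu):\mu>1,\ 0<\nu<1,\ \mu\nu>1,\ 1+\tfrac1\mu\le\nu+\nu^2\}$ be the feasible parameter region. Since $\rho\ge F:=\max\{E_3,E_8,E_9\}$ on all of $\mathcal F$, the plan is: (i) minimize $F$ over $\mathcal F$, obtaining a minimizer $(\mu_0,\nu_0)$ and value $\rho_0$; (ii) check at $(\mu_0,\nu_0)$ that $E_i\le\rho_0$ for $i\in\{1,2,4,5,6,7\}$, so that $\rho(\mu_0,\nu_0)=\rho_0$. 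Then $\min_{\mathcal F}\rho\le\rho_0=\min_{\mathcal F}F\le\min_{\mathcal F}\rho$, hence $\rho$ is minimized at $(\mu_0,\nu_0)$ with value $\rho_0$, and Theorem~\ref{th:betaSort} gives the competitive ratio.

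For step~(i) I would first record monotonicities: $E_8$ is strictly increasing in $\mu$ and constant in $\nu$; $E_9$ is strictly increasing in $\nu$ and constant in $\mu$; and, since the feasibility constraint forces $\nu+\nu^2>1$ throughout $\mathcal F$, writing $E_3=\frac{2q+s+1}{q+s}$ with $q=\frac2\nu(1+\frac1\mu)$ and $s=\nu+\nu^2$, one has $\partial E_3/\partial q=\frac{s-1}{(q+s)^2}>0$ and $\partial E_3/\partial s=\frac{-q-1}{(q+s)^2}<0$, from which a short computation shows $E_3$ is strictly decreasing in both $\mu$ and $\nu$ on $\mathcal F$. Because at the candidate optimum the constraints $\mu\nu>1$ and $1+\tfrac1\mu\le\nu+\nu^2$ turn out to be slack, a point where one of $E_3,E_8,E_9$ is strictly below the maximum of the other two cannot minimize $F$ (perturb $\mu$ or $\nu$ to decrease the maximum while staying in $\mathcal F$); hence at the minimum $E_3=E_8=E_9$. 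From $E_8=E_9$ we get $1+\nu=2-\frac1{\mu+1}$, i.e.\ $\nu_0=\frac{\mu_0}{\mu_0+1}$. Substituting this into $E_3=1+\nu$ and clearing denominators reduces, after routine algebra, to $2(\mu+1)^4=\mu^5-2\mu^4-3\mu^3-\mu^2$, i.e.\ to $\mu^5-4\mu^4-11\mu^3-13\mu^2-8\mu-2=0$. By Descartes' rule of signs this quintic has exactly one positive real root, and since it is negative at $\mu=1$ and tends to $+\infty$, that root is $>1$; this is $\mu_0\approx 6.16277$, and $\nu_0=\frac{\mu_0}{\mu_0+1}\approx 0.860389$, giving $\rho_0=1+\nu_0=\frac{2\mu_0+1}{\mu_0+1}\approx 1.86039$.

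Step~(ii) and the remaining routine verifications: one checks $(\mu_0,\nu_0)\in\mathcal F$ ($\mu_0>1$ and $0<\nu_0<1$ are immediate; $\mu_0\nu_0=\frac{\mu_0^2}{\mu_0+1}>1$ follows from $\mu_0>\phi$; and $1+\frac1{\mu_0}\le\nu_0+\nu_0^2$ reduces to $\mu_0^3-2\mu_0^2-3\mu_0-1\ge 0$, true for $\mu_0\approx 6.16$), that $E_1,E_2,E_4,E_5,E_6,E_7$ evaluated at $(\mu_0,\nu_0)$ are all at most $\rho_0$ (numerically about $1.42,1.35,1.80,1.78,1.62,1.16$), and that the quintic has no negative real root (e.g.\ via Sturm's theorem), so $\mu_0$ is its only real root. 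These finitely many checks, together with the root isolation, are exactly what is done by computer algebra, matching the paper's remark that the optimization was carried out in Mathematica.

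\textbf{Main obstacle.} The delicate point is making step~(i) fully rigorous: one must confirm that the minimizer of $F$ over $\mathcal F$ genuinely lies in the interior with the inequality constraints inactive, so that the monotonicity/perturbation argument that forces $E_3=E_8=E_9$ is valid; and one must not assume, but verify, that no $E_i$ with $i\notin\{3,8,9\}$ becomes the binding term at the optimum — which is precisely why the inequalities of step~(ii) are checked at the candidate point rather than taken for granted.
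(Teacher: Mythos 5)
Your proposal is correct, and it is a genuinely different route from the paper's: the paper proves the corollary purely by delegating the two-variable minimization of the nine-term maximum to Mathematica (the notebooks in the appendix), whereas you reconstruct the answer analytically by identifying the binding terms $E_3,E_8,E_9$, using $E_8=E_9$ to get $\nu_0=\mu_0/(\mu_0+1)$, and reducing $E_3=1+\nu$ to the quintic $\mu^5-4\mu^4-11\mu^3-13\mu^2-8\mu-2=0$, which indeed matches the paper's polynomial; your numerical checks of the remaining terms at $(\mu_0,\nu_0)$ (all strictly below $1.86039$) and of feasibility are also correct. What your approach buys is a human-verifiable certificate of optimality rather than trust in a computer-algebra call; what the paper's approach buys is brevity and automatic coverage of all nine terms without having to guess which ones bind. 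One improvement to your own writeup: the ``main obstacle'' you flag (making the equalization argument $E_3=E_8=E_9$ rigorous, interior vs.\ boundary of $\mathcal F$) dissolves entirely given your monotonicity facts, because the lower bound $F\ge\rho_0$ on all of $\mathcal F$ follows from a direct trichotomy with no perturbation or KKT-style reasoning: if $\mu\ge\mu_0$ then $E_8\ge E_8(\mu_0)=\rho_0$; else if $\nu\ge\nu_0$ then $E_9=1+\nu\ge\rho_0$; and otherwise $\mu<\mu_0$ and $\nu<\nu_0$ force $E_3(\mu,\nu)>E_3(\mu_0,\nu_0)=\rho_0$ since $E_3$ is strictly decreasing in both variables on $\mathcal F$ (where $\nu+\nu^2>1$). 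Combined with your step (ii) evaluation showing $\rho(\mu_0,\nu_0)=\rho_0$, this closes the sandwich argument completely; the only remaining computer-assisted ingredients are root isolation for the quintic and the ``only real root'' claim (e.g.\ via Sturm), which the paper also leaves to Mathematica.
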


Recall that $D(j,k)=d(j,k)+d(k,j)$ is the sum of the delays caused by jobs $j$ and $k$
on each other in the schedule produced by $1$-SORT, and
$D^*(j,k)=d^*(j,k)+d^*(k,j)$ is the sum of the delays caused by jobs $j$ and $k$
on each other in the optimal schedule.
As discussed in Section~\ref{sec:prelim}, the optimal schedule executes
the jobs in SPT order (with respect to their size), giving the objective
value stated in~Equation~(\ref{eq:obj_opt}).

Using infinitesimal perturbations of the test times and processing times
of the jobs that do not affect the schedule produced by $1$-SORT nor the
optimal schedule, we can assume without loss of generality that no two values
in the set of the test times, processing times, and sizes of all jobs are equal.
Therefore, when we compare any two such values, we can always assume that
strict inequality holds.

For the purpose of the analysis, we
create an auxiliary graph $G = (\nodes, \arcs)$, with $|\nodes| = n$ and 
$|\arcs| = \binom{n}{2} = \frac{n (n-1)}{2}$. Each vertex represents a job
(both the testing and processing operation), and there is a single arc between
any two vertices. The arc between vertices
$j$ and $k$ is directed towards
$k$ if $\total[j] < \total[k]$ and towards $j$ otherwise.
Recall that we write $jk$ for an arc directed
from $j$ to~$k$.
In addition, we associate with each arc $jk$
the values $D(j,k)$ and $D^*(j,k)$
that represent the pairwise delay between jobs $j$ and $k$
in the schedule produced by $1$-SORT and in the optimal schedule,
respectively, and the \emph{delay ratio} $\rho_{jk}=D(j,k)/D^*(j,k)$.

By~(\ref{eq:obj_alg}) and~(\ref{eq:obj_opt}), we have
$$
\alg = \sum_{j\in V} \total + \sum_{jk\in A} \pdelay[j,k]{}
$$
and
$$
\opt = \sum_{j\in V} \total + \sum_{jk\in A} \pdelay[j,k]{*}\,.
$$
Note that the first sum is the same in both expressions and therefore contributes to $\alg$ and $\opt$ in the same way, while the second sum, which 
represents the pairwise delays among all jobs, differs. 
As discussed earlier,
the difficulty when aiming to show competitive
ratio smaller than~$2$ is that
there may exist arcs $jk$
for which $D(j,k)$ can be arbitrarily close to $2\cdot D^*(j,k)$.
Hence, we cannot hope to prove a bound better than
$\rho_{jk}\le 2$ for all arcs $jk$,
and such a bound would only yield $\alg/\opt \le 2$.
As each job~$j$ contributes $\sigma_j$ to both $\alg$ and
$\opt$, we say that the \emph{delay ratio} of job~$j$,
denoted by $\rho_j$, is equal to~$1$.
In order to prove a competitive ratio better than~$2$, we
need to show that arcs $jk$ with delay ratio close to
$2$ can be analysed together with
arcs for which the delay ratio is much smaller than
$2$ and/or together
with vertices, for which we know that the delay ratio is~$1$.
This then yields that the ratios $\rho_{jk}$ and $\rho_j$
are bounded by a constant smaller than $2$ \emph{on average}.

It turns out that the ratio of an arc $jk$ can
be close to $2$ only if job $j$ is \emph{imbalanced}
and the test time of job $k$ is smaller but not much smaller
than $m_j=\max\{t_j,p_j\}$, and $p_k$
is not much smaller than $t_k$.

\begin{definition}
\label{def:imbalanced}%
A job $j$ is called \emph{imbalanced}
if $m_j=\max\{t_j,p_j\}\ge \mu \cdot \min\{t_j,p_j\}$,
for a fixed constant $\mu>1$.
\end{definition}

Arcs that may have a delay ratio close to $2$ are captured by
the following definition.
\begin{definition}
\label{def:red}%
An arc $jk$ is called a \emph{red} arc
if all of the following conditions hold:
\begin{itemize}
\item $j$ is imbalanced, and
\item $m_j\ge t_k \ge \nu \cdot m_j$, and
\item $p_k\ge \nu \cdot t_k$.
\end{itemize}
Here, $\nu$ is a constant with $0<\nu<1$ that satisfies
$\mu>\frac{1}{\nu}$ and $1+\frac{1}{\mu}\le \nu+\nu^2$.
\end{definition}

Note that only imbalanced jobs can have outgoing red arcs.
Let \red{} be the subset of arcs that are red.

\begin{lemma}
\label{lem:rededgejsmaller}%
If $jk$ is a red arc, then $\sigma_j\le \sigma_k$.
\end{lemma}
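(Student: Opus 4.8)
The plan is to derive the inequality from two simple size estimates: an upper bound on $\sigma_j$ coming from the fact that $j$ is imbalanced, and a lower bound on $\sigma_k$ coming from the two size conditions in the definition of a red arc. The bridge between them will be the parameter constraint $1+\frac{1}{\mu}\le\nu+\nu^2$ that is assumed throughout the analysis.

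First I would set $m_j=\max\{t_j,p_j\}$ and observe that, since $jk$ is red, $j$ is imbalanced, so $\min\{t_j,p_j\}\le m_j/\mu$ and hence $\sigma_j=\max\{t_j,p_j\}+\min\{t_j,p_j\}\le\lpar 1+\frac{1}{\mu}\rpar m_j$. Next I would use the red-arc conditions $t_k\ge\nu m_j$ and $p_k\ge\nu t_k$ to obtain $p_k\ge\nu t_k\ge\nu^2 m_j$, and therefore $\sigma_k=t_k+p_k\ge(\nu+\nu^2)m_j$. Finally, chaining these estimates with the assumed inequality $1+\frac{1}{\mu}\le\nu+\nu^2$ gives $\sigma_j\le\lpar 1+\frac{1}{\mu}\rpar m_j\le(\nu+\nu^2)m_j\le\sigma_k$, which is the claim.

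I do not expect any real obstacle here: the lemma is essentially a consistency check, confirming that the asymmetric conditions defining a red arc force $j$ to be the job of smaller size, so that it is legitimate to regard $jk$ as an arc of $G$ (which orients the edge between $j$ and $k$ towards the job of larger size). The only care needed is to keep the directions of the two estimates straight (an upper bound on $\sigma_j$, a lower bound on $\sigma_k$) and to notice that the parameter constraint imposed in Definition~\ref{def:red} is exactly what is required to make the two bounds meet; under the earlier assumption that all test times, processing times, and sizes are distinct, the resulting inequality is in fact strict.
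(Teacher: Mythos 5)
Your proof is correct and follows essentially the same route as the paper: bound $\sigma_j\le(1+\frac{1}{\mu})m_j$ via imbalancedness, bound $\sigma_k\ge(\nu+\nu^2)m_j$ via the red-arc conditions $t_k\ge\nu m_j$ and $p_k\ge\nu t_k$, and chain them with the parameter constraint $1+\frac{1}{\mu}\le\nu+\nu^2$. No issues.
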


\begin{proof}
As $j$ is imbalanced, we have $\sigma_j=m_j+\min\{t_j,p_j\}
\le (1+\frac{1}{\mu})m_j$.
Furthermore, $t_k\ge \nu \cdot m_j$ and $p_k \ge \nu \cdot t_j$
imply $\sigma_k=t_k+p_k\ge (\nu+\nu^2)m_j$.
The claim therefore follows from
$1+\frac{1}{\mu}\le \nu+\nu^2$.
\end{proof}

\begin{figure}[htbp]
	\centering
	\includegraphics[width=\textwidth]{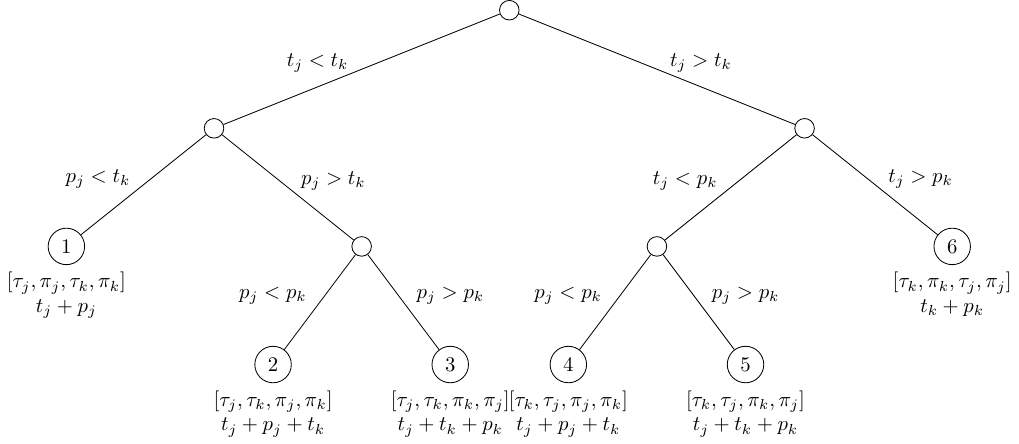}
\caption{Tree of possibilities for $D(j,k)$ and $D^*(j,k)$ and
$\rho_{jk}$ -- the top label at each leaf is the order of execution by
$1$-SORT,
the bottom label is $D(j,k)$. If we assume that $t_j+p_j\le t_k+p_k$,
then Leaf $5$ is impossible to reach and
$D^*(j,k)=t_j+p_j$ for all other leaves.}
\label{fig:treeposs}
\end{figure}

\begin{lemma}
\label{lem:notredratio}%
If an arc $jk$ is not red (i.e., $jk \notin \red$), then $\rho_{jk}$ is at most:
$$
\max\left\{
   \frac{2\mu+1}{\mu+1},
   1+\nu
\right\}
$$
\end{lemma}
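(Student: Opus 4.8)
Since the arc $jk$ is directed from $j$ to $k$, the construction of $G$ gives $\sigma_j<\sigma_k$, so $D^*(j,k)=\min\{\sigma_j,\sigma_k\}=\sigma_j$; hence it suffices to bound $D(j,k)$ and show $D(j,k)\le \max\{(2\mu+1)/(\mu+1),\,1+\nu\}\cdot\sigma_j$. The plan is to obtain $D(j,k)$ by going through the leaves of the tree of possibilities in Figure~\ref{fig:treeposs}, each of which fixes the execution order of $\tau_j,\pi_j,\tau_k,\pi_k$ in the $1$-SORT schedule together with the value of $D(j,k)$. Two leaves are immediate: the leaf with order $\tau_j,\pi_j,\tau_k,\pi_k$ gives $D(j,k)=\sigma_j$, hence $\rho_{jk}=1$; and the leaf with order $\tau_k,\tau_j,\pi_k,\pi_j$ requires $t_k<t_j$ and $p_k<p_j$, so $\sigma_k<\sigma_j$, contradicting the arc direction — this is exactly the leaf that Figure~\ref{fig:treeposs} observes to be unreachable once $\sigma_j<\sigma_k$.

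For each of the four remaining leaves I would first read off, from the defining inequalities of the leaf together with $\sigma_j<\sigma_k$, the value of $D(j,k)$ and the position of $m_j$. In the leaf with order $\tau_j,\tau_k,\pi_j,\pi_k$ one has $t_j<t_k<p_j$, so $m_j=p_j$ and $D(j,k)=\sigma_j+t_k$; in the leaf with order $\tau_k,\tau_j,\pi_j,\pi_k$ one has $t_k<t_j$ and again $D(j,k)=\sigma_j+t_k$; in the leaf with order $\tau_j,\tau_k,\pi_k,\pi_j$ one has $t_j<t_k<p_j$ and $p_k<p_j$, so $m_j=p_j$ and $D(j,k)=t_j+t_k+p_k$; and in the leaf with order $\tau_k,\pi_k,\tau_j,\pi_j$ the inequalities $t_k<t_j$, $p_k<t_j$ together with $\sigma_j<\sigma_k$ force $p_j<t_j$, so $m_j=t_j$ and $D(j,k)=t_k+p_k$. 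In all four leaves one then checks the two facts $t_k<m_j$ and $D(j,k)<\sigma_j+m_j$ (the latter via $t_k,p_k<m_j$ and the identity $\sigma_j=t_j+p_j$).

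Next I would branch on whether $j$ is imbalanced. If it is not, then $\sigma_j=m_j+\min\{t_j,p_j\}>m_j(1+1/\mu)$, hence $m_j/\sigma_j<\mu/(\mu+1)$ and $\rho_{jk}<1+m_j/\sigma_j<(2\mu+1)/(\mu+1)$. If $j$ is imbalanced but $jk$ is not red, then, because $t_k<m_j$ already rules out the violation $t_k>m_j$, the failing condition of Definition~\ref{def:red} must be $t_k<\nu m_j$ or $p_k<\nu t_k$. In the case $t_k<\nu m_j$: in the two leaves with $D(j,k)=\sigma_j+t_k$ we get $D(j,k)<\sigma_j+\nu m_j\le(1+\nu)\sigma_j$; in the leaf with $D(j,k)=t_j+t_k+p_k$ we get $D(j,k)<t_j+\nu p_j+p_j=t_j+(1+\nu)p_j\le(1+\nu)(t_j+p_j)=(1+\nu)\sigma_j$; and in the leaf with $D(j,k)=t_k+p_k$ we get $D(j,k)<\nu t_j+t_j=(1+\nu)m_j\le(1+\nu)\sigma_j$. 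In the case $p_k<\nu t_k$: the two leaves with $D(j,k)=\sigma_j+t_k$ carry a defining inequality $p_k>p_j$ (order $\tau_j,\tau_k,\pi_j,\pi_k$) or $p_k>t_j$ (order $\tau_k,\tau_j,\pi_j,\pi_k$), which together with $t_k<p_j$ (resp. $t_k<t_j$) and $p_k<\nu t_k$ chains to $\nu>1$, a contradiction, so these leaves cannot arise; and in the other two leaves, substituting $p_k<\nu t_k$ and then $t_k<m_j$ again yields $D(j,k)<(1+\nu)\sigma_j$. Combining all cases gives $\rho_{jk}\le\max\{(2\mu+1)/(\mu+1),\,1+\nu\}$.

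The step I expect to be the main obstacle is the leaf with order $\tau_j,\tau_k,\pi_k,\pi_j$: there $D(j,k)=t_j+t_k+p_k$ can be arbitrarily close to $2p_j$, so the naive estimate only yields $\rho_{jk}<2$, and one must use a non-red condition to replace one of $t_k,p_k$ by $\nu$ times a bound and then exploit the \emph{identity} $\sigma_j=t_j+p_j$ (not merely $\sigma_j\ge p_j$) to land on exactly $1+\nu$. A related subtlety running through the whole argument is the leaf-by-leaf check of which of the three conditions of Definition~\ref{def:red} can actually fail: in every relevant leaf the violation $t_k>m_j$ is excluded by the leaf's own inequalities, and in the two leaves with $D(j,k)=\sigma_j+t_k$ the alternative violation $p_k<\nu t_k$ is excluded because it forces $\nu>1$, so those two leaves have to be disposed of by contradiction rather than by a delay estimate.
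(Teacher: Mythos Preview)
Your proof is correct and follows essentially the same leaf-by-leaf case analysis as the paper, branching on whether $j$ is imbalanced and, if so, which condition of Definition~\ref{def:red} fails; your treatment of the non-imbalanced case is slightly more streamlined via the uniform estimate $D(j,k)<\sigma_j+m_j$. One small imprecision: your parenthetical ``via $t_k,p_k<m_j$'' is only literally true in the leaves where $p_k$ enters $D(j,k)$ (orders $\tau_j,\tau_k,\pi_k,\pi_j$ and $\tau_k,\pi_k,\tau_j,\pi_j$); in the two leaves with $D(j,k)=\sigma_j+t_k$ you actually have $p_k>m_j$, but there the bound follows from $t_k<m_j$ alone, so the argument is unaffected.
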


\begin{proof}
For the following discussion, we refer to the tree of possibilities for $D(j,k)$ that
is shown in Fig.~\ref{fig:treeposs}.  
Leaf 5 cannot be reached if
$t_j+p_j<t_k+p_k$ (which holds as the arc $jk$ is directed from $j$ to~$k$)
as it is only reached if $t_j>t_k$ and $p_j>p_k$,
which would imply $t_j+p_j>t_k+p_k$. 
Leaf 1 gives a ratio of $1$, as the algorithm follows the optimal order.
Therefore, we only need to consider the leaves 2, 3, 4 and 6.
Note that the optimal pairwise delay is $\pdelay[j,k]{*} = \test + \proc$ for all cases.

First, assume that $j$ is not imbalanced. Recall that
$m_j=\max\{t_j,p_j\}$, and let $s_j=\min\{t_j,p_j\}$.
We have $s_j> m_j/\mu$ as $j$ is not imbalanced.
Thus, $t_j+p_j=m_j+s_j \ge m_j (1+1/\mu)$ and hence
\begin{equation}
m_j \le \frac{\mu}{\mu+1}(t_j+p_j).
\label{eq:xi}
\end{equation}
Now we consider the leaves 2, 3, 4 and~6.
\begin{itemize}
\item Leaf 2: As $t_j<t_k<p_j$ in this branch, we have
$D(j,k)=t_j+p_j+t_k\le t_j+2p_j$.
Furthermore, $p_j=m_j \le \frac{\mu}{\mu+1}(t_j+p_j)$ by~(\ref{eq:xi}).
Therefore, $D(j,k)\le (1+\frac{\mu}{\mu+1})(t_j+p_j)$ and
thus $\rho_{jk} \le \frac{2\mu+1}{\mu+1}$.
\item Leaf 3:
We again have $t_j<p_j$ and hence $p_j = m_j \le \frac{\mu}{\mu+1}(t_j+p_j)$ by~(\ref{eq:xi}).
Furthermore, we have $t_k<p_j$ and $p_k<p_j$, giving
$D(j,k)=t_j+t_k+p_k\le t_j+2p_j$, and we conclude $\rho_{jk} \le \frac{2\mu+1}{\mu+1}$
in the same way as for Leaf~2.
\item Leaf 4:
We have $D(j,k)=t_j+p_j+t_k\le 2t_j+p_j$ as $t_k<t_j$ in this branch.
If $t_j<p_j$, we have $t_j\le (t_j+p_j)/2$ and thus $2t_j+p_j=(t_j+p_j)+t_j \le 1.5(t_j+p_j)$
and $\rho_{jk}\le 1.5 \le \frac{2\mu+1}{\mu+1}$ (as $\mu > 1$).
If $t_j>p_j$, we have $t_j= m_j \le \frac{\mu}{\mu+1}(t_j+p_j)$ by~(\ref{eq:xi}).
Therefore, $D(j,k)\le (1+\frac{\mu}{\mu+1})(t_j+p_j)$ and
thus $\rho_{jk} \le \frac{2\mu+1}{\mu+1}$.
\item Leaf 6:
We have $D(j,k)=t_k+p_k\le 2t_j$ as $t_j>t_k$ and $t_j>p_k$ in this branch.
The case $t_j<p_j$ is impossible as we would have $t_j+p_j>t_k+p_k$.
Thus, we must have $t_j>p_j$ and hence $t_j= m_j \le \frac{\mu}{\mu+1}(t_j+p_j)$ by~(\ref{eq:xi}).
Therefore, $D(j,k)\le \frac{2\mu}{\mu+1}(t_j+p_j)$ and 
thus $\rho_{jk} \le \frac{2\mu}{\mu+1}\le \frac{2\mu+1}{\mu+1}$.
\end{itemize}

For the remainder of the proof, assume that $j$ is imbalanced. Recall that $m_j=\max\{t_j,p_j\}$.
Assume that $t_k$ does not satisfy the condition
$m_j\ge t_k \ge \nu \cdot m_j$.
If $t_k>m_j$, we have $\rho_{jk}=1$ as the situation corresponds to Leaf~1 of
the tree in Fig.~\ref{fig:treeposs}.
Thus, assume that $t_k<\nu \cdot m_j$.
Again, we only need to consider the leaves 2, 3, 4 and 6.
\begin{itemize}
\item Leaf 2 and 4: As $t_k<\nu \cdot m_j\le \nu(t_j+p_j)$,
we have $D(j,k)=t_j+p_j+t_k \le (1+\nu)(t_j+p_j)$ and hence $\rho_{jk}\le 1+\nu$.
\item Leaf 3: As $p_k<p_j$ in this branch,
we have $D(j,k)=t_j+t_k+p_k \le t_j+p_j+t_k$ and we get $\rho_{jk}\le 1+\nu$ in
the same way as for Leaf~2 and~4.
\item Leaf 6: As $p_k<t_j$ in this branch,
we have $D(j,k)=t_k+p_k\le t_j+t_k\le t_j+p_j+t_k$ and we get $\rho_{jk}\le 1+\nu$ in the
same way as for Leaf~2 and~4.
\end{itemize}

From now on, assume that $j$ is imbalanced and
$m_j\ge t_k \ge \nu \cdot m_j$ holds.
Assume that the condition $p_k \ge \nu \cdot t_k$ is violated,
so that we have $p_k<\nu \cdot t_k$.
Again, we only need to consider the leaves 2, 3, 4 and 6.
\begin{itemize}
\item Leaf 2: This branch is not possible: We have $p_k>p_j>t_k$ in this
branch, so $p_k<\nu \cdot t_k$ with $0 < \nu < 1$ cannot hold.
\item Leaf 3: As $t_k<p_j$ in this branch,
we have $D(j,k)=t_j+t_k+p_k<t_j+t_k+\nu \cdot t_k=t_j+(1+\nu)t_k<t_j+(1+\nu)p_j
\le (1+\nu)(t_j+p_j)$ and hence $\rho_{jk}\le 1+\nu$.
\item Leaf 4: This branch is not possible: We have $p_k>t_j>t_k$ in this
branch, so $p_k<\nu \cdot t_k$ with $0 < \nu < 1$ cannot hold.
\item Leaf 6: As $t_k<t_j$ in this branch, we
have $D(j,k)=t_k+p_k< (1+\nu)t_k < (1+\nu) t_j \le (1+\nu)(t_j+p_j)$
and hence $\rho_{jk}\le 1+\nu$.
\end{itemize}
Thus, we have shown that in all cases where one of the conditions of Definition~\ref{def:red}
is violated, we have $\rho_{jk}\le\max\{\frac{2\mu+1}{\mu+1},1+\nu\}$.
\end{proof}

By considering the tree of possibilities for $D(j,k)$
shown in Fig.~\ref{fig:treeposs}, it is also easy to see
that $D(j,k)\le 2 D^*(j,k)$ holds for all arcs (including red
arcs).

In the following, we will show that, for each job with incoming
red arcs, those arcs can be grouped together with a set of non-red
arcs and the size of the job in such a way that the total ratio of
the algorithm's delay over the optimal delay for the group is bounded by a constant
$\rho$ that is smaller than~$2$.

Let $V_I$ be the set of all imbalanced jobs,
ordered by non-decreasing~$m_j$.
Let $V_R$ be the set of jobs with at least one incoming red arc.
(If $V_R$ is empty, the competitive ratio of the algorithm is bounded
by the ratio of Lemma~\ref{lem:notredratio}.)
Consider the jobs in $V_R$ to be sorted in order of non-decreasing test times 
and write $i\prec j$ if the test time of $i$ comes before the test
time of $j$ in that order.
Consider a particular job $k\in V_R$ with test time~$t_k$.
Every incoming red arc $jk$ of $k$ must come from a job
$j$ that is imbalanced and satisfies $m_j\ge t_k \ge \nu \cdot m_j$
and, as $j$ is imbalanced, $\min\{t_j,p_j\}\le m_j/\mu$. 
Let $N^-(k)$ be the set of vertices that have an outgoing red
arc to $k$, i.e., $N^-(k)=\{j\mid jk \in \red\}$.
For a vertex $j\in N^-(k)$, let $N^-_{\ge j}(k)$ be
the subset of $N^-(k)$ that consists of $j$ and all
vertices of $N^-(k)$ that come after $j$ (in the order of~$V_I)$.
Furthermore, let $P(k)$ denote the set of jobs coming
before $k$ in $V_R$ (in the order $\prec$) that also have an incoming
red arc from at least one job in $N^-(k)$.

We process the jobs in $V_R$ in $\prec$-order.
To handle a job $k \in V_R$, we consider the subgraph $G_k$ of $G$
induced by $V_k=\{ k \}\cup N^-(k) \cup P(k)$. We call arcs
between two jobs in $N^-(k)$ \emph{blue} and
arcs between $k$ and any job in $P(k)$ \emph{green}.
The directions of blue and green arcs are irrelevant
and can be ignored. This means that when we refer to
a blue or green arc as~$jr$, that arc could be
directed from $j$ to $r$ or from $r$ to~$j$.
We denote by $C_k$ the set of elements
(vertices and arcs) of $G$ that are grouped with the
red incoming arcs of $k$ for the analysis. We will
always have that $k$ and its incoming red arcs are
in $C_k$, and we will add a suitable
number of blue and/or green arcs to~$C_k$.
Each blue and/or green arc will be added to at most
one such set~$C_k$, except in a special case where
an arc $e$ plays the role of a blue arc for one $k$
and the role of a green arc for another $k$;
in that case, we will split $e$ into a green arc
and a blue arc, and each part will be added to
at most one set~$C_k$.
We let $\rho_{C_k}$ denote the ratio of the sum of the delays
on all the arcs and vertices in $C_k$ in the solution
by the algorithm divided by the sum of the delays on the
same arcs and vertices in the optimal schedule.

First, we observe that, for any job $k$ that has incoming red arcs,
the set $N^-(k)$ is a contiguous subset of~$V_I$.

\begin{lemma}
\label{lem:interval}
For any job $k$ that has incoming red arcs,
the set $N^-(k)$ is a contiguous subset of~$V_I$.
\end{lemma}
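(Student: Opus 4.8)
The claim is that $N^-(k)$, the set of tails of red arcs entering $k$, forms a contiguous block in the ordering of $V_I$ by non-decreasing $m_j$. Every $j\in N^-(k)$ is imbalanced (so lies in $V_I$) and satisfies $m_j\ge t_k\ge \nu\cdot m_j$, i.e. $t_k\le m_j\le t_k/\nu$. The plan is to show that membership in $N^-(k)$ is governed \emph{entirely} by this two-sided bound on $m_j$ together with the fixed condition $p_k\ge\nu\cdot t_k$, which does not depend on $j$ at all.

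First I would observe that the third red-arc condition, $p_k\ge\nu\cdot t_k$, and the fact that the arc is directed from $j$ to $k$ (i.e. $\sigma_j<\sigma_k$, guaranteed by Lemma~\ref{lem:rededgejsmaller} once the other conditions hold) are conditions on $k$ and on the pair's sizes, not genuinely extra constraints that could exclude a job $j$ sitting between two jobs of $N^-(k)$ in the $m_j$-order. So the only thing that varies with $j$ is whether $j$ is imbalanced and whether $t_k\le m_j\le t_k/\nu$ holds. The key step is then: suppose $j_1, j_2\in N^-(k)$ with $m_{j_1}\le m_{j_2}$, and let $j$ be any job of $V_I$ with $m_{j_1}\le m_j\le m_{j_2}$; I must show $jk$ is a red arc. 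Since $j\in V_I$ it is imbalanced, so the first condition holds. For the second condition, $t_k\ge\nu\cdot m_{j_2}$ (from $j_2\in N^-(k)$) gives $t_k\ge\nu\cdot m_j$, and $m_{j_1}\ge t_k$ (from $j_1\in N^-(k)$) gives $m_j\ge t_k$; hence $m_j\ge t_k\ge\nu\cdot m_j$. The third condition $p_k\ge\nu\cdot t_k$ depends only on $k$ and already holds because $j_1k$ is red. Therefore all three conditions of Definition~\ref{def:red} hold for $jk$, so $j\in N^-(k)$, and $N^-(k)$ is contiguous in $V_I$.

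The main thing to be careful about is that $V_I$ is ordered by $m_j$ but ``contiguous'' must be interpreted with respect to that order, and that the w.l.o.g.\ distinctness assumption (no two of the relevant test/processing/size values coincide) lets me treat the $m_j$-values of distinct imbalanced jobs as distinct, so ``between $j_1$ and $j_2$ in the order'' is the same as ``$m_{j_1}<m_j<m_{j_2}$''. I expect no real obstacle here: the lemma is essentially just unpacking the definitions and noticing that the only $j$-dependent red-arc condition is the sandwich inequality $\nu\cdot m_j\le t_k\le m_j$, which is manifestly preserved under taking an intermediate value of $m_j$.
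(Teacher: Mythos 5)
Your proof is correct and follows essentially the same argument as the paper: sandwich an intermediate imbalanced job $j$ between the extreme members $j_1,j_2$ of $N^-(k)$ and observe that $m_j\ge m_{j_1}\ge t_k$ and $\nu m_j\le \nu m_{j_2}\le t_k$, while the condition $p_k\ge\nu t_k$ depends only on $k$. Your extra remark that the arc's direction from $j$ to $k$ is guaranteed by Lemma~\ref{lem:rededgejsmaller} is a small point the paper leaves implicit, but otherwise the two proofs coincide.
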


\begin{proof}
Let $j_1$ and $j_2$ be the jobs in $V_I$
that minimize and maximize $m_j$ among all
jobs $j$ in $N^-(k)$, respectively.
As $j_1k$ and $j_2k$ are red arcs,
we have $t_k\le m_{j_1}$, $t_k\ge \nu \cdot m_{j_2}$,
and $p_k\ge \nu \cdot t_k$ (by Definition~\ref{def:red}).
Let $j$ be any job with $m_{j_1}\le m_j \le m_{j_2}$.
It follows that $m_j \ge m_{j_1}\ge t_k$ and
$\nu \cdot m_j\le \nu \cdot m_{j_2}\le t_k$.
Hence, $jk$ is a red arc and $j$ is also in $N^-(k)$.
\end{proof}

\begin{lemma}
\label{lem:intprefix}
Let job $k$ be a job with incoming red arcs,
and let $r\in P(k)$.
Then the intersection of $N^-(k)$ and $N^-(r)$
is a (not necessarily proper) prefix of $N^-(k)$.
Furthermore, $N^-(r)$ cannot contain any vertex
in $V_I$ that comes after $N^-(k)$.
\end{lemma}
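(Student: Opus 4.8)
The plan is to reduce everything to a clean interval description of the sets $N^-(\cdot)$ and then do elementary interval arithmetic. Since $k\in V_R$, it has at least one incoming red arc, so the third condition of Definition~\ref{def:red}, namely $p_k\ge\nu t_k$, holds; this condition depends only on~$k$. Hence an arc $jk$ is red exactly when $j$ is imbalanced and $m_j\ge t_k\ge\nu m_j$, i.e.\ $t_k\le m_j\le t_k/\nu$. Reading off $V_I$ in non-decreasing order of $m_j$, this gives
$$N^-(k)=\{\, j\in V_I : t_k\le m_j\le t_k/\nu \,\},$$
which refines Lemma~\ref{lem:interval}. The same reasoning applies to~$r$: since $r\in P(k)\subseteq V_R$ we have $p_r\ge\nu t_r$, and therefore $N^-(r)=\{\, j\in V_I : t_r\le m_j\le t_r/\nu \,\}$.

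Next I would extract the two consequences of the hypothesis $r\in P(k)$. First, $r$ comes before $k$ in~$V_R$, so $t_r<t_k$ (strict, by the perturbation assumption made at the start of Section~\ref{sec:arbitrary}). Second, $r$ has an incoming red arc from some job $j\in N^-(k)$, so $N^-(k)\cap N^-(r)\neq\emptyset$; fixing such a job~$j$ and combining $t_k\le m_j$ with $m_j\le t_r/\nu$ yields $\nu t_k\le t_r$. Hence $\nu t_k\le t_r<t_k$, and in particular $t_k\le t_r/\nu<t_k/\nu$.

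Finally I would intersect the two intervals. Using $t_r<t_k$ we get $N^-(k)\cap N^-(r)=\{\, j\in V_I : t_k\le m_j\le t_r/\nu \,\}$, whose defining $m$-range has the \emph{same} left endpoint $t_k$ as that of $N^-(k)$ and right endpoint $t_r/\nu\le t_k/\nu$; since $V_I$ is ordered by $m_j$, this set is an initial segment (prefix) of $N^-(k)$, and it is nonempty because $t_r/\nu\ge t_k$. For the last claim, let $j'\in V_I$ come after $N^-(k)$ in the order of~$V_I$. Then $m_{j'}$ is at least the $m$-value of the last element of $N^-(k)$, which lies in $[t_k,t_k/\nu]$, so $m_{j'}\ge t_k$; yet $j'\notin N^-(k)$, and the interval characterization of $N^-(k)$ then forces $m_{j'}>t_k/\nu>t_r/\nu$, whence $j'\notin N^-(r)$, as required. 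The only point that needs care is the orientation bookkeeping (checking that the shared endpoint is the \emph{left} one, so that we obtain a prefix rather than a suffix, and that ``after $N^-(k)$'' forces $m_{j'}\ge t_k$ rather than $m_{j'}<t_k$); everything follows mechanically once the chain $\nu t_k\le t_r<t_k$ is in place, so there is no substantive difficulty.
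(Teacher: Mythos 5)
Your proof is correct, and it rests on the same two facts that drive the paper's argument: for any $k\in V_R$ the condition $p_k\ge \nu t_k$ is automatic, so membership of $j\in V_I$ in $N^-(k)$ is decided purely by $t_k\le m_j\le t_k/\nu$; and $r\in P(k)$ forces $\nu t_k\le t_r\le t_k$. The organization differs, though. The paper argues pointwise: it fixes the $m$-minimal element $j_1$ of $N^-(k)$, shows $j_1\in N^-(r)$ whenever the intersection is nonempty (which, combined with the contiguity of $N^-(r)$ from Lemma~\ref{lem:interval}, yields the prefix claim), and proves the second claim by a separate contradiction argument. You instead upgrade Lemma~\ref{lem:interval} to an exact interval characterization $N^-(k)=\{j\in V_I:\ t_k\le m_j\le t_k/\nu\}$, after which both claims are pure interval arithmetic: the intersection is $\{j\in V_I:\ t_k\le m_j\le t_r/\nu\}$, which shares its left endpoint with $N^-(k)$ and is hence a prefix, and any $j'$ after $N^-(k)$ satisfies $m_{j'}>t_k/\nu>t_r/\nu$, so $j'\notin N^-(r)$. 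Your route is a bit more systematic: it subsumes Lemma~\ref{lem:interval} and avoids the implicit appeal to contiguity of $N^-(r)$ that the paper needs to pass from ``$j_1\in N^-(r)$'' to the full prefix statement. Two minor remarks: nonemptiness of the intersection follows from the job $j\in N^-(k)\cap N^-(r)$ you fixed (a nonempty real interval need not contain a vertex of $V_I$), though nonemptiness is not actually claimed by the lemma; and, exactly as in the paper's proof, the fact that the relevant pairs really form arcs oriented as red arcs is covered by Lemma~\ref{lem:rededgejsmaller}, so your implicit use of the characterization in both directions is sound.
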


\begin{proof}
Let $j_1$ be the job that minimizes $m_{j}$ among
all $j\in N^-(k)$.
Let $j$ be a job in $N^-(k)\cap N^-(r)$.
Note that $m_j\ge m_{j_1}$.
We need to show that $j_1\in N^-(r)$.

As $jr$ is a red arc,
we have $m_j\ge t_r \ge \nu \cdot m_j$ and $p_r \ge \nu \cdot t_r$.
With $m_j\ge m_{j_1}$, this implies $t_r\ge \nu \cdot m_{j_1}$.
As $r\in P(k)$, we have $t_r\le t_k$.
As $j_1k$ is a red arc,
we have $m_{j_1}\ge t_k \ge \nu \cdot m_{j_1}$.
With $t_r\le t_k$, this implies $m_{j_1}\ge t_r$.
Therefore, $j_1r$ is also
a red arc, so $j_1\in N^-(r)$. This shows that
$N^-(k)\cap N^-(r)$ is a prefix of $N^-(k)$.

Now, assume that $N^-(r)$ contains a vertex $j$ that
comes after the last vertex of $N^-(k)$ in $V_I$.
This means that $m_j\ge t_r \ge \nu \cdot m_j$ and
$p_r\ge \nu \cdot t_r$.
As $r$ was processed before $k$ in $V_R$, we
have $t_r\le t_k$.
Hence, $\nu \cdot m_j\le t_r$ implies $\nu \cdot m_j\le t_k$.
Let $j'$ be an arbitrary vertex in $N^-(k)$.
As $m_j\ge m_{j'}$ and $t_k\le m_{j'}$, we also have $t_k\le m_j$.
Therefore, $jk$ is also a red arc, a contradiction
to $j$ coming after $N^-(k)$ in~$V_I$.
\end{proof}

We say that a blue arc is \emph{used} or \emph{used up} in the analysis
of a vertex $k\in V_R$ if the arc is added to the set~$C_k$.
We maintain the following invariant when processing the
vertices in $V_R$.

\begin{invariant}
\label{inv:full}%
Consider a vertex $k\in V_R$, and any vertex $j$ in $N^-(k)$.
Let $P_{\ge j}(k)$ be the set of vertices in $V_R$ that have
been processed before $k$ and that have a red arc from~$j$.  
For each $r\in \{k\}\cup P_{\ge j}(k)$,
let $o_{\ge j}(r) =|N^-_{\ge j}(r)|$.
Then the total
number of blue arcs between vertices in $N^-_{\ge j}(k)$
that
have been used up in the analysis of vertices in $\{k\}\cup P_{\ge j}(k)$ at the time
when $k$ has just been processed
is at most $\sum_{r\in \{k\}\cup P_{\ge j}(k)}(o_{\ge j}(r)-1)$.
\end{invariant}

\begin{figure}
\centering
\resizebox{0.8\textwidth}{!}{\input{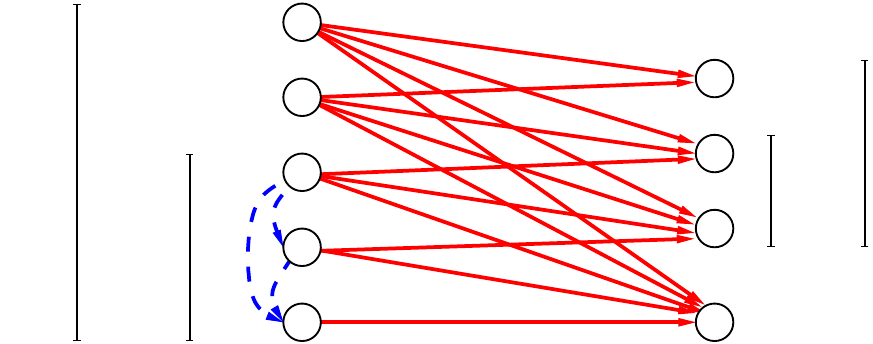_t}}
\caption{Illustration of Invariant~\ref{inv:full}. Only blue arcs
between vertices in $N^-_{\ge j}(k)$ are shown.}
\label{fig:invariant}
\end{figure}%
Intuitively, if we imagine the vertices of $V_I$ arranged from top to bottom in
order of increasing $m_j$, the invariant says that for any vertex $j$ in $V_I$
the following condition holds: The number of blue arcs between vertices below
$j$ (including~$j$) that have been used up in the analysis of vertices in $V_R$
that have already been processed is bounded by the sum, over all those
vertices, of their number of incoming red arcs from vertices below $j$ minus one.
See Fig.~\ref{fig:invariant} for an illustration. In that figure,
vertex $r$ has $o_{\ge j}(r)=1$ incoming red arc from $N^-_{\ge j}(k)$, $s$~has
$o_{\ge j}(s)=2$, and $k$ has $o_{\ge j}(k)=3$. Therefore, the invariant says that,
after $k$ has been processed, the number of blue arcs between vertices
in $N^-_{\ge j}(k)$ that have been used up is at most $(1-1)+(2-1)+(3-1)=3$.

Note that Invariant~\ref{inv:full}
trivially holds before any vertices in $V_R$ are processed because
no blue arcs have been used at that point.

Before we prove that we can maintain Invariant~\ref{inv:full}
and construct sets $C_k$ that allow us to charge the incoming red arcs
of each vertex $k\in V_R$ to blue and green arcs and $k$ itself, we establish some
properties of blue and green arcs.

\begin{lemma}
\label{lem:blue}
For each blue arc $ij$ in $G_k$,
we have $D^*(i,j)\ge t_k$
and $\rho_{ij} \le \rho^B = 1+\frac{1}{\mu\nu}$.
\end{lemma}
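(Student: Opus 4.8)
The plan is to use that both endpoints $i$ and $j$ of a blue arc in $G_k$ lie in $N^-(k)$, so that $ik$ and $jk$ are red arcs; by Definition~\ref{def:red} this means $i$ and $j$ are both imbalanced and $m_i\ge t_k\ge\nu m_i$ and $m_j\ge t_k\ge\nu m_j$. The bound $D^*(i,j)\ge t_k$ is then immediate: since $D^*(i,j)=\min\{\sigma_i,\sigma_j\}$, and $\sigma_i=t_i+p_i\ge m_i\ge t_k$ as well as $\sigma_j\ge m_j\ge t_k$, we get $D^*(i,j)\ge t_k$.

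For the bound on $\rho_{ij}$, I would first note that $D(i,j)$, $D^*(i,j)$ and $\rho_{ij}$ are symmetric in $i$ and $j$, so we may assume w.l.o.g.\ that $\sigma_i<\sigma_j$, which gives $D^*(i,j)=\sigma_i$ and, since $t_i+p_i<t_j+p_j$, makes Leaf~5 of the tree of possibilities in Fig.~\ref{fig:treeposs} unreachable (with $i$ in the role of the job of smaller size). Leaf~1 gives $\rho_{ij}=1$, so it remains to handle Leaves~2, 3, 4 and~6, and the idea in each case is to read $D(i,j)$ off the leaf and observe that the leaf conditions force the delay charged on top of $\sigma_i$ to consist only of a \emph{small} operation (of size at most $m/\mu$ for the relevant job).

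Concretely: at Leaves~2 and~4, $D(i,j)=\sigma_i+t_j$, and the chain of inequalities defining these leaves places $p_j$ above $t_j$, so $m_j=p_j$; hence $t_j=\min\{t_j,p_j\}\le m_j/\mu\le t_k/(\mu\nu)\le\sigma_i/(\mu\nu)$, using $t_k\ge\nu m_j$ and then $\sigma_i\ge m_i\ge t_k$, so $\rho_{ij}\le 1+\frac{1}{\mu\nu}$. At Leaf~3, $D(i,j)=t_i+t_j+p_j=t_i+\sigma_j$, and the leaf conditions give $t_j<p_i$ and $p_j<p_i$, hence $m_j<p_i$; since $j$ is imbalanced, $\sigma_j=m_j+\min\{t_j,p_j\}\le(1+\frac1\mu)m_j<(1+\frac1\mu)p_i$, so $D(i,j)<t_i+(1+\frac1\mu)p_i\le(1+\frac1\mu)(t_i+p_i)=(1+\frac1\mu)\sigma_i$ and $\rho_{ij}<1+\frac1\mu$. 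At Leaf~6, $D(i,j)=t_j+p_j=\sigma_j$, and the leaf conditions give $t_j<t_i$ and $p_j<t_i$, hence $m_j<t_i\le\sigma_i$, so $\sigma_j=m_j+\min\{t_j,p_j\}\le(1+\frac1\mu)m_j<(1+\frac1\mu)\sigma_i$ and again $\rho_{ij}<1+\frac1\mu$. Since $0<\nu<1$ gives $1+\frac1\mu<1+\frac1{\mu\nu}$, in every case $\rho_{ij}\le 1+\frac1{\mu\nu}=\rho^B$.

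I expect the one genuinely delicate step to be the claim that at Leaves~2 and~4 the long operation of $j$ is its processing part ($m_j=p_j$): this is exactly what rules out the a~priori possible situation where $j$'s test $\tau_j$ is executed before $i$ completes and contributes a delay comparable to $\sigma_i$, which would only give $\rho_{ij}\le 2$. This claim follows by tracing the $1$-SORT priority comparisons along the path to the respective leaf (it is already encoded in the branch conditions of Fig.~\ref{fig:treeposs}); every other step is a one-line inequality combining the red-arc inequalities for $ik$ and $jk$ with the imbalance of $i$ and $j$. (One could equivalently organise the argument around which of $C_i,C_j$ is the later completion time and how much of the earlier-completing job runs before the later one completes, obtaining the same case split without naming leaves, but the leaf-based presentation reuses the machinery already set up for Lemma~\ref{lem:notredratio}.)
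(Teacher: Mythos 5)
Your proof is correct, and the only real difference from the paper is how the case analysis is organized. The paper does not invoke the decision tree of Fig.~\ref{fig:treeposs} here: after deriving $\min\{t_i,p_i\}\le\frac{1}{\mu\nu}m_j$ and $\min\{t_j,p_j\}\le\frac{1}{\mu\nu}m_i$ from imbalance and the two red arcs $ik$, $jk$, it splits into four cases according to which operation of $i$ and which of $j$ is the small one, works out the $1$-SORT schedule of the pair in each case, and proves the single unified inequality $D(i,j)\le D^*(i,j)+\min\{t_j,p_j\}$ (with $\sigma_i\le\sigma_j$), from which $\rho_{ij}\le 1+\frac{1}{\mu\nu}$ follows exactly as in your last step. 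Your leaf-by-leaf argument instead reuses the machinery already set up for Lemma~\ref{lem:notredratio}, so no schedules need to be re-derived, and it even yields the slightly stronger bound $1+\frac{1}{\mu}$ at Leaves~3 and~6 (not needed for the lemma). The step you flagged as delicate is sound: the branch conditions of Leaf~2 give $p_j>p_i>t_j$ and those of Leaf~4 give $p_j>t_i>t_j$ (in your labels), so in both cases $m_j=p_j$, the extra delay $t_j$ is the small operation of the imbalanced job $j$, and your chain $t_j\le m_j/\mu\le t_k/(\mu\nu)\le\sigma_i/(\mu\nu)$ is precisely the estimate the paper uses in its final line. What the paper's organization buys in return is the explicit verification of which schedules $1$-SORT can actually produce for a blue pair and the reusable intermediate statement $D(i,j)\le D^*(i,j)+\min\{t_j,p_j\}$; your version buys brevity by leaning on the branch conditions of the tree. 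Either route is a valid proof of Lemma~\ref{lem:blue}.
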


\begin{proof}
Assume w.l.o.g.\ that $\sigma_i \le \sigma_j$.
Then $D^*(i,j)= \sigma_i \ge m_i \ge t_k$,
where the last inequality follows as $ik$ is a red arc.

Since $ik$ and $jk$ are red arcs,
we have $m_i\ge t_k\ge \nu \cdot m_i$ and
$m_j\ge t_k\ge \nu \cdot m_j$.
Therefore, $m_i\le \frac{1}{\nu} t_k\le \frac{1}{\nu} m_j$
and $m_j\le \frac{1}{\nu} t_k\le \frac{1}{\nu} m_i$.
This also shows $\max\{m_i,m_j\}\le \frac{1}{\nu}\min\{m_i,m_j\}$.

Let $s_i=\min\{t_i,p_i\}$ and
$s_j=\min\{t_j,p_j\}$. As $s_i\le \frac{m_i}{\mu}$ (job $i$ is imbalanced)
and $m_i\le \frac{1}{\nu}m_j$, we have
$s_i\le \frac{1}{\mu\nu} m_j\le m_j$, as
$\mu\ge 1/\nu$. Similarly,
$s_j\le m_i$.
We now consider cases depending on whether
$s_i=t_i$ or $s_i=p_i$ and $s_j=t_j$ or $s_j=p_j$
and show that $D(i,j)\le D^*(i,j)+s_j$ holds
in every case:
\begin{itemize}
\item $s_i=t_i$ and $s_j=t_j$:
The algorithm will first execute the tests with times $s_i$ and $s_j$ (in some order),
and then the processing parts with times $m_i$ and $m_j$ (in some order).
We have
$D(i,j)= s_i+s_j+\min\{m_i,m_j\}$.
As $D^*(i,j)=s_i+m_i$,
we have $D(i,j)\le D^*(i,j)+s_j$.

\item $s_i=t_i$, $s_j=p_j$:
The algorithm will execute $\tau_i$ first and then
either $\pi_i$ or $\tau_j$. In the former case,
we have $D(i,j)=s_i+m_i=D^*(i,j)$.
In the latter case we must have $t_j\le p_i$ and
the schedule is $[\tau_i,\tau_j,\pi_j,\pi_i]$.
We have $D(i,j)=t_i+t_j+p_j\le t_i+p_i+p_j
\le D^*(i,j)+s_j$.

\item $s_i=p_i$, $s_j=t_j$:
The algorithm will execute $\tau_j$ first and
then either $\pi_j$ or $\tau_i$. In the former
case, we must have $p_j\le t_i$. The schedule
is $[\tau_j,\pi_j,\tau_i,\pi_i]$
and we have $D(i,j)=t_j+p_j \le t_j + t_i \le
D^*(i,j)+s_j$.
In the latter case, the schedule
is $[\tau_j,\tau_i,\pi_i,\pi_j]$
and we have $D(i,j)=t_i+p_i+t_j=D^*(i,j)+t_j = D^*(i,j)+s_j$.

\item $s_i=p_i$, $s_j=p_j$:
If $t_i<t_j$, the schedule is
$[\tau_i,\pi_i,\tau_j,\pi_j]$ and
we have $D(i,j)=D^*(i,j)$.
If $t_i>t_j$, the schedule is
$[\tau_j,\pi_j,\tau_i,\pi_i]$ and
we have $D(i,j)=t_j+p_j\le t_i+p_j
\le t_i+p_i+p_j = D^*(i,j)+s_j$.
\end{itemize}
In all cases, we have $D(i,j)\le D^*(i,j)+s_j$.
As $s_j\le \frac{1}{\mu}m_j \le \frac{1}{\mu\nu}m_i$,
we get $D(i,j)\le (1+\frac{1}{\mu\nu})D^*(i,j)$.
\end{proof}

\begin{lemma}
\label{lem:green}
For each green arc $jk$ (with $j\in P(k)$) in $G_k$,
we have $D^*(j,k)\ge (\nu+\nu^2) t_k$
and $\rho_{jk} \le \rho^G = 1+\frac{1}{\nu+\nu^2}$.
\end{lemma}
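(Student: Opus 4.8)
The plan is to mirror the two-step structure of the proof of Lemma~\ref{lem:blue}: first establish the lower bound $D^*(j,k)\ge(\nu+\nu^2)t_k$, and then show that the delay produced by $1$-SORT exceeds the optimal delay by at most an additive $t_k$, i.e., $D(j,k)\le D^*(j,k)+t_k$. Since $D^*(j,k)\ge(\nu+\nu^2)t_k$ gives $t_k\le\frac{1}{\nu+\nu^2}\,D^*(j,k)$, the two bounds combined immediately yield $\rho_{jk}=D(j,k)/D^*(j,k)\le 1+\frac{1}{\nu+\nu^2}=\rho^G$.

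For the lower bound I would use that, by the definition of $P(k)$, there is a job $i\in N^-(k)$ with $ij\in\red$. Chaining the defining inequalities of Definition~\ref{def:red} for the two red arcs $ik$ and $ij$ gives $t_j\ge\nu\, m_i\ge\nu\, t_k$ and then $p_j\ge\nu\, t_j\ge\nu^2 t_k$, so $\sigma_j=t_j+p_j\ge(\nu+\nu^2)t_k$; furthermore $p_k\ge\nu\, t_k$ (again because $ik\in\red$) gives $\sigma_k=t_k+p_k\ge(1+\nu)t_k\ge(\nu+\nu^2)t_k$, using $\nu<1$. As $D^*(j,k)=\min\{\sigma_j,\sigma_k\}$, this proves $D^*(j,k)\ge(\nu+\nu^2)t_k$.

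For the additive bound $D(j,k)\le D^*(j,k)+t_k$, the extra structural fact I would exploit --- the one a blunt $D(j,k)\le 2D^*(j,k)$ argument cannot use --- is that $j$ precedes $k$ in $V_R$, so $t_j<t_k$ and hence $1$-SORT runs $\tau_j$ before $\tau_k$. This leaves only three possible relative orders of the operations $\tau_j,\pi_j,\tau_k,\pi_k$, namely the branches of the tree in Fig.~\ref{fig:treeposs} compatible with $t_j<t_k$: the order $\tau_j,\pi_j,\tau_k,\pi_k$ when $p_j<t_k$, with $D(j,k)=\sigma_j$; the order $\tau_j,\tau_k,\pi_j,\pi_k$ when $p_j>t_k$ and $p_j<p_k$, with $D(j,k)=\sigma_j+t_k$; and the order $\tau_j,\tau_k,\pi_k,\pi_j$ when $p_j>t_k$ and $p_j>p_k$, with $D(j,k)=t_j+\sigma_k$. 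In the first order, either $\sigma_j\le\sigma_k$ and then $D(j,k)=D^*(j,k)$, or $\sigma_j>\sigma_k$ and then $D(j,k)=t_j+p_j<2t_k\le t_k+\sigma_k=D^*(j,k)+t_k$ (using $t_j<t_k$, $p_j<t_k$, and $\sigma_k\ge t_k$). In the second order, $\sigma_j=t_j+p_j<t_k+p_k=\sigma_k$, so $D^*(j,k)=\sigma_j$ and $D(j,k)=D^*(j,k)+t_k$ holds with equality. In the third order, either $\sigma_j\le\sigma_k$, where $p_k<p_j$ gives $D(j,k)=t_j+t_k+p_k\le t_j+p_j+t_k=\sigma_j+t_k=D^*(j,k)+t_k$, or $\sigma_j>\sigma_k$, where $D(j,k)=t_j+\sigma_k\le t_k+\sigma_k=D^*(j,k)+t_k$.

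The step I expect to need the most care --- and the reason this is not a verbatim copy of the blue-arc argument --- is the last one: for a green arc the roles of $j$ and $k$ are fixed (by $j\in P(k)$ together with $t_j<t_k$), so, unlike in Lemma~\ref{lem:blue}, one cannot assume w.l.o.g.\ that $\sigma_j\le\sigma_k$, and consequently two of the three orderings need a short sub-case distinction on which of $\sigma_j,\sigma_k$ is the smaller size. Each such sub-case nonetheless collapses to one of the inequalities $t_j<t_k$, $p_j<t_k$, $p_k<p_j$, or $\sigma_k\ge t_k$ valid in that branch, so no genuine obstacle is expected; the only remaining routine check is that $1$-SORT really produces the three claimed orders under the stated conditions relating $p_j$, $t_k$ and $p_k$.
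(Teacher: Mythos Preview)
Your proof is correct and follows essentially the same approach as the paper: both use a common red in-neighbour $i\in N^-(k)$ to chain the inequalities of Definition~\ref{def:red} for the lower bound on $D^*(j,k)$, and both restrict to the three schedules compatible with $t_j<t_k$ and handle each with the same sub-case distinction on $\sigma_j$ versus $\sigma_k$. The one presentational difference is that you package the case analysis as a single additive bound $D(j,k)\le D^*(j,k)+t_k$ and then apply the lower bound once, whereas the paper bounds the ratio $\rho_{jk}$ directly in each leaf (using $m_i$ as an intermediate quantity); your formulation is slightly cleaner and more parallel to the blue-arc lemma, but the underlying inequalities are the same.
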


\begin{proof}
Let $jk$ be a green arc such that $j\in P(k)$, and let $j'$
be a vertex in $N^-(k)$ that has a red arc
to both $j$ and~$k$.
By the definition of red arcs,
the values $t_r$ and $t_k$ are at least
$\nu m_{j'}$ and at most $m_{j'}$. Therefore
$t_k\ge t_j\ge \nu \cdot t_k$. 
(The first inequality holds because job $j$ is processed before job $k$.)
Furthermore, we have $p_j\ge \nu \cdot t_j$
and $p_k\ge \nu \cdot t_k$ as $j,k\in V_R$.
Therefore, $D^*(j,k)=\min\{t_k+p_k,t_j+p_j\}
\ge \min\{t_k,t_j\}+\min\{p_k,p_j\}
\ge \nu t_k + \min\{\nu t_k,\nu^2 t_k\}
= \nu t_k + \nu^2 t_k = (\nu+\nu^2)t_k$.

As $j\in P(k)$, we have
$t_j\le t_k$, and hence only the left half
of the decision tree in Fig.~\ref{fig:treeposs}
is relevant.
Let $i$ be a vertex in $N^-(j)\cap N^-(k)$.
As $ij$ and $ik$ are red arcs, we have
$m_i \ge t_j \ge \nu \cdot m_i$ and
$m_i \ge t_k \ge \nu \cdot m_i$
and $p_j\ge \nu \cdot t_j$ and $p_k\ge \nu \cdot t_k$.

The schedule produced by the
algorithm can be as in leaves 1, 2 and 3 of the tree,
and we consider each case as follows:
\begin{itemize}
\item Leaf 1: We have $D(j,k)=t_j+p_j$.
If $\sigma_j\le \sigma_k$, then $D^*(j,k)=t_j+p_j=D(j,k)$ and $\rho_{jk}=1$.
Otherwise, $D^*(j,k) = t_k + p_k \ge (1+\nu)t_k$, while
$D(j,k)=t_j+p_j\le 2 t_k$ (because $t_j<t_k$ and $p_j<t_k$ hold
in the branch leading to Leaf~$1$).
Therefore, $\rho_{jk}\le \frac{2}{1+\nu}$.

\item Leaf 2: We have $D(j,k)=t_j+p_j+t_k$.
As $t_j<t_k$ and $p_j<p_k$ in this branch,
we have $\sigma_j\le \sigma_k$ and
therefore $D^*(j,k)=t_j+p_j$.
As $t_k\le m_i$ and $t_j+p_j\ge t_j+\nu \cdot t_j = (1+\nu)t_j \ge (\nu+\nu^2) m_i$,
we have $t_k\le \frac{1}{\nu+\nu^2}(t_j+p_j)$ and
hence $D(j,k)\le (1+\frac{1}{\nu+\nu^2}) D^*(j,k)$.
This shows $\rho_{jk}\le 1+\frac{1}{\nu+\nu^2}$.

\item Leaf 3: We have $D(j,k)=t_j+t_k+p_k$.
If $\sigma_j\le \sigma_k$, then $D^*(j,k)=t_j+p_j$.
As we have $p_k< p_j$ in this branch,
$D(j,k)\le t_j+t_k+p_j$.
Furthermore, $t_k\le m_i$
and $t_j+p_j\ge (\nu+\nu^2) m_i$ (same as for Leaf~2), so
$t_k\le \frac{1}{\nu+\nu^2}(t_j+p_j)$.
Hence, $D(j,k)\le (1+\frac{1}{\nu+\nu^2}) D^*(j,k)$
and $\rho_{jk}\le 1+\frac{1}{\nu+\nu^2}$.

If $\sigma_j > \sigma_k$, then $D^*(j,k)=t_k+p_k$.
As $t_j\le m_i$ and $t_k+p_k\ge (1+\nu)t_k \ge (\nu+\nu^2) m_i$,
we have $t_j \le \frac{1}{\nu+\nu^2}(t_k+p_k)$.
This gives $D(j,k)=t_j+t_k+p_k\le (1+\frac{1}{\nu+\nu^2})(t_k+p_k)=
(1+\frac{1}{\nu+\nu^2}) D^*(j,k)$. Thus, $\rho_{jk}\le 1+\frac{1}{\nu+\nu^2}$.
\end{itemize}
The bounds on the ratios in all cases are bounded by
$\max\{\frac{2}{1+\nu},1+\frac{1}{\nu+\nu^2}\}=1+\frac{1}{\nu+\nu^2}$.
\end{proof}

Unfortunately, it is possible that an arc $ij$ is used as a blue arc
in the analysis of one vertex $r$ in $V_R$ and as a green arc in the analysis
of another vertex $k$ in $V_R$. We handle this case by splitting such an arc $ij$
into two arcs for the purpose of the analysis, a \emph{special blue arc}
used in the analysis of $r$ and a \emph{special green arc} used in the
analysis of $k$. In this way, we can ensure that every arc is used in the
analysis of at most one vertex.

Consider an arc $ij$ that is used both as a blue and a green arc.
As $ij$ is used as a blue arc, both $i$ and $j$ must be imbalanced.
As $ij$ is used as a green arc, we must have $p_i\ge \nu \cdot t_i$ and $p_j\ge \nu \cdot t_j$.
As $i$ is imbalanced, we must have either $p_i\ge \mu t_i$
or $p_i\le t_i / \mu$.
As the condition $\mu > 1/\nu$ from Definition~\ref{def:red}
is equivalent to $\nu> 1/\mu$, we
have $p_i\ge \nu \cdot t_i> t_i/\mu$.
Thus, the case $p_i \le t_i/\mu$ is
not possible, and so we must have
$p_i\ge \mu t_i$.
Similarly, $p_j\ge \mu t_j$.
Therefore, $m_i=p_i$ and $m_j=p_j$.
As $ij$ is a blue arc, there is a job $k\in V_R$ such that $ik$ and $jk$ are red
arcs. Therefore, by Definition~\ref{def:red},
$m_i\ge t_k\ge \nu m_i$ and $m_j\ge t_k\ge \nu m_j$ .
This shows that both $m_i$ and $m_j$ lie
in the interval $[t_k,\frac{1}{\nu}t_k]$.
Therefore, $m_i=p_i$ and $m_j=p_j$ can differ
at most by a factor of $\nu$, which means
$\min\{p_i,p_j\}\ge \nu \max\{p_i,p_j\}$.
As $ij$ is a green arc, there is a job $s\in V_I$
such that $si$ and $sj$ are red arcs.
Therefore $m_s \ge t_i \ge \nu \cdot m_s$ and
$m_s \ge t_j \ge \nu \cdot m_s$, so
$t_i$ and $t_j$ both lie in the interval
$[\nu\cdot m_s,m_s]$ and hence
can also differ at most by a factor of $\nu$,
giving $\min\{t_i,t_j\}\ge \nu \max\{t_i,t_j\}$.

Assume without loss of generality that $t_i<t_j$.
We have $D^*(i,j)\ge \min\{t_i,t_j\}+\min\{p_i,p_j\}$
and $D(i,j)=t_i+t_j+\min\{p_i,p_j\}$.
Note that $t_i+t_j\le (1+\frac{1}{\nu})\min\{t_i,t_j\}$
and $\min\{t_i,t_j\}\le \frac{1}{\mu}\min\{p_i,p_j\}$.
We have $$
\rho_{ij}\le \frac{(1+\frac{1}{\nu})\min\{t_i,t_j\} +\min\{p_i,p_j\}}{\min\{t_i,t_j\}+\min\{p_i,p_j\}} \,.
$$
As $1+\frac{1}{\nu}>1$, the ratio is maximized when $\min\{t_i,t_j\}$ is as large as possible,
so we set $\min\{t_i,t_j\}=\frac{1}{\mu}\min\{p_i,p_j\}$ and obtain
$$
\rho_{ij}\le \frac{(1+\frac{1}{\nu})\frac{1}{\mu}\min\{p_i,p_j\} +\min\{p_i,p_j\}}{\frac{1}{\mu}\min\{p_i,p_j\}+\min\{p_i,p_j\}}
= \frac{1+\frac{1}{\mu}+\frac{1}{\mu\nu}}{1+\frac{1}{\mu}}
= 1+\frac{1}{\nu(\mu+1)}
$$
We split the arc $ij$ into two arcs (denoted by $i_gj_g$ and $i_bj_b$)
as follows:
\begin{itemize}
\item A \emph{special green arc} $i_gj_g$ with $D^*(i_g,j_g) \ge \min\{t_i,t_j\}$.
\item A \emph{special blue arc} $i_bj_b$ with $D^*(i_b,j_b) \ge \min\{p_i,p_j\}$.
\item Choose $D^*(i_g,j_g)$ and $D^*(i_b,j_b)$ in such a way that they add up to $D^*(i,j)$.
\item Split $D(i,j)$ between $D(i_g,j_g)$ and $D(i_b,j_b)$ in the same proportion as
$D^*(i,j)$ has been split into $D^*(i_g,j_g)$ and $D^*(i_b,j_b)$, so
that $\rho_{igj_g}=\rho_{i_bj_b}=\rho_{ij}\le 1+\frac{1}{\nu(\mu+1)}$
\end{itemize}

We then get the following lemmas for special blue and green arcs.

\begin{lemma}
\label{lem:specialblue}
For each special blue arc $i_bj_b$ in $G_k$,
we have $D^*(i_b,j_b)\ge t_k$
and $\rho_{i_bj_b} \le \rho^S = 1+\frac{1}{\nu(\mu+1)}$.
\end{lemma}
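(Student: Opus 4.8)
The plan is to derive both claims directly from the blue--green split construction carried out immediately before the lemma statement, with essentially no new computation. First I would recall that a special blue arc $i_bj_b$ arises only from an arc $ij$ of $G$ that is used simultaneously as a blue arc (in the analysis of some vertex of $V_R$) and as a green arc (in the analysis of another vertex of $V_R$). Since in the present context $i_bj_b$ appears in $G_k$ playing the role of a blue arc, the underlying arc $ij$ must be a blue arc of $G_k$; hence $i,j\in N^-(k)$, and therefore both $ik$ and $jk$ are red arcs.

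For the ratio bound I would simply invoke the split construction: we chose $D(i_b,j_b)$ and $D^*(i_b,j_b)$ so that $\rho_{i_bj_b}=\rho_{ij}$, and the preceding discussion already established $\rho_{ij}\le 1+\frac{1}{\nu(\mu+1)}=\rho^S$. Hence $\rho_{i_bj_b}\le\rho^S$ with no further argument.

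For the lower bound $D^*(i_b,j_b)\ge t_k$, the key fact from the split discussion is that, because $ij$ is used both as a blue and a green arc, both $i$ and $j$ are imbalanced with $m_i=p_i$ and $m_j=p_j$ (the alternative $p_\ell\le t_\ell/\mu$ is ruled out by $\nu>1/\mu$). By construction, $D^*(i_b,j_b)\ge\min\{p_i,p_j\}=\min\{m_i,m_j\}$. Since $ik$ and $jk$ are red arcs, Definition~\ref{def:red} gives $m_i\ge t_k$ and $m_j\ge t_k$, so $\min\{m_i,m_j\}\ge t_k$, and therefore $D^*(i_b,j_b)\ge t_k$.

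I do not expect a substantial obstacle: the lemma is essentially a bookkeeping consequence of facts already proved while setting up the split. The only point that needs to be made explicit is that an occurrence of $i_bj_b$ in $G_k$ forces $i,j\in N^-(k)$, so that the red-arc conditions of Definition~\ref{def:red} are available for exactly the vertex $k$ named in the lemma, rather than only for the vertex whose analysis originally triggered the split.
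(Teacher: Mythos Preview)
Your proposal is correct and follows essentially the same approach as the paper: both arguments read off $\rho_{i_bj_b}=\rho_{ij}\le 1+\frac{1}{\nu(\mu+1)}$ from the split construction, and both obtain $D^*(i_b,j_b)\ge\min\{p_i,p_j\}\ge t_k$ using $m_i=p_i$, $m_j=p_j$ together with the red-arc condition $m_i,m_j\ge t_k$. Your version is slightly more explicit in justifying that the red-arc bounds hold for the particular $k$ named in the lemma (via $i,j\in N^-(k)$), whereas the paper just asserts $p_i\ge t_k$ and $p_j\ge t_k$ directly.
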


\begin{proof}
The bound on $\rho_{i_bj_b}$ was shown above.
As $p_i\ge t_k$ and $p_j\ge t_k$ and
$D^*(i_b,j_b)\ge \min\{p_i,p_j\}$ by definition,
we have $D^*(i_b,j_b)\ge t_k$.
\end{proof}

Lemma~\ref{lem:specialblue} shows that a special blue arc satisfies the properties
of a regular (non-special) blue arc stated in Lemma~\ref{lem:blue}, and hence
we do not distinguish between blue arcs and special blue arcs in the remainder
of the proof.

\begin{lemma}
\label{lem:specialgreen}
For each special green arc $r_gk_g$ (with $r\in P(k)$) in $G_k$,
we have $D^*(r_g,k_g)\ge \nu t_k$
and $\rho_{r_gk_g} \le \rho^S = 1+\frac{1}{\nu(\mu+1)}$.
\end{lemma}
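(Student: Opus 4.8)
The plan is to treat the two assertions of Lemma~\ref{lem:specialgreen} separately, reusing the splitting construction introduced immediately before the statement.

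For the ratio bound $\rho_{r_gk_g}\le\rho^S=1+\frac{1}{\nu(\mu+1)}$ there is essentially nothing new to do: by construction, when an arc $ij$ that is used both as a blue and as a green arc is split, the pair $D(i_g,j_g),D^*(i_g,j_g)$ is obtained from $D(i,j),D^*(i,j)$ by the same proportional split, so $\rho_{r_gk_g}=\rho_{ij}\le 1+\frac{1}{\nu(\mu+1)}$. I would simply recall this together with the earlier computation that established the bound $1+\frac{1}{\nu(\mu+1)}$ for $\rho_{ij}$.

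For the delay bound I would argue as follows. By construction, a special green arc $r_gk_g$ satisfies $D^*(r_g,k_g)\ge\min\{t_r,t_k\}$, where $r$ and $k$ are the endpoints of the original arc. Since $r\in P(k)$, job $r$ is processed before $k$ in $V_R$, whose vertices are ordered by non-decreasing test time, so $t_r\le t_k$ and hence $\min\{t_r,t_k\}=t_r$. It remains to show $t_r\ge\nu t_k$. By the definition of $P(k)$ there is a vertex $j'\in N^-(k)$ with $j'r$ a red arc, and $j'\in N^-(k)$ means $j'k$ is also a red arc; applying Definition~\ref{def:red} to these two red arcs yields $t_r\ge\nu\cdot m_{j'}$ and $t_k\le m_{j'}$, whence $t_r\ge\nu\cdot m_{j'}\ge\nu t_k$. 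Combining, $D^*(r_g,k_g)\ge t_r\ge\nu t_k$, which is the claim.

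I do not expect a genuine obstacle: this lemma is the green-arc counterpart of Lemma~\ref{lem:specialblue}, and the argument is just a combination of the splitting construction with the red-arc inequalities. The only point that needs a little care is to confirm that a special green arc in $G_k$ incident to $k$ really arises from splitting an arc with endpoints $r$ and $k$ for some $r\in P(k)$, and that the common source $j'$ used above can be taken in $N^-(k)$ rather than merely in $V_I$; both follow directly from the definitions of green arcs, of $P(k)$, and of the splitting procedure, so no lengthy calculation is required.
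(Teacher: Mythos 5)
Your proposal is correct and follows essentially the same route as the paper: the ratio bound is inherited from the proportional split, and the delay bound comes from the common red-arc source (the paper takes a vertex $s$ with $sr$ and $sk$ red, you take $j'\in N^-(k)$ from the definition of $P(k)$), applying Definition~\ref{def:red} to get $t_r\ge\nu t_k$ and hence $D^*(r_g,k_g)\ge\min\{t_r,t_k\}\ge\nu t_k$.
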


\begin{proof}
The bound on $\rho_{r_gk_g}$ was shown above.
As $rk$ is a green arc, there is a vertex $s\in V_I$
such that $sr$ and $sk$ are red arcs, which implies
that $t_r=\min\{t_r,t_k\}\ge \nu \max\{t_r,t_k\}=t_k$ (as job $r$ is processed before job $k$).
As $D^*(r_g,k_g)\ge \min\{t_r,t_k\}\ge \nu t_k$ by definition,
we have $D^*(r_g,k_g)\ge \nu t_k$.
\end{proof}

The following lemma deals with vertices in $V_R$ that have a single
incoming red arc. The lemma shows that we do not need to use any
blue arcs for such vertices, so they do not play any role in the
process of maintaining Invariant~\ref{inv:full}.

\begin{lemma}
\label{lem:onered}
If $|N^-(k)|=1$ and we take $C_k=\{k,jk\}$, where $jk$ is the
single incoming red arc of~$k$, then the ratio $\rho_{C_k}$ of the
algorithms's delay over the optimal delay in $C_k$ is
bounded by $\frac{\nu+\nu^2+2+\frac{2}{\mu}}{\nu+\nu^2+1+\frac{1}{\mu}}$.
\end{lemma}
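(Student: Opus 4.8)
The set $C_k$ consists only of the vertex $k$ and its single incoming red arc $jk$. Using the decomposition of the objective into vertex contributions $\sigma_j$ and arc contributions $D(\cdot,\cdot)$, this means
$$\rho_{C_k} = \frac{\sigma_k + D(j,k)}{\sigma_k + D^*(j,k)}.$$
The plan is to bound $\sigma_k$, $D^*(j,k)$, and $D(j,k)$ individually in terms of $m_j$ using only that $jk$ is a red arc, and then to note that the displayed fraction is monotone in each argument, so that substituting the extremal bounds produces exactly the stated expression.

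First I would collect what follows from $jk \in \red$. Since $j$ is imbalanced, $\sigma_j = m_j + \min\{t_j,p_j\} \le (1+\frac{1}{\mu})m_j$; by Lemma~\ref{lem:rededgejsmaller} we have $\sigma_j \le \sigma_k$, hence $D^*(j,k) = \min\{\sigma_j,\sigma_k\} = \sigma_j \le (1+\frac{1}{\mu})m_j$. The red-arc conditions give $t_k \ge \nu m_j$ and $p_k \ge \nu t_k \ge \nu^2 m_j$, so $\sigma_k = t_k + p_k \ge (\nu+\nu^2)m_j$. Finally, the general observation recorded just after Lemma~\ref{lem:notredratio}, that $D(j,k) \le 2 D^*(j,k)$ for every arc, applies in particular to the red arc $jk$.

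Writing $a = D^*(j,k)$ and $z = \sigma_k$, we thus get $\rho_{C_k} \le \frac{z+2a}{z+a} = 1 + \frac{a}{z+a}$, which is increasing in $a$ and decreasing in $z$. Substituting $a \le (1+\frac{1}{\mu})m_j$ and $z \ge (\nu+\nu^2)m_j$ (these bounds are mutually consistent, since $1+\frac{1}{\mu} \le \nu+\nu^2$ matches $a = \sigma_j \le \sigma_k = z$) yields
$$\rho_{C_k} \le \frac{(\nu+\nu^2)m_j + 2(1+\frac{1}{\mu})m_j}{(\nu+\nu^2)m_j + (1+\frac{1}{\mu})m_j} = \frac{\nu+\nu^2+2+\frac{2}{\mu}}{\nu+\nu^2+1+\frac{1}{\mu}},$$
which is the claimed bound. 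In the degenerate case $m_j = 0$ we have $\sigma_j = 0$, so $D^*(j,k) = 0$ and hence also $D(j,k) = 0$, giving $\rho_{C_k} = 1$, which is below the right-hand side.

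I do not anticipate a substantial obstacle: this lemma is the easy base case of the charging scheme, needing neither blue nor green arcs nor any appeal to Invariant~\ref{inv:full}. The only items requiring a little care are getting the directions of the two monotonicity substitutions right and checking that the extremal bounds may be assumed simultaneously, both of which are immediate from the standing assumption $1+\frac{1}{\mu} \le \nu+\nu^2$.
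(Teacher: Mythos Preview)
Your proof is correct and follows essentially the same route as the paper: bound $\sigma_k \ge (\nu+\nu^2)m_j$ and $D^*(j,k)=\sigma_j \le (1+\tfrac{1}{\mu})m_j$ from the red-arc conditions, use $D(j,k)\le 2D^*(j,k)$, and substitute these extremal values into $\frac{\sigma_k+2D^*(j,k)}{\sigma_k+D^*(j,k)}$ via monotonicity. Your explicit handling of the monotonicity direction, the compatibility of the two extremal substitutions, and the degenerate case $m_j=0$ are all fine additions but do not differ in substance from the paper's argument.
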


\begin{proof}
The delay of $k$ on itself is $\total[k]$ both in the optimal
schedule and in the algorithm's schedule.
Note that $\total[k]=t_k+p_k\ge (1+\nu)t_k \ge \nu(1+\nu)m_j$.
The delay $D^*(j,k)$ is $\total[j]\le m_j+\frac{m_j}{\mu}=(1+\frac{1}{\mu})m_j$.

The ratio of the delays in $C_k$ is then bounded by:
$$
\frac{\total[k] + D(j,k)}{\total[k]+D^*(j,k)}
\le 
\frac{\total[k] + 2D^*(j,k)}{\total[k]+D^*(j,k)}
$$
This ratio is maximized when $\total[k]$ is as small
as possible and $D^*(j,k)$ is as large
as possible, so we can set $\total[k]=(\nu+\nu^2) m_j$ and
$D^*(j,k)=(1+\frac{1}{\mu})m_j$, giving:
$$
\frac{\total[k] + 2D^*(j,k)}{\total[k]+D^*(j,k)}\le
\frac{(\nu+\nu^2) m_j + 2 (1+\frac{1}{\mu})m_j}{(\nu+\nu^2) m_j+(1+\frac{1}{\mu})m_j}=
\frac{(\nu+\nu^2) + 2 (1+\frac{1}{\mu})}{(\nu+\nu^2) +(1+\frac{1}{\mu})}
$$
This shows the bound of the lemma.
\end{proof}

Now we show for each vertex $k$ in $V_R$ that, assuming Invariant~\ref{inv:full} holds before
vertex $k$ is processed, we can
construct a set $C_k$ that allow us to charge the red incoming arcs
while maintaining Invariant~\ref{inv:full}.

\begin{lemma}
\label{lem:inv}%
Let $k$ be a vertex in $V_R$ and assume that Invariant~\ref{inv:full} holds just
before $k$ is processed. We can define a set $C_k$ consisting of blue arcs
connecting vertices in $N^-(k)$, green arcs connecting $k$ with vertices in $P(k)$,
all incoming red arcs of~$k$, and $k$ itself in such a way that
$\rho_{C_k} \le \rho^C$ with
\begin{align*}
\rho^C = & \max\{
\frac{\nu+\nu^2+2+\frac{2}{\mu}}{\nu+\nu^2+1+\frac{1}{\mu}},
1+\frac{1}{2+\nu},
\frac{\frac{4}{\nu}+\frac{4}{\mu\nu}+\nu+\nu^2+1}{\frac{2}{\nu}+\frac{2}{\mu\nu}+\nu+\nu^2},\\
& \frac{\frac{4}{\nu}+\frac{4}{\nu\mu}+\nu+\frac{1}{\mu+1}}{\frac{2}{\nu}+\frac{2}{\mu\nu}+\nu},
\frac{4+\frac{5}{\mu}+\nu}{2+\frac{2}{\mu}+\nu},
\rho^G,\rho^S \}
\,.
\end{align*}
Furthermore, Invariant~\ref{inv:full} still holds after $k$
is processed.
\end{lemma}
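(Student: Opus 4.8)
The plan is to argue by a case distinction on $o(k)=|N^-(k)|$. If $o(k)=1$ we put $C_k=\{k,jk\}$ for the single incoming red arc $jk$; Lemma~\ref{lem:onered} already gives $\rho_{C_k}\le \frac{\nu+\nu^2+2+2/\mu}{\nu+\nu^2+1+1/\mu}\le \rho^C$, and since no blue arc is involved, Invariant~\ref{inv:full} is untouched. So the real work is for $o(k)=o\ge 2$.

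For $o\ge 2$ I would first take stock of the ``cheap'' material available. Write $N^-(k)=\{j_1,\dots,j_o\}$ in the order of $V_I$. There are $\binom{o}{2}$ blue arcs among $N^-(k)$; using Invariant~\ref{inv:full} for the vertices already processed together with Lemma~\ref{lem:intprefix} (which forces $N^-(r)\cap N^-(k)$ to be a prefix $\{j_1,\dots,j_{m_r}\}$ of $N^-(k)$ for every $r\in P(k)$ and forbids $N^-(r)$ from reaching past $N^-(k)$), the number of those blue arcs that have already been used up is at most $\sum_{r\in P(k)}(m_r-1)$, so the set $B_k$ of still-unused blue arcs among $N^-(k)$ has $|B_k|\ge \binom{o}{2}-\sum_{r\in P(k)}(m_r-1)$. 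In addition each $r\in P(k)$ yields a green arc $rk\in G_k$, giving $|P(k)|$ green arcs; if the underlying edge of such a green arc will later be needed as a blue arc, it is split (as described just before the lemma) into a special green arc used now and a special blue arc used later, so Lemma~\ref{lem:specialgreen} applies to it in place of Lemma~\ref{lem:green}. Combining the two counts, $|B_k|+|P(k)|\ge \binom{o}{2}-\sum_{r\in P(k)}(m_r-2)$, and since every $r$ that consumed a blue arc has $2\le m_r\le o$ and the number of consumed blue arcs is in any case at most $\binom{o}{2}$, this total is always comfortably large: quadratically many partners in $o$ are available, whereas only linearly many in $o$ are needed.

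Next I would build $C_k$ so that it always contains $k$ and all $o$ incoming red arcs $I_k$, and in addition: (i) a set of blue arcs drawn from $B_k$ that number at most $o-1$ and form a forest on $N^-(k)$, so that for every index $i$ the number of newly picked blue arcs inside $\{j_i,\dots,j_o\}$ is at most $o-i=o_{\ge j_i}(k)-1$, which is exactly what is needed to re-establish Invariant~\ref{inv:full} after $k$ has been processed; and (ii) as many green (and special green) arcs from $\{rk:r\in P(k)\}$ as are needed to make up the deficit whenever the unused blue arcs inside a prefix of $N^-(k)$ have been exhausted by earlier red vertices. Then I would bound $\rho_{C_k}$ by plugging in $D(j,k)\le 2D^*(j,k)$ and $t_k\le D^*(j,k)=\sigma_j\le \tfrac1\nu(1+\tfrac1\mu)t_k$ for each incoming red arc, $\sigma_k\ge(1+\nu)t_k$ for the vertex, $\rho_{ij}\le\rho^B$ with $D^*(i,j)\ge t_k$ for blue arcs (Lemmas~\ref{lem:blue} and~\ref{lem:specialblue}), $\rho_{rk}\le\rho^G$ with $D^*(r,k)\ge(\nu+\nu^2)t_k$ for ordinary green arcs (Lemma~\ref{lem:green}), and $\rho\le\rho^S$ with $D^*\ge\nu t_k$ for special green arcs (Lemma~\ref{lem:specialgreen}). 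Normalising $t_k=1$, the ratio is increasing in the red-arc weights and decreasing in the cheap-arc weights, so it is maximised at the extreme points of these ranges, and each extreme configuration reproduces one of the explicit fractions in the definition of $\rho^C$: the ``two red arcs paid for by one green arc'' configuration gives $\frac{\frac4\nu+\frac4{\mu\nu}+\nu+\nu^2+1}{\frac2\nu+\frac2{\mu\nu}+\nu+\nu^2}$, its special-green variant gives $\frac{\frac4\nu+\frac4{\nu\mu}+\nu+\frac1{\mu+1}}{\frac2\nu+\frac2{\mu\nu}+\nu}$, the small-$o$ leftovers that are paid for only by $k$ give $1+\frac1{2+\nu}$ and $\frac{4+5/\mu+\nu}{2+2/\mu+\nu}$, and $\rho^G$ and $\rho^S$ bound the contributions of the green and special arcs on their own. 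Finally, because the blue arcs added to $C_k$ form a forest on $N^-(k)$, the count in Invariant~\ref{inv:full} goes up by at most $o_{\ge j}(k)-1$ for every $j\in N^-(k)$, which is exactly the extra term for $k$ on its right-hand side, so the invariant again holds after $k$ is processed.

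The main obstacle is precisely the selection rule in the previous paragraph: one must choose the blue and green arcs added to $C_k$ so that all three requirements hold simultaneously -- they are genuinely unused, there are enough of them (counting a green arc as worth a fixed fraction of a blue arc for the ratio, which is where the constraint $1+\tfrac1\mu\le\nu+\nu^2$ and the particular optimal values of $\mu$ and $\nu$ enter), and the chosen blue arcs form a forest on $N^-(k)$ so the invariant is preserved. The one awkward feature is that a single earlier red vertex can consume many of the blue arcs of $N^-(k)$ while contributing only the one green arc $rk$; this is exactly why the invariant is designed to bound the consumed blue arcs \emph{linearly} in the overlap sizes $m_r$ rather than allowing all $\binom{m_r}{2}$ of them, so that the quadratically many blue arcs of $N^-(k)$ can never be fully used up and the green arcs only ever have to cover a bounded deficit.
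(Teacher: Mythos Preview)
Your high-level plan matches the paper's: handle $o(k)=1$ via Lemma~\ref{lem:onered}, and for $o(k)\ge 2$ form groups consisting of (i) one red arc together with~$k$, (ii) two red arcs with one (possibly special) green arc, and (iii) two red arcs with one blue arc, bounding each group's ratio separately. The per-group bounds you list are the right ones, except that $\frac{4+5/\mu+\nu}{2+2/\mu+\nu}$ is the bound for two red arcs together with one \emph{blue} arc, not a ``leftover paid for by~$k$''.

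The real departure from the paper, and the genuine gap in your argument, is the selection of blue arcs and the maintenance of Invariant~\ref{inv:full}. You propose to pick blue arcs forming a \emph{forest} on $N^-(k)$; you correctly observe that a forest automatically satisfies the suffix condition (at most $o-i$ chosen arcs inside $\{j_i,\dots,j_o\}$), so that adding $k$'s contribution to the right-hand side of the invariant suffices. But you never establish that a forest of the required size actually \emph{exists} among the still-unused blue arcs. Your ``quadratically many partners, linearly many needed'' count only shows there are enough unused arcs in total; those arcs could in principle be concentrated on a small vertex subset, so that the spanning forest of the unused-arc graph has far fewer than $\lceil z/2\rceil$ edges. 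You correctly flag this as ``the main obstacle'' in your final paragraph, but you do not resolve it.

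The paper does not use a forest at all. It first makes an explicit case split: if $|P(k)|\ge (o(k)-1)/2$ it uses \emph{only} green arcs, so the invariant is trivially maintained; if $|P(k)|<(o(k)-1)/2$ it selects $\lceil z/2\rceil$ unused blue arcs by a concrete greedy rule, preferring arcs whose earlier endpoint comes earlier in~$V_I$. The invariant is then verified, for each $j'\in N^-(k)$ separately, via a further sub-case analysis on whether $|P(k)|\ge\ell/2$ with $\ell=|N^-_{<j'}(k)|$: when $|P(k)|\ge\ell/2$ the green arcs already absorb enough red arcs that at most $o_{\ge j'}(k)-1$ blue arcs are used in total; when $|P(k)|<\ell/2$ one counts the $\ell(\ell+1)/2$ blue arcs with earlier endpoint in $N^-_{<j'}(k)$ and uses the earliest-first preference to argue those are exhausted before any arc inside $N^-_{\ge j'}(k)$ is touched. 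This counting, not a forest property, is what actually closes the argument.
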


\begin{proof}
We can assume $|N^-(k)|\ge 2$ as the case $|N^-(k)|=1$ can be
handled by Lemma~\ref{lem:onered} and does not use any blue arcs.

Let $I_k=\{k\}\cup \{jk\mid j\in N^-(k)\}$ be the set consisting of $k$
and all its incoming red arcs. The set $C_k$ will always be a superset
of~$I_k$, i.e., $C_k \supseteq I_k$.
Let $o(k)=|N^-(k)|$.

\noindent\textbf{Case 1:}
$|P(k)|\ge \frac{o(k)-1}{2}$.
In this case, we do not use any blue arcs and instead use only green arcs.
Let $\Gamma_k$ and $\Gamma_k^S$ be the green arcs and special green arcs, respectively,
in the set $\{kr\mid r\in P(k)\}$ of arcs between $k$
and vertices in $P(k)$. Note that $\Gamma_k\cup \Gamma_k^S=\{kr\mid r\in P(k)\}$.
We define $C_k=I_k\cup \Gamma_k\cup \Gamma_k^S$.
The ratio $\rho_{C_k}$ can be bounded as follows:
\begin{align*}
\rho_{C_k} &=
\frac{
\sigma_k + \sum_{j\in N^-(k)} D(j,k) + \sum_{rk\in \Gamma_k} D(r,k) + \sum_{rk\in \Gamma_k^S} D(r,k)
}{
\sigma_k + \sum_{j\in N^-(k)} D^*(j,k) + \sum_{rk\in \Gamma_k} D^*(r,k) + \sum_{rk\in \Gamma_k^S} D^*(r,k)
}
\\
&\le 
\frac{
\sigma_k + \sum_{j\in N^-(k)} 2 D^*(j,k) + \sum_{rk\in \Gamma_k} \rho^G D^*(r,k) + \sum_{rk\in \Gamma_k^S} \rho^S D^*(r,k)
}{
\sigma_k + \sum_{j\in N^-(k)} D^*(j,k) + \sum_{rk\in \Gamma_k} D^*(r,k) + \sum_{rk\in \Gamma_k^S} D^*(r,k)
}
\end{align*}
We use that $\frac{A+B}{C+D}\le \max\{A/C,B/D\}$ for $A,B,C,D>0$ (and the generalization
to sums with more than two terms in the enumerator and in the denominator)
and consider parts of the fraction as follows:
\begin{itemize}
\item For $k$ together with an arbitrary red arc $jk$, the ratio is
$\frac{t_k+p_k + 2D^*(j,k)}{t_k+p_k+D^*(j,k)}$.
As $D^*(j,k)\ge \min \{m_j,t_k\}\ge t_k$,
the ratio is bounded by
$\frac{t_k+p_k + 2t_k}{t_k+p_k+t_k}
=\frac{p_k+3t_k}{p_k+2t_k}$.
As $p_k\ge \nu t_k$, this ratio is bounded by
$\frac{(3+\nu)t_k}{(2+\nu)t_k}=1+\frac{1}{2+\nu}$.

\item We form groups of two red arcs with one green arc arbitrarily.
As $o(k)-1$ red arcs are left and the number of green arcs (special or not) is at least
$\frac{o(k)-1}{2}$, there are enough green arcs for this.
(If the last group consists of a single red arc together with a green
arc, the bound on the ratio is only better.)
The ratio for a group consisting of red arcs $ik$, $jk$ and
a non-special green arc $rk$ is bounded by
$\frac{2D^*(i,k)+2D^*(j,k)+\rho^G D^*(r,k)}{
D^*(i,k)+D^*(j,k)+D^*(r,k)}$.
As $\rho^G < 2$, the ratio is maximized when
$D^*(i,k)$ and $D^*(j,k)$ are as large as possible and $D^*(r,k)$
is as small as possible.
By Lemma~\ref{lem:green}, $D^*(r,k)\ge (\nu+\nu^2) t_k$.
Furthermore, by Lemma~\ref{lem:rededgejsmaller}
we know $\sigma_i\le \sigma_k$. Hence,
$D^*(i,k)=t_i+p_i\le m_i(1+\frac{1}{\mu}) \le t_k(\frac{1}{\nu} + \frac{1}{\mu\nu})$,
and similarly $D^*(j,k)\le t_k(\frac{1}{\nu} + \frac{1}{\mu\nu})$.
Hence, we get:
\begin{align*}
\frac{2D^*(i,k)+2D^*(j,k)+\rho^G D^*(r,k)}{D^*(i,k)+D^*(j,k)+D^*(r,k)}
&\le
\frac{4 t_k(\frac{1}{\nu} + \frac{1}{\mu\nu}) + \rho^G (\nu+\nu^2)t_k}{
2 t_k(\frac{1}{\nu} +\frac{1}{\mu\nu}) + (\nu+\nu^2) t_k}\\
&=\frac{4(\frac{1}{\nu} + \frac{1}{\mu\nu})+(\nu+\nu^2)\rho^G}{
2 (\frac{1}{\nu} +\frac{1}{\mu\nu})+(\nu+\nu^2) }\\
&=\frac{\frac{4}{\nu}+\frac{4}{\mu\nu}+\nu+\nu^2+1}{\frac{2}{\nu}+\frac{2}{\mu\nu}+\nu+\nu^2} 
\end{align*}
The ratio for a group consisting of red arcs $ik$, $jk$ and
a special green arc $rk$ is bounded by
$\frac{2D^*(i,k)+2D^*(j,k)+\rho^S D^*(r,k)}{
D^*(i,k)+D^*(j,k)+D^*(r,k)}$.
As $\rho^S < 2$, the ratio is maximized when
$D^*(i,k)$ and $D^*(j,k)$ are as large as possible and $D^*(r,k)$
is as small as possible.
By Lemma~\ref{lem:specialgreen}, $D^*(r,k)\ge \nu t_k$.
Furthermore, by Lemma~\ref{lem:rededgejsmaller}
we know $\sigma_i\le \sigma_k$. Hence,
$D^*(i,k)=t_i+p_i\le m_i(1+\frac{1}{\mu}) \le t_k(\frac{1}{\nu} + \frac{1}{\mu\nu})$,
and similarly $D^*(j,k)\le t_k(\frac{1}{\nu} + \frac{1}{\mu\nu})$.
Hence, we get:
\begin{align*}
\frac{2D^*(i,k)+2D^*(j,k)+\rho^S D^*(r,k)}{D^*(i,k)+D^*(j,k)+D^*(r,k)}
&\le
\frac{4 t_k(\frac{1}{\nu} + \frac{1}{\mu\nu}) + \rho^S \nu t_k}{
2 t_k(\frac{1}{\nu} +\frac{1}{\mu\nu}) + \nu t_k}\\
&=\frac{4(\frac{1}{\nu} + \frac{1}{\mu\nu})+\nu\rho^S}{
2 (\frac{1}{\nu} +\frac{1}{\mu\nu})+\nu }\\
&=\frac{\frac{4}{\nu}+\frac{4}{\nu\mu}+\nu+\frac{1}{\mu+1}}{\frac{2}{\nu}+\frac{2}{\mu\nu}+\nu} 
\end{align*}

\item If $C_k$ contains additional green arcs $rk$, the part
$\frac{\rho^G D^*(r,k)}{D^*(r,k)}$ of the fraction is bounded
by $\rho^G$.
\item If $C_k$ contains additional special green arcs $rk$, the part
$\frac{\rho^S D^*(r,k)}{D^*(r,k)}$ of the fraction is bounded
by $\rho^S$.
\end{itemize}
Hence, $\rho_{C_k}\le \max\{1+\frac{1}{2+\nu},
\frac{\frac{4}{\nu}+\frac{4}{\mu\nu}+\nu+\nu^2+1}{\frac{2}{\nu}+\frac{2}{\mu\nu}+\nu+\nu^2},
\frac{\frac{4}{\nu}+\frac{4}{\nu\mu}+\nu+\frac{1}{\mu+1}}{\frac{2}{\nu}+\frac{2}{\mu\nu}+\nu}
\rho^G,\rho^S\}$.

As we have not used any blue arcs, Invariant~\ref{inv:full} is maintained
trivially in Case~1.

\noindent\textbf{Case 2:} $|P(k)|< \frac{o(k)-1}{2}$. 
If $P(k)\neq \emptyset$,
let $k'$ be the last vertex in $V_R$ that was processed before $k$,
and let $j$ be the first element (i.e., with smallest $m_j$) in $N^-(k)$.
Applying Invariant~\ref{inv:full} to $k'$ (as the $k$ in the statement
of the invariant) and $j$, we get that the number of blue arcs
between vertices in $N^-(k)$ that have been used up in the analysis
of vertices in $P(k)$ is at most
$\sum_{r\in P(k)} (o_{\ge j}(r)-1)
\le \sum_{r \in P(k)} (o(k)-1)$,
where $o_{\ge j}(r)\le o(k)$ follows from Lemma~\ref{lem:intprefix}.
Thus, at most $|P(k)|(o(k)-1)<\frac{(o(k)-1)^2}{2}$ blue arcs
have been used up.
If $P(k)=\emptyset$, no blue arcs between vertices in $N^-(k)$ have been used up.
The total number of blue arcs between vertices
in $N^-(k)$ is ${o(k) \choose 2} = \frac{o(k)(o(k)-1)}{2}$, so the
number of blue arcs that have not yet been used up is at least
$$
\frac{o(k)(o(k)-1)}{2}
-
\frac{(o(k)-1)^2}{2}
=
\frac{o(k)-1}{2}
\,.
$$
Hence, we have at least $\left\lceil \frac{o(k)-1}{2}\right\rceil$
unused blue arcs.
Again, let $\Gamma_k$ and $\Gamma_k^S$ with
$\Gamma_k\cup \Gamma_k^S=\{kr\mid r\in P(k)\}$ be the sets of green arcs and special
green arcs between $k$ and vertices in $P(k)$.
Let $z=o(k)-1-2|\Gamma_k\cup\Gamma_k^S|>0$ be the number of red arcs that cannot be
covered by one group consisting of a red arc and $k$ and $|\Gamma_k\cup \Gamma_k^S|$ groups
consisting of two red arcs and one (special or non-special) green arc each.
Let $B_k$ be a set of $\lceil z/2\rceil \le \left\lceil \frac{o(k)-1}{2}\right\rceil$
(special or non-special) blue arcs between vertices in $N^-(k)$ that have not yet been used up,
giving preference to arcs whose earlier endpoint (with respect to the
order of $V_I$) comes earlier in~$V_I$.
Define $C_k=I_k \cup \Gamma_k \cup \Gamma_k^S \cup B_k$.
Note that Invariant~\ref{inv:full} holds for $k$ and $j$
after $k$ is processed because $k$ will use up at most
$\lceil z/2\rceil \le z \le o(k)-1$ blue arcs. We will show
later that it also holds for all vertices $j'$ that come after $j$ in~$V_I$.
The ratio $\rho_{C_k}$ can be bounded as follows:
\begin{align*}
\rho_{C_k} &=
\frac{
\sigma_k + \sum_{j\in N^-(k)} D(j,k) + \sum_{rk\in \Gamma_k} D(r,k)+ \sum_{ij\in \Gamma_k^S} D(i,j) + \sum_{ij\in B_k} D(i,j)
}{
\sigma_k + \sum_{j\in N^-(k)} D^*(j,k) + \sum_{rk\in \Gamma_k} D^*(r,k)+ \sum_{ij\in \Gamma_k^S} D^*(i,j) + \sum_{ij\in B_k} D^*(i,j)
}
\\
&\le 
\frac{
\sigma_k + \sum_{j\in N^-(k)} 2 D^*(j,k) + \sum_{rk\in \Gamma_k} \rho^G D^*(r,k) + \sum_{ij\in \Gamma_k^S} \rho^S D^*(i,j) + \sum_{ij\in B_k} \rho^B D^*(i,j)
}{
\sigma_k + \sum_{j\in N^-(k)} D^*(j,k) + \sum_{rk\in \Gamma_k} D^*(r,k)+ \sum_{ij\in \Gamma_k^S} D^*(i,j) + \sum_{ij\in B_k} D^*(i,j)
}
\end{align*}
We again use that $\frac{A+B}{C+D}\le \max\{A/C,B/D\}$ for $A,B,C,D>0$ (and the generalization
to sums with more than two terms in the enumerator and in the denominator)
and consider parts of the fraction as follows: We form one group consisting of a red arc and $k$,
$|\Gamma_k|+|\Gamma_k^S|$ groups consisting of a (possibly speial) green arc and two red arcs,
and $\lceil z/2\rceil$ groups consisting of a blue arc and two red arcs (the last group may contain a single red arc). The ratios for these parts can be analysed as follows:
\begin{itemize}
\item For $k$ together with an arbitrary red arc $jk$, the ratio is
bounded by $\frac{(3+\nu)t_k}{(2+\nu)t_k}=1+\frac{1}{2+\nu}$,
as shown in Case~1.
\item For two red arcs together with one non-special green arc, the ratio is
bounded by 
$\frac{\frac{4}{\nu}+\frac{4}{\mu\nu}+\nu+\nu^2+1}{\frac{2}{\nu}+\frac{2}{\mu\nu}+\nu+\nu^2}$,
as shown in Case~1.
\item For two red arcs together with one special green arc, the ratio is
bounded by 
$\frac{\frac{4}{\nu}+\frac{4}{\nu\mu}+\nu+\frac{1}{\mu+1}}{\frac{2}{\nu}+\frac{2}{\mu\nu}+\nu}$
as shown in Case~1.
\item The ratio for a group consisting of red arcs $i_1k$ and $i_2k$ and
a (special or non-special) blue arc $ij$ is bounded by
$\frac{2D^*(i_1,k)+2D^*(i_2,k)+\rho^B D^*(i,j)}{
D^*(i_1,k)+D^*(i_2,k)+D^*(i,j)}$.
As $\rho^B < 2$, the ratio is maximized when
$D^*(i_1,k)$ and $D^*(i_2,k)$ are as large as possible and $D^*(i,j)$
is as small as possible.
By Lemma~\ref{lem:blue}, $D^*(i,j)\ge t_k$.
Furthermore, by Lemma~\ref{lem:rededgejsmaller}
we know $\sigma_i\le \sigma_k$. Hence,
$D^*(i_1,k)=t_{i_1}+p_{i_1}\le m_{i_1}(1+\frac{1}{\mu}) \le t_k(\frac{1}{\nu} + \frac{1}{\mu\nu})$,
and similarly $D^*(i_2,k)\le t_k(\frac{1}{\nu} + \frac{1}{\mu\nu})$.
Hence, we get:
\begin{align*}
\frac{2D^*(i_1,k)+2D^*(i_2,k)+\rho^B D^*(i,j)}{D^*(i_1,k)+D^*(i_2,k)+D^*(i,j)}
&\le
\frac{4 t_k(\frac{1}{\nu} + \frac{1}{\mu\nu}) + \rho^B t_k}{
2 t_k(\frac{1}{\nu} +\frac{1}{\mu\nu}) + t_k}\\
&=\frac{4(\frac{1}{\nu} + \frac{1}{\mu\nu})+\rho^B}{
2 (\frac{1}{\nu} +\frac{1}{\mu\nu})+1 }\\
&=\frac{4+\frac{5}{\mu}+\nu}{2+\frac{2}{\mu}+\nu}
\end{align*}
\end{itemize}
Hence, $\rho_{C_k}\le \max\{1+\frac{1}{2+\nu},
\frac{\frac{4}{\nu}+\frac{4}{\mu\nu}+\nu+\nu^2+1}{\frac{2}{\nu}+\frac{2}{\mu\nu}+\nu+\nu^2},
\frac{\frac{4}{\nu}+\frac{4}{\nu\mu}+\nu+\frac{1}{\mu+1}}{\frac{2}{\nu}+\frac{2}{\mu\nu}+\nu},
\frac{4+\frac{5}{\mu}+\nu}{2+\frac{2}{\mu}+\nu},
\}$.

It remains to show that Invariant~\ref{inv:full} is maintained.
Let $j'$ be an arbitrary vertex in $N^-(k)$.
For the case $j'=j$ (where $j$ is still the earliest vertex in
$N^-(k)$, based on the order of $V_I$), we have already shown that the invariant
holds. Therefore, assume that $j'\neq j$.
Let $P_{\ge j'}(k)$ be the set of the vertices in $P(k)$ that
have a red arc from~$j'$. Let $P_{<j'}(k)=P(k)-P_{\ge j'}(k)$.

Let $N^-_{\ge j'}(k)=\{s\in N^-(k)\mid m_s\ge m_{j'}\}$
and $N^-_{<j'}(k)=\{s\in N^-(k)\mid m_s< m_{j'}\}$ be a partition
of $N^-(k)$ into the elements before $j'$ and the elements
from $j'$ onward (in the order of $V_I$).
Let $\ell = |N^-_{<j'}(k)|$.
If $k'$, the vertex processed just before $k$, exists
and is in $P_{\ge j'}(k)$,
we apply Invariant~\ref{inv:full} to $k'$ and $j'$
to obtain
that the number of blue arcs between vertices in $\{s\in N^-(k')\mid m_s\ge m_{j'}\}$,
and therefore also between vertices in $N^-_{\ge j'}(k)$,
that have been used up in the analysis of vertices in $P_{\ge j'}(k)$
is at most $\sum_{r\in P_{\ge j'}(k)}(o_{\ge j'}(r)-1)$.
If $k'$ does not exist (i.e., if $k$ is the first vertex
to be processed) or is not in $P_{\ge j'}(k)$, then
no blue arcs between vertices in $N^-_{\ge j'}(k)$ have
been used up before processing~$k$.

If $|P(k)|\ge \ell/2$, then at least $1+2|\Gamma_k\cup\Gamma_k^S|\ge 1+ \ell$ of $k$'s red arcs
are grouped with $k$ or with green arcs, and the number
$z=o(k)-1-2|\Gamma_k\cup \Gamma_k^S|$ of red arcs that need to be grouped with blue arcs
is bounded by $z\le o(k)-1-\ell$. Therefore, the number of blue arcs
used in the analysis of $k$ is $\lceil z/2 \rceil \le z \le o(k)-\ell-1$.
Using $o_{\ge j'}(k)=o(k)-\ell$, we get that at most $o_{\ge j'}(k)-1$
blue arcs are used in the analysis of $k$, and hence
the total number of blue arcs between vertices in $N^-_{\ge j'}(k)$
that have been used just after $k$ has been processed is bounded by
$\sum_{r\in P_{\ge j'}(k)\cup \{k\} }(o_{\ge j'}(r)-1)$, and
Invariant~\ref{inv:full} holds.

Now, assume $|P(k)| < \ell/2$.
Note that $|P(k)|\le \frac{\ell}{2}-\frac12 = \frac{\ell-1}{2}$
as $|P(k)|$ and $\ell$ are integers.
In the analysis of vertices in $P_{<j'}(k)$, all the blue arcs
between vertices in $N^-(k)$ that have been used must have their
earlier endpoint in $N^-_{<j'}(k)$.
As observed earlier, by Invariant~\ref{inv:full} applied to $j$ just before $k$ is processed,
the number of blue arcs between vertices in $N^-(k)$ that
have been used in the analysis of vertices in $P(k)$
is at most
$\sum_{r\in P(k)} (o_{\ge j}(r)-1)$.
As $z=o(k)-1-2|\Gamma_k\cup\Gamma_k^S|=o(k)-1-2|P(k)|$,
we use $\lceil z/2\rceil =\left\lceil \frac{o(k)-1-2|P(k)|}{2} \right\rceil\le o(k)-1$
blue arcs in the analysis of~$k$.
The total number of blue arcs between vertices in $N^-(k)$ that
have been used after $k$ is processed is then at most
$$
o(k)-1 + \sum_{r\in P(k)} (o_{\ge j}(r)-1) \,.
$$
There are ${\ell \choose 2}=\frac{\ell(\ell-1)}{2}$
blue arcs between vertices in $N^-_{<j'}(k)$, and
at least $\ell$ further blue arcs with one endpoint
in $N^-_{<j'}(k)$ and one endpoint in $N^-_{\ge j'}(k)$.
This is a total of $\frac{\ell(\ell+1)}{2}$
blue arcs, all having their earlier endpoint
in $N^-_{<j'}(k)$.
As blue arcs with earlier endpoints are used first
in the analysis of a vertex, blue arcs with earlier
endpoint in $N^-_{\ge j'}(k)$ are only used (by vertices
with a red arc from $j$) if all blue arcs
with earlier endpoint in $N^-_{<j'}(k)$ have been used up.
Thus, the number of blue arcs between vertices in $N^-_{\ge j'}(k)$
that are used up just after $k$ has been processed
is at most
\begin{align*}
&o(k)-1 + \sum_{r\in P(k)} (o_{\ge j}(r)-1)
- \frac{\ell(\ell+1)}{2} \\
& \le \ell+(o(k)-\ell-1) + \sum_{r \in P_{<j'}(k)} (\ell-1)
+ \sum_{r \in P_{\ge j'}(k)} (\ell + (o_{\ge j}(r)-\ell-1))) 
- \frac{\ell(\ell+1)}{2} \\
& = \ell+(o(k)-\ell-1) + \sum_{r \in P_{<j'}(k)} (\ell-1)
+ \sum_{r \in P_{\ge j'}(k)} (\ell + (o_{\ge j}(r)-\ell-1)))
- \frac{\ell(\ell+1)}{2} \\
& = (|P(k)|+1)\ell + (o(k)-\ell-1)+ \sum_{r \in P_{\ge j'}(k)} (o_{\ge j}(r)-\ell-1)
- \frac{\ell(\ell+1)}{2} \\
& \le (\frac{\ell-1}{2} +1)\ell + (o_{\ge j'}(k)-1) + \sum_{r \in P_{\ge j'}(k)} (o_{\ge j'}(r)-1)
- \frac{\ell(\ell+1)}{2} \\
& = \frac{(\ell+1)\ell}{2} + \sum_{r \in P_{\ge j'}(k)\cup \{k\} } (o_{\ge j'}(r)1)
- \frac{\ell(\ell+1)}{2} \\
& = \sum_{r \in P_{\ge j'}(k) \cup \{k\}} (o_{\ge j'}(r)-1)
\end{align*}
Therefore, Invariant~\ref{inv:full} also holds for $k$ and $j'$.
As $j'$ was arbitrary, we have shown that Invariant~\ref{inv:full}
is maintained.
\end{proof}

By Lemma~\ref{lem:inv}, we know that for every $k\in V_R$ there
is a set $C_k$ of vertices and arcs such that $\rho_{C_k}\le \rho^C$.
Furthermore, the sets $C_k$ are pairwise disjoint; if an arc is
used as a blue arc in one set and as a green arc in another set,
it is split into a special blue arc and a special green arc,
and each set uses one of the two special arcs.
Let $V'$ denote the vertices that are not in any $C_k$ and note
that any vertex $j\in V'$ delays itself by $\sigma_j$ in both
the optimal schedule and the algorithm's schedule.
Let $A'$ denote the arcs that are not in any $C_k$ and note
that any arc $ij\in A'$ has
$D(i,j)\le \rho^N D^*(i,j)$ with
$\rho^N=\max\left\{ \frac{2\mu+1}{\mu+1}, 1+\nu \right\}$
by Lemma~\ref{lem:notredratio}.
We use $D(C_k)$ and $D^*(C_k)$ to denote the sum of the delays
in $C_k$ in the algorithm's schedule and in the optimal schedule,
respectively.
As the competitive ratio is
\begin{align*}
\frac{\alg}{\opt}
&= \frac{\sum_{j\in V'} \sigma_j + \sum_{ij\in A'} D(i,j) + \sum_{k\in V_R} D(C_k)}
{\sum_{j\in V'} \sigma_j + \sum_{ij\in A'} D^*(i,j) + \sum_{k\in V_R} D^*(C_k)}\\
& \le \frac{\sum_{j\in V'} \sigma_j + \sum_{ij\in A'} \rho^N D^*(i,j) + \sum_{k\in V_R} \rho^C D^*(C_k)}
{\sum_{j\in V'} \sigma_j + \sum_{ij\in A'} D^*(i,j) + \sum_{k\in V_R} D^*(C_k)}
\,,
\end{align*}
we get that the ratio is bounded by
$\max\{1,\rho^C,\rho^N\}=\max\{\rho^C,\rho^N\}$.
This completes the proof of Theorem~\ref{th:betaSort}.

\subsection{Lower bounds on the competitive ratio of $\beta$-SORT} \label{subsec:arbitrary:lower}
First, assume $\beta\le 1$.
Consider the following instance of the problem (where $M$ is a fixed positive number and $\epsilon>0$ is infinitesimally small):
$\gamma n$ \emph{short} jobs with $t_j=0,p_j=M$ and $(1-\gamma)n$ \emph{long} jobs
with $t_j=\frac{M}{\beta}-2\epsilon , p_j=M-\epsilon$.
The algorithm schedules the $\gamma n$ tests of short jobs, then the $(1-\gamma)n$ tests of long jobs,
then the processing operations of the long jobs, and finally the processing operations of the short jobs.
The optimal schedule schedules first all short jobs and then all long jobs.
To determine the limit of the ratio for large~$n$, we can assume $\epsilon=0$ and
omit terms linear in $n$ when we calculate the objective values of the optimal schedule
and the schedule produced by $\beta$-SORT. We have:
$$
\frac{\opt}{n^2} \approx M\gamma^2/2+M\gamma(1-\gamma)+(1+\frac{1}{\beta})M(1-\gamma)^2/2
$$
and
$$
\frac{\alg}{n^2} \approx (1-\gamma)\frac{M}{\beta} + M/2
$$
The ratio $\alg/\opt$ is maximized for $\gamma=(\beta+2-\sqrt{\beta(\beta+4)})/2$,
yielding ratio $\frac12 (\sqrt{\frac{\beta+4}{\beta}}+1)$. For $0<\beta\le 1$, this
function is decreasing in $\beta$ and obtains value $\phi=\frac15(\sqrt{5}+1)\approx 1.618$
for $\beta=1$.

Now, assume $\beta\ge 1$.
Consider the following instance of the problem (where $\epsilon>0$ is again infinitesimally small):
$\gamma n$ \emph{short} jobs with $t_j=1+2\epsilon,p_j=0$ and $(1-\gamma)n$ \emph{long} jobs
with $t_j=1, p_j=\beta+\epsilon$. The algorithm schedules the tests of the long jobs, then
the processing parts of the long jobs, then the short jobs (with each test followed immediately
by the execution of the tested job). The optimum schedule schedules first all short jobs and
then all long jobs.
We have (again with $\epsilon=0$ and omitting terms linear in~$n$):
$$
\frac{\opt}{n^2} = \gamma^2/2+\gamma(1-\gamma)+(1-\gamma)^2/2(1+\beta)
$$
and
$$
\frac{\alg}{n^2} = (1-\gamma)+\beta(1-\gamma)^2/2+\beta(1-\gamma)\gamma+ \gamma^2/2
$$
The ratio $\alg/\opt$ is maximized for
$\gamma=\frac{-1+2\beta+2\beta^2-\sqrt{1-4\beta+4\beta^2+4\beta^3}}{2\beta^2}$,
yielding ratio $(\sqrt{4\beta(\beta^2+\beta-1)+1}+1)/2\beta$. This function
equals $\phi$ for $\beta=1$ and increases with $\beta$ for $\beta>1$.

\begin{figure}
\centerline{\scalebox{0.8}{\includegraphics{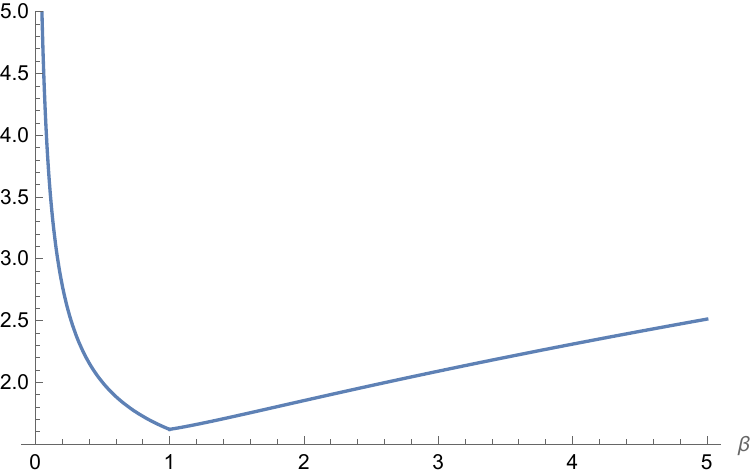}}}

\caption{Lower bound on the competitive ratio of $\beta$-SORT as a function of $\beta$.}
\label{fig:lbexampleratio}
\end{figure}%
This shows that $\beta$-SORT with $\beta=1$ is not better than 1.618-competitive,
and for every $\beta\neq 1$ we get a larger lower bound on the competitive ratio
of $\beta$-SORT. For example, we get lower bounds of $2$ for $\beta=0.5$,
$1.688$ for $\beta=4/3$, and $1.851$ for $\beta=2$. A plot of the lower bound
as a function of $\beta$ is shown in Fig.~\ref{fig:lbexampleratio}.

\section{Uniform Test Times}\label{sec:uniform}
In this section we assume that $t_j=1$ for all jobs~$j$. First,
we show the following lower bound.

\begin{theorem}
\label{th:obliglowerx0}
No deterministic algorithm can have competitive ratio strictly smaller than
$\sqrt{2}$ for the setting with obligatory tests and uniform test times.
\end{theorem}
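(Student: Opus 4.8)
The plan is to show that for every $\varepsilon>0$ there is an instance of $\graham{1}{\test=1}{\objsum}$ on which every deterministic algorithm satisfies $\alg/\opt\ge\sqrt2-\varepsilon$; since the competitive ratio is a supremum over instances, this implies the claimed bound. Fix a parameter $\gamma\in(0,1)$ (to be chosen at the end) and a large integer~$n$, and let $m=\lceil\gamma n\rceil$. The instance has $n$ jobs with $\test=1$ for all~$j$, and the adversary reveals processing times online as follows: the first $m$ jobs whose test is completed by the algorithm are \emph{medium}, with $p_j=1$ (so $\sigma_j=2$), and every remaining job is \emph{tiny}, with $p_j=0$ (so $\sigma_j=1$). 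The only adaptivity is that which jobs become medium depends on the algorithm's testing order, so after relabeling we may assume job~$i$ is the $i$-th job tested; since the machine runs one operation at a time and all tests are obligatory, the first $m$ test operations in the algorithm's schedule are then necessarily tests of medium jobs.

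First I would compute $\opt$. The offline optimum runs the jobs in SPT order with each test immediately followed by its processing part, so by~\eqref{eq:uniformsum} (the $m$ medium jobs have the largest processing times and the $n-m$ tiny jobs processing time~$0$),
$$
\opt=\frac{n(n+1)}{2}+\frac{m(m+1)}{2}=\frac{n^2}{2}\,(1+\gamma^2)\,(1+o(1)).
$$
Then I would lower-bound $\alg$ uniformly over all algorithms. Write $\objsum=\int_0^\infty U(t)\,dt$, where $U(t)$ is the number of jobs not yet completed at time~$t$; it suffices to upper-bound, for every~$t$ and every schedule consistent with the instance, the number of jobs completed by time~$t$. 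A tiny job requires $1$ unit of machine time to complete and a medium job requires~$2$, and no tiny test can be executed before all $m$ medium tests are done; a short case analysis then shows that at most $\lfloor t/2\rfloor$ jobs can be completed by time $t\le 2m$ and at most $\min\{n,\,t-m\}$ by time $t\ge 2m$, and that the schedule executing $\tau_1,\pi_1,\tau_2,\pi_2,\dots,\tau_m,\pi_m$ followed by the tiny jobs attains both bounds for every~$t$, hence minimizes $\objsum$. Evaluating that schedule,
$$
\alg\ \ge\ m(m+1)+2m(n-m)+\frac{(n-m)(n-m+1)}{2}=\frac{n^2}{2}\,(1+2\gamma-\gamma^2)\,(1+o(1)).
$$
Combining the two estimates gives $\alg/\opt\ \ge\ \dfrac{1+2\gamma-\gamma^2}{1+\gamma^2}-o(1)$, and this expression is maximized at the root $\gamma=\sqrt2-1$ of $\gamma^2+2\gamma-1=0$, where it equals $\sqrt2$. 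Letting $n\to\infty$ yields ratio at least $\sqrt2-\varepsilon$ for every $\varepsilon>0$.

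The step I expect to be the main obstacle is the uniform lower bound on $\alg$: one must rule out that some cleverer interleaving of medium tests, medium processing parts and tiny tests completes strictly more jobs by some time~$t$ than the naive ``process each medium immediately, then the tiny jobs'' schedule. The instance is designed so that no algorithm can escape: executing the medium processing parts early delays all $(1-\gamma)n$ tiny jobs, which $\opt$ completes first since they are the smallest, whereas deferring them forces the medium jobs --- whose tests the algorithm was compelled to perform before any tiny test --- to complete near the very end of the schedule; the crossover between these two penalties is exactly where $\sqrt2$ appears, and pinning down $\gamma$ to be its maximizer is what makes the two penalties balance. Minor points that still need care are the $\Theta(n)$ additive error terms and the ceiling $m=\lceil\gamma n\rceil$, which are absorbed in the limit $n\to\infty$, and the fact that the online and (mildly) adaptive nature of the setting only restricts which schedules the algorithm can produce, so a lower bound on $\objsum$ over all instance-consistent schedules is automatically a lower bound on $\alg$.
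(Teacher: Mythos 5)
Your proposal is correct and follows essentially the same approach as the paper: the identical adversarial instance (the first $\gamma n$ jobs tested get $p_j=1$, the rest $p_j=0$), the same computation of $\opt$ and of the algorithm's best-possible schedule, and the same optimization at $\gamma=\sqrt{2}-1$ yielding $\sqrt{2}$. The only difference is in justifying the lower bound on $\alg$: you bound, for every time $t$, the number of jobs any consistent schedule can have completed (using that all $\gamma n$ long tests precede any short test) and integrate, whereas the paper uses an exchange argument showing that executing each processing part immediately after its test is optimal; both sub-arguments are valid and of comparable difficulty.
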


\begin{proof}
Let an arbitrary deterministic algorithm for the problem
be given.
Consider the following adversarial construction, with a parameter
$\gamma$, $0\le\gamma\le 1$, whose value will be determined
later:
The adversary presents $n$ jobs. For the first $\gamma n$ jobs that are
tested by the algorithm, the adversary sets $p_j=1$. For the remaining
$(1-\gamma)n$ jobs, the adversary sets $p_j=0$. We call the jobs
with $p_j=0$ \emph{short} jobs and those with $p_j=1$ \emph{long} jobs.

The optimal schedule will schedule the jobs in SPT order, i.e., it will
first execute the $(1-\gamma)n$ short jobs and after that the $\gamma n$ long jobs,
always executing the processing part of a job right after its test.
The objective value of the optimal schedule can be written
as the sum of three parts:
\begin{align*}
P_1 & = \sum_{j=1}^{(1-\gamma)n} j = \frac{(1-\gamma)n((1-\gamma)n+1)}{2}\\
P_2 & = (1-\gamma) n \cdot \gamma n = \gamma(1-\gamma) n^2\\
P_3 & = \sum_{j=1}^{\gamma n} (2j) = \gamma n (\gamma n + 1)
\end{align*}
Here, $P_1$ is the sum of the completion times of the short jobs,
$P_2$ is the total delay added by the short jobs to the completion
times of the long jobs, and $P_3$ is the sum of the completion
times of the long jobs calculated as if their schedule started
at time~$0$. As $\opt=P_1+P_2+P_3$, we get:
$$
\opt = \frac{\gamma^2+1}{2} n^2 + \frac{1+\gamma}{2} n
$$
For the algorithm, we claim that it is best to schedule
the processing part of each job right after its test. 
For short jobs, this is obvious, because a short job
that was scheduled at a later time could be moved
forward to right after its test without affecting
the completion times of other jobs. This implies
in particular that no two jobs complete at the same
time, and that the completion times of any two
jobs are at least one time unit apart. Furthermore,
as it is clear that introducing idle time into
the schedule cannot help, we can assume that
all tests start and end at integral times and
that all job completion times are integers.
Now assume for a contradiction that some long job $j$
is the first job for which the
test completes at some time $\tau$ but the processing
part completes at some time $\tau+k$ for $k>1$.
This implies that no job completes at time
$\tau$, and $r \le k-1$ jobs have completion
times in the interval $[\tau,\tau+k-1]$.
Moving job $j$ forward to right after its test (and
shifting all tests and job executions from time
$\tau$ to $\tau+k-1$ one time unit later)
produces a schedule in which the completion
time of job $j$ decreases by $k-1$ while
the completion times of only $r\le k-1$ jobs
increase by~$1$. Therefore, the modified schedule
has an objective value that is the same or better.
By repeating this transformation, we obtain
a schedule where the processing part of each
job is executed right after its test, without
increasing the objective value.
Therefore, executing the processing part of
each job right after its
test is the best the algorithm can do.

The objective value of the schedule produced
by the agorithm is then
$\alg=P_3+P_2'+P_1$, where $P_2'=2\gamma n(1-\gamma) n=2\gamma(1-\gamma)n^2$
is the total delay that the long jobs with total length
$2\gamma n$ add to the completion times of the $(1-\gamma)n$
short jobs. Thus, the objective value $\alg$ of the
schedule produed by the algorithm is:
$$
\alg = \frac{1+2\gamma-\gamma^2}{2}n^2 + \frac{1+\gamma}{2} n
$$
The competitive ratio, as a function of $\gamma$, is then:
$$
\rho(\gamma) =\frac{\alg}{\opt} = 
\frac{
\frac{1+2\gamma-\gamma^2}{2}n^2 + \frac{1+\gamma}{2} n
}{
\frac{\gamma^2+1}{2} n^2 + \frac{1+\gamma}{2} n
}
= \frac{1+2\gamma-\gamma^2 + \frac{1+\gamma}{n}}
{\gamma^2+1  + \frac{1+\gamma}{n}}
$$
For fixed $\gamma$, $\rho(\gamma)$ increases with $n$, and the ratio
converges to $\frac{1+2\gamma-\gamma^2}{\gamma^2+1}$ for $n\to\infty$.
The function $f(\gamma)=\frac{1+2\gamma-\gamma^2}{\gamma^2+1}$
has a global maximum at $\gamma_0=\sqrt{2}-1$ with $f(\gamma_0)=\sqrt{2}$,
as can be shown using standard methods from calculus.
Thus, if the adversary presents instances with arbitrarily large $n$
and sets $\gamma$ to the multiple of $\frac{1}{n}$ closest to
$\sqrt{2}-1$, it can force the algorithm to have competitive
ratio arbitrarily close to $\sqrt{2}$.
\end{proof}

Now, we consider algorithms for the problem
with uniform test times.
The simple algorithm that first tests all jobs
and, after all tests have been completed executes the
processing parts of all jobs in SPT order can
be shown to be $2$-competitive: If $\opt_P$ denotes
the sum of completion times of scheduling the processing
parts (ignoring the tests) of the $n$ given jobs in SPT order,
then the optimal sum of completion times
is $\frac{n(n+1)}{2}+\opt_P$ while the algorithm has
sum of completion times $n^2+\opt_P$.

To beat the competitive ratio of~$2$ (and the ratio $1.861$
that we have proved for $1$-SORT), we
propose algorithm SIDLE (Short Immediate, Delay Long Executions)
that has a parameter~$y>0$. The algorithm
tests all jobs, and executes a job $j$ immediately after its
test if $p_j\le y$ (\emph{short} job). The jobs $j$ with $p_j>y$ (\emph{long} jobs) are executed
in SPT order at the end of the schedule, after all jobs have been
tested and all short jobs executed. The algorithm is inspired
by algorithm \textsc{THRESHOLD} from~\cite{DurrEMM20}.

\begin{theorem}
\label{th:sidle}
Algorithm SIDLE with $y=y_0 \approx 1.35542$ has competitive
ratio at most
$\frac{1}{2}(1-y_0+y_0^2+\sqrt{9-2y_0-y_0^2-2y_0^3+y_0^4})
\approx 1.58451\le 1.585$.
Here, $y_0$ is the second root of
the polynomial $2y^3-9y^2+10y-2$.
\end{theorem}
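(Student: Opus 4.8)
The plan is to pin down both schedules explicitly, reduce to a two‑parameter family of worst‑case instances by exchange arguments, and then carry out the resulting optimization; the value $y_0$ and the cubic $2y^3-9y^2+10y-2$ only appear at the very end. First I would describe the two schedules. Call a job \emph{short} if $p_j\le y$ and \emph{long} otherwise, with $s$ and $\ell=n-s$ of each. The optimal schedule is SPT on the sizes $\sigma_j=1+p_j$; since every short job is smaller than every long job, it runs all short jobs before all long jobs, and its value is given by~(\ref{eq:obj_opt}). In SIDLE's schedule the first part consists of all $n$ tests interleaved with the processing parts of the short jobs (each short test immediately followed by its processing), and the second part consists of the long jobs' processing parts in SPT order. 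Since the algorithm has no information when it picks a test order, for an upper bound it suffices to assume the adversary picks this order; a short exchange argument then shows the worst case places all $\ell$ long tests first (each then delays every short job) and runs the short jobs in non‑increasing order of $p_j$ (anti‑SPT among themselves), which fixes $\alg$ as a function of the instance.

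Next I would reduce to a worst‑case family. By exchange arguments on the instance — pressing each long job's processing time down towards the threshold $y$, pushing each short job's processing time to $0$ or to $y$, and equalizing groups — one shows $\alg/\opt$ is maximized, in the large‑$n$ regime (lower‑order terms only help the algorithm, since $\sum_j\sigma_j$ contributes identically to $\alg$ and $\opt$), by instances made of $\alpha n$ long jobs with $p_j\to y^+$, $\beta n$ short jobs with $p_j=y$, and $\gamma n$ short jobs with $p_j=0$, where $\alpha+\beta+\gamma=1$. Writing $\delta=\alpha+\beta$ for the fraction of size‑$(1{+}y)$ jobs, a direct computation from the schedule above (long tests, then the big‑short blocks, then the small‑short blocks, then the deferred long parts) gives $\opt=\tfrac{1}{2}(1+y\delta^2)\,n^2+O(n)$ and $\alg=\bigl(\tfrac{1}{2}+\alpha-\tfrac{\alpha^2}{2}+y(\tfrac{\delta^2}{2}+\gamma\beta)\bigr)n^2+O(n)$.

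Then I would optimize. For fixed $\gamma$ the $\alg$ expression is concave in $\alpha$ and maximized at $\alpha=1-y\gamma$ (feasible since $y>1$ and the eventual optimal $\gamma$ satisfies $\gamma<1/y$); substituting, $\alg/\opt$ becomes the ratio of quadratics $\bigl(2+y-2y\gamma+y(y-1)\gamma^2\bigr)/\bigl(1+y-2y\gamma+y\gamma^2\bigr)$. Setting its $\gamma$‑derivative to zero yields $(y^2-2y)\gamma^2+(3+y-y^2)\gamma-1=0$, whose discriminant simplifies to exactly $9-2y-y^2-2y^3+y^4$; plugging the root back shows that the worst ratio achievable against threshold $y$ equals $\tfrac{1}{2}\bigl(1-y+y^2+\sqrt{9-2y-y^2-2y^3+y^4}\bigr)$. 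Minimizing over $y$ — after isolating and squaring away the square root, the stationarity condition reduces to $2y^3-9y^2+10y-2=0$, and $y_0\approx 1.35542$ is its relevant (second) root — gives the claimed bound $\approx 1.58451\le 1.585$.

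The step I expect to be the main obstacle is the reduction to the two‑parameter family: justifying by exchange arguments that spreading every short processing time to $\{0,y\}$ and pressing every long processing time to $y$ can only increase $\alg/\opt$ is delicate, because SIDLE processes the short jobs in an adversarial (anti‑SPT) order, so perturbing a single $p_j$ also changes the position of that job inside $\alg$'s schedule and the delays it imposes on all later jobs, and these interacting effects must be tracked carefully (including that, in the limit, a long job and a ``big'' short job have the same size but behave differently). Once the structural lemma and the closed forms for $\alg$ and $\opt$ are in place, the two nested optimizations and the verification that $y_0$ is the stated root are routine calculus and algebra.
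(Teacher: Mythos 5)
Your proposal is correct and follows the same overall strategy as the paper's proof: identify the worst test order for SIDLE (all long tests first, then the short jobs in reverse SPT order), reduce the short processing times to the two values $\{0,y\}$ and press the long processing times down to $y^+$ (the paper does the former by citing Lemma~\ref{lem:Durr} from D\"urr et al.\ and the latter by noting that $\opt_L$ enters $\alg$ and $\opt$ identically, so the ratio grows as $\opt_L$ shrinks), drop lower-order terms, and optimize the resulting two-parameter ratio. I checked that your closed forms agree with the paper's expression~(\ref{eq:sidleratio1})--(\ref{eq:sidleratio2}) after the change of variables (your fraction of zero-length jobs is the paper's $\alpha\gamma$, your long fraction is $1-\alpha$), and your analytic optimization is consistent: maximizing over the long fraction gives $\alpha_{\mathrm{long}}=1-y\gamma$, the stationarity condition in $\gamma$ is $(y^2-2y)\gamma^2+(3+y-y^2)\gamma-1=0$ with discriminant exactly $9-2y-y^2-2y^3+y^4$, and the worst-case parameters you obtain match the paper's ($\alpha\approx 0.6446$, $\alpha\gamma\approx 0.4756$). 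The genuine difference is the last step: the paper maximizes over $(\alpha,\gamma)$ and minimizes over $y$ with Mathematica, whereas you carry the optimization out by hand, which buys a self-contained derivation of the closed form $\tfrac12(1-y+y^2+\sqrt{9-2y-y^2-2y^3+y^4})$ and of the cubic $2y^3-9y^2+10y-2$ rather than a computer-algebra certificate. Two points would need care in a full write-up: (i) the reduction of short jobs to $\{0,y\}$, which you rightly flag as delicate, is exactly what Lemma~\ref{lem:Durr} provides (ratio of two functions that are linear in each $p_j$ as long as the relative order is unchanged, hence extremal at endpoints), so either cite it or reprove it in that form; and (ii) discarding the $O(n)$ terms must be justified as in the paper, by checking that the linear term in the numerator is dominated by that in the denominator, so that omitting both cannot decrease the ratio---your remark that $\sum_j\sigma_j$ contributes identically to $\alg$ and $\opt$ is not by itself the right justification, though the needed inequality does hold here.
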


\begin{proof}
Assume that there are $k=\alpha n$ short jobs (i.e., jobs $j$ with $p_j\le y$) and $n-k
= (1-\alpha)n$ long jobs (i.e., jobs $j$ with $p_j>y$).

Let $P_S$ be the sum of the processing times of all short
jobs (not including their test times).

Let $\opt_S$ denote the cost of an SPT-schedule for the
short jobs (without their tests), and let $\opt_L$ denote the
cost of an SPT-schedule for the long jobs (without their tests).
Let $\opt$ be the cost of the optimal
schedule for all jobs. We have:
$$
\opt = \frac{k(k+1)}{2}+\opt_S + (P_S+k) \cdot (n-k) + \frac{(n-k)(n-k+1)}{2} + \opt_L
$$
Here, the term $\frac{k(k+1)}{2}+\opt_S$ is the cost of an SPT-schedule
for the short jobs (including their tests), $(P_S+k)\cdot (n-k)$ is the total delay that
the short jobs cause for the $n-k$ long jobs, and $\frac{(n-k)(n-k+1)}{2} + \opt_L$
is the cost of an SPT schedule for the long jobs (including their tests),
cf.\ Equation~(\ref{eq:uniformsum}) in Section~\ref{sec:prelim}.

It is clear that the worst order in which algorithm SIDLE could test the
jobs is: First the $n-k$ tests of long jobs in arbitrary order, then the
$k$ tests of short jobs in reverse SPT order.
We split the schedule produced by SIDLE for this order into the
following three parts:
\begin{itemize}
\item $n-k$ tests of long jobs. They increase the cost of the remaining
two parts by $(n-k)n$.
\item $k$ tests of short jobs, each followed immediately by the
execution of the short job that was just tested.
Denote the sum of the completion times of the short jobs without their
tests in reverse SPT order by $\alg_S$. Then the sum of completion
times within this block (calculated as if it were to start at time~$0$) is
$\frac{k(k+1)}{2}+\alg_S$.
This part increases the cost of the third part by $(P_S+k)(n-k)$.
\item Execution of the long jobs in SPT order. The cost for this
part alone (if it were to start at time~$0$) is $\opt_L$.
\end{itemize}
The total cost of the solution by SIDLE can then be written
as:
\begin{eqnarray*}
\alg & = & (n-k)n + \frac{k(k+1)}{2}+ \alg_S + (P_S+k)(n-k) + \opt_L \\
    & = & (n-k)(n+k) + \frac{k(k+1)}{2} + \alg_S + P_S\cdot (n-k) + \opt_L\\
    & = & n^2-k^2 + \frac{k(k+1)}{2} + \alg_S + P_S\cdot(n-k)+\opt_L
\end{eqnarray*}
We want to bound this ratio:
\begin{equation}
\frac{\alg}{\opt} =
\frac{n^2-k^2 + \frac{k(k+1)}{2} + \alg_S + P_S\cdot(n-k)+\opt_L}{%
\frac{k(k+1)}{2}+\opt_S + (P_S+k) \cdot (n-k) + \frac{(n-k)(n-k+1)}{2}+\opt_L
}
\label{eq:sidleratio1}
\end{equation}
The following lemma from \cite{DurrEMM20} allows us to restrict our
attention to the case where the processing times of the short jobs are
taken from a very limited set of possibilities.

\begin{lemma}[Lemma~3 in \cite{DurrEMM20}]\label{lem:Durr}
    Suppose that there is an interval $[\ell',u']$ such that $\opt$ schedules all jobs~$j$ with $p_j\in[\ell',u']$ either all tested or all untested, independently of the actual processing time in $[\ell',u']$.  Suppose that this holds also for $\alg$.
Moreover, suppose that the schedules of $\opt$ and $\alg$ do not change
(in the sense that the order of all tests and job executions remains the same)
when changing the processing times in $[\ell',u']$ as long as the
relative ordering of job processing times does not change.
    Then there is a worst-case instance for $\alg$ where every job $j$ with $p_j\in[\ell',u']$ satisfies $p_j\in\{\ell',u'\}$.
\end{lemma}

Applying Lemma~\ref{lem:Durr} with $\ell'=0$ an $u'=y$, we conclude that
it suffices to consider the
case that all the processing times of the short jobs are
equal to $0$ or~$y$.
Let $\gamma$ be such that $\gamma k$ short jobs
have processing time $0$ and $(1-\gamma)k$ short jobs have
processing time~$y$.

Then we have
\begin{align*}
\opt_S &= \frac{(1-\gamma)k((1-\gamma)k+1)}{2} y = \frac{(1-\gamma)^2}{2}y k^2 + \frac{1-\gamma}{2}y k
\end{align*}
and, because the $\gamma k$ short jobs complete at time $(1-\gamma)k y$ instead of time $0$ in the schedule in reverse SPT order,
\begin{align*}
\alg_S &= \opt_S + \gamma k (1-\gamma) k y = \frac{1-\gamma^2}{2} y k^2 +  \frac{1-\gamma}{2}y k
\,.
\end{align*}
Substituting these expressions as
well as $P_S = (1-\gamma)k y$ into (\ref{eq:sidleratio1}) gives the following
expression for $\alg/\opt$:
\begin{align}
& 
\frac{n^2-k^2 + \frac{k(k+1)}{2} + \frac{1-\gamma^2}{2} y k^2 +  \frac{1-\gamma}{2}y k + (1-\gamma)k y \cdot(n-k)+\opt_L}{%
\frac{k(k+1)}{2}+\frac{(1-\gamma)^2}{2}y k^2 + \frac{1-\gamma}{2}y k + ((1-\gamma)k y +k) \cdot (n-k) + \frac{(n-k)(n-k+1)}{2}+\opt_L
}\nonumber \\
&
\le 
\frac{n^2-k^2 + \frac{k(k+1)}{2} + \frac{1-\gamma^2}{2} y k^2 +  \frac{1-\gamma}{2}y k + (1-\gamma)k y \cdot(n-k)+\frac{(n-k)(n-k+1)}{2}y}{%
\frac{k(k+1)}{2}+\frac{(1-\gamma)^2}{2}y k^2 + \frac{1-\gamma}{2}y k + ((1-\gamma)k y +k) \cdot (n-k) + \frac{(n-k)(n-k+1)}{2}(1+y)
}\nonumber \\
&
=
\frac{
(1-\alpha^2+\frac{\alpha^2}{2}+\frac{1-\gamma^2}{2}y\alpha^2+
(1-\gamma)\alpha y (1-\alpha)+\frac{(1-\alpha)^2}{2}y)n^2
+(\frac{\alpha}{2}+\frac{1-\gamma}{2}y\alpha+\frac{1-\alpha}{2}y)n
}{
(\frac{\alpha^2}{2}+\frac{(1-\gamma)^2}{2}y\alpha^2+
((1-\gamma)y+1)\alpha(1-\alpha) + \frac{(1-\alpha)^2}{2}(1+y))n^2
+(\frac{\alpha}{2}+\frac{1-\gamma}{2}y\alpha+\frac{1-\alpha}{2}(1+y))n
}\nonumber \\
&
\le
\frac{
(1-\alpha^2+\frac{\alpha^2}{2}+\frac{1-\gamma^2}{2}y\alpha^2+
(1-\gamma)\alpha y (1-\alpha)+\frac{(1-\alpha)^2}{2}y)n^2
}{
(\frac{\alpha^2}{2}+\frac{(1-\gamma)^2}{2}y\alpha^2+
((1-\gamma)y+1)\alpha(1-\alpha) + \frac{(1-\alpha)^2}{2}(1+y))n^2
}\nonumber \\
&
=
\frac{
1-\frac{\alpha^2}{2} + \frac{y}{2} (1+2\alpha^2\gamma-\alpha^2\gamma^2-2\alpha\gamma)
}{
\frac12 + \frac{y}{2}(1+\alpha^2\gamma^2-2\alpha\gamma)
}
\label{eq:sidleratio2}
\end{align}
Here, the first inequality holds because the ratio is maximized when $\opt_L$ is as small as possible, which happens when all the
long jobs have processing time $y+\epsilon$ for infinitesimally
small $\epsilon$, giving $\opt_L=\frac{(n-k)(n-k+1)}{2}y$.
The second inequality holds because the term linear in $n$
in the enumerator is smaller than the term linear in $n$
in the denominator, and so omitting these terms cannot
decrease the ratio.

\begin{figure}
\centerline{\scalebox{0.8}{\includegraphics{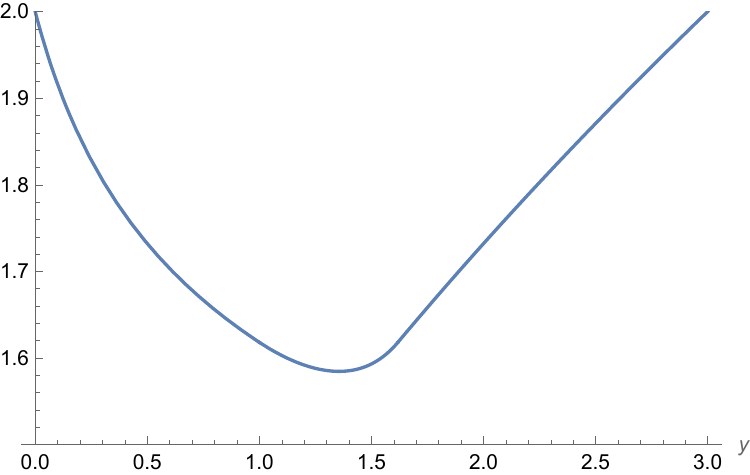}}}

\caption{Competitive ratio of algorithm SIDLE as a function
of the parameter~$y$.}
\label{fig:sidleratio}
\end{figure}%
For any fixed~$y$, the adversary can choose $\alpha$ and $\gamma$
with $0\le \alpha\le 1$ and $0\le \gamma\le 1$ so as
to maximize (\ref{eq:sidleratio2}). Let $\rho(y)$ denote
the value of that maximum. A plot of the function $\rho(y)$,
determined using Mathematica, is shown
in Figure~\ref{fig:sidleratio}. We want to choose
$y\ge 0$ in such a way that $\rho(y)$ is minimized.

Using Mathematica, we find the choice
of $y$ that minimizes $\rho(y)$ is $y=y_0\approx 1.35542$,
where $y_0$ is the second root of the polynomial
$2y^3-9y^2+10y-2$. The resulting value of $\rho(y)$
is $\frac{1}{2}(1-y_0+y_0^2+\sqrt{9-2y_0-y_0^2-2y_0^3+y_0^4})
\approx 1.58451$.
\end{proof}

We remark that the analysis of Theorem~\ref{th:sidle}
is tight: For $\alpha\approx 0.644584$ and
$\gamma\approx 0.737781$ (the values that maximize
(\ref{eq:sidleratio2}) for $y=y_0$), we can consider instances
with $\alpha\gamma n$ jobs with processing time~$0$,
$\alpha(1-\gamma)n$ jobs with processing time~$y_0$, and
$(1-\alpha)n$ jobs with processing time $y_0+\epsilon$
for infinitesimally small~$\epsilon$. For large enough~$n$,
the competitive ratio of algorithm SIDLE on these instances
is then approximately~$1.58451$.

\section{Conclusion}
\label{sec:conc}%
In this paper, we have introduced a variant of scheduling with
testing where every job must be tested and the objective is
minimizing the sum of completion times. Our main result is
an analysis showing that the competitive analysis of the
$1$-SORT algorithm is at most $1.861$. For the special case of uniform
test times, we have presented a $1.585$-competitive algorithm
as well as a lower bound of $\sqrt{2}$ on the competitive
ratio of any deterministic algorithm.

There are several interesting directions for future research.
First, there are gaps between our lower bound of
$\sqrt{2}$ and our upper bounds of $1.585$ and $1.861$
on the competitive ratio for uniform and arbitrary test times,
respectively.
One immediate question is whether our analysis of $1$-SORT can be
improved, as we only know that the competitive
ratio of $1$-SORT is not better than $1.618$.
Our lower
bound of $\sqrt{2}$ on the competitive ratio of deterministic
algorithms holds for uniform test times; it would
be interesting to find out whether the case of arbitrary test times
admits a stronger lower bound. Furthermore, it would be worthwhile
to study randomized algorithms for the problem. Finally,
it would be interesting to explore whether our new technique for
analyzing $\beta$-SORT can also be applied to other variants
of scheduling with testing.

\bibliography{testing}

\appendix
\section{Appendix: Mathematica notebooks}
On the following pages we provide the Mathematica notebooks
showing how the ratio of Theorem \ref{th:betaSort} has been minimized
to obtain Corollary~\ref{cor:betaSort},
how the competitive ratio of $1$-SORT on the examples presented
in Section \ref{subsec:arbitrary:lower} has been maximized, and how the
upper bound on the competitive
ratio of SIDLE in Theorem~\ref{th:sidle} has been determined.

\includepdf[pages=1,offset=-2.75cm -1.75cm]{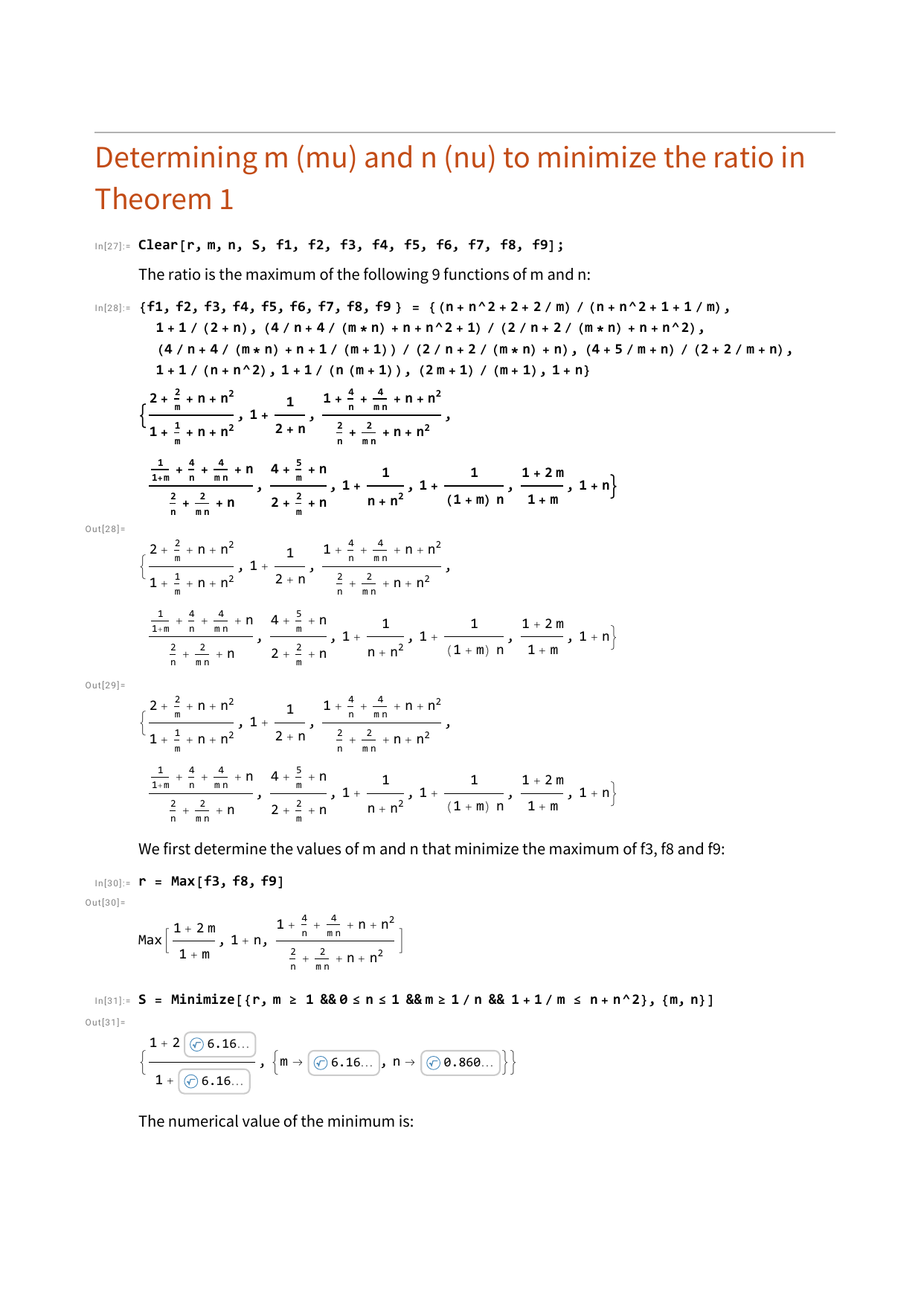}
\includepdf[pages=2,offset=2.75cm -1.75cm]{mathematica/OneSort.pdf}
\includepdf[pages=1,offset=-2.75cm -1.75cm]{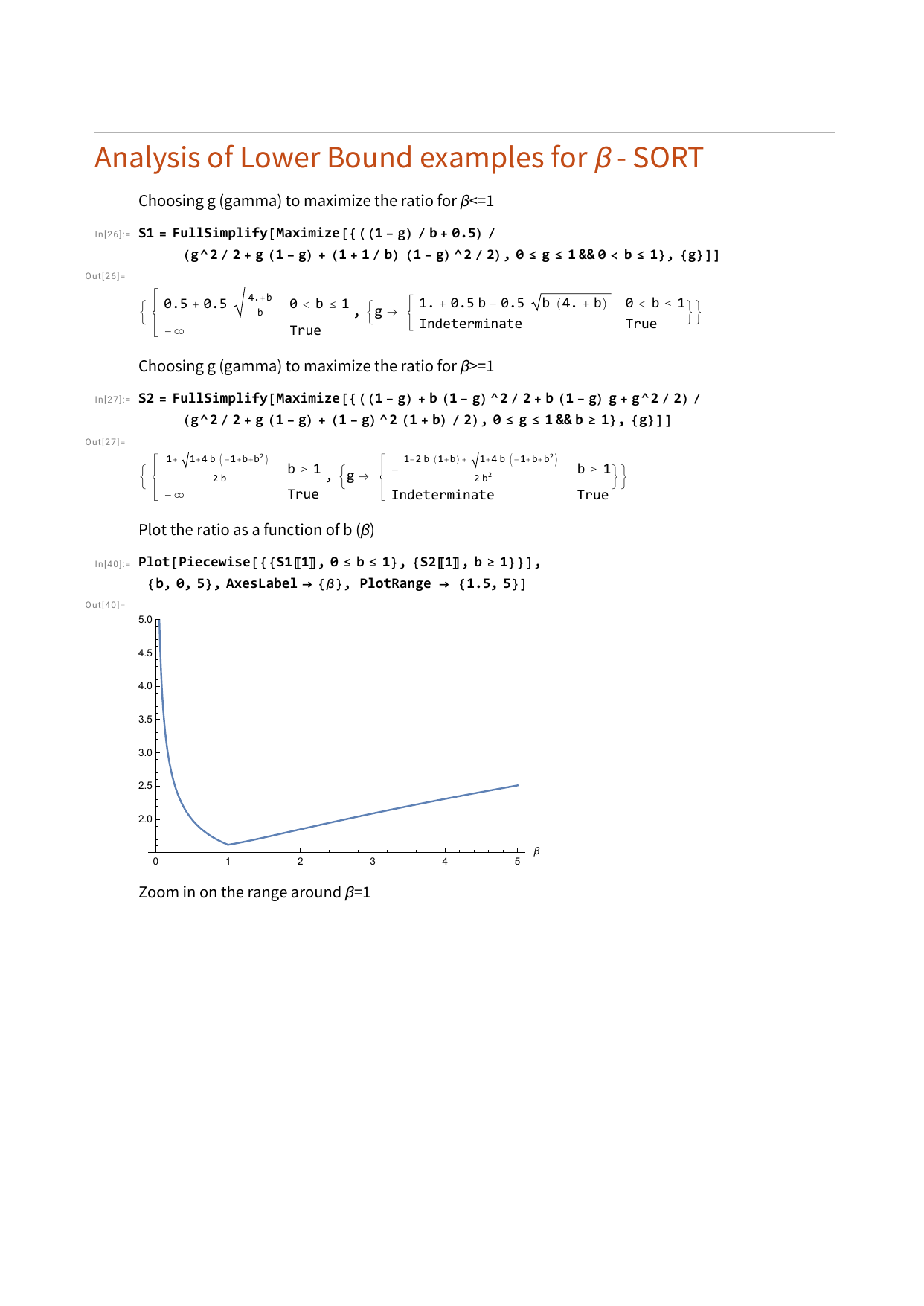}
\includepdf[pages=2,offset=2.75cm -1.75cm]{mathematica/LowerBoundInputs.pdf}
\includepdf[pages=1,offset=-2.75cm -1.75cm]{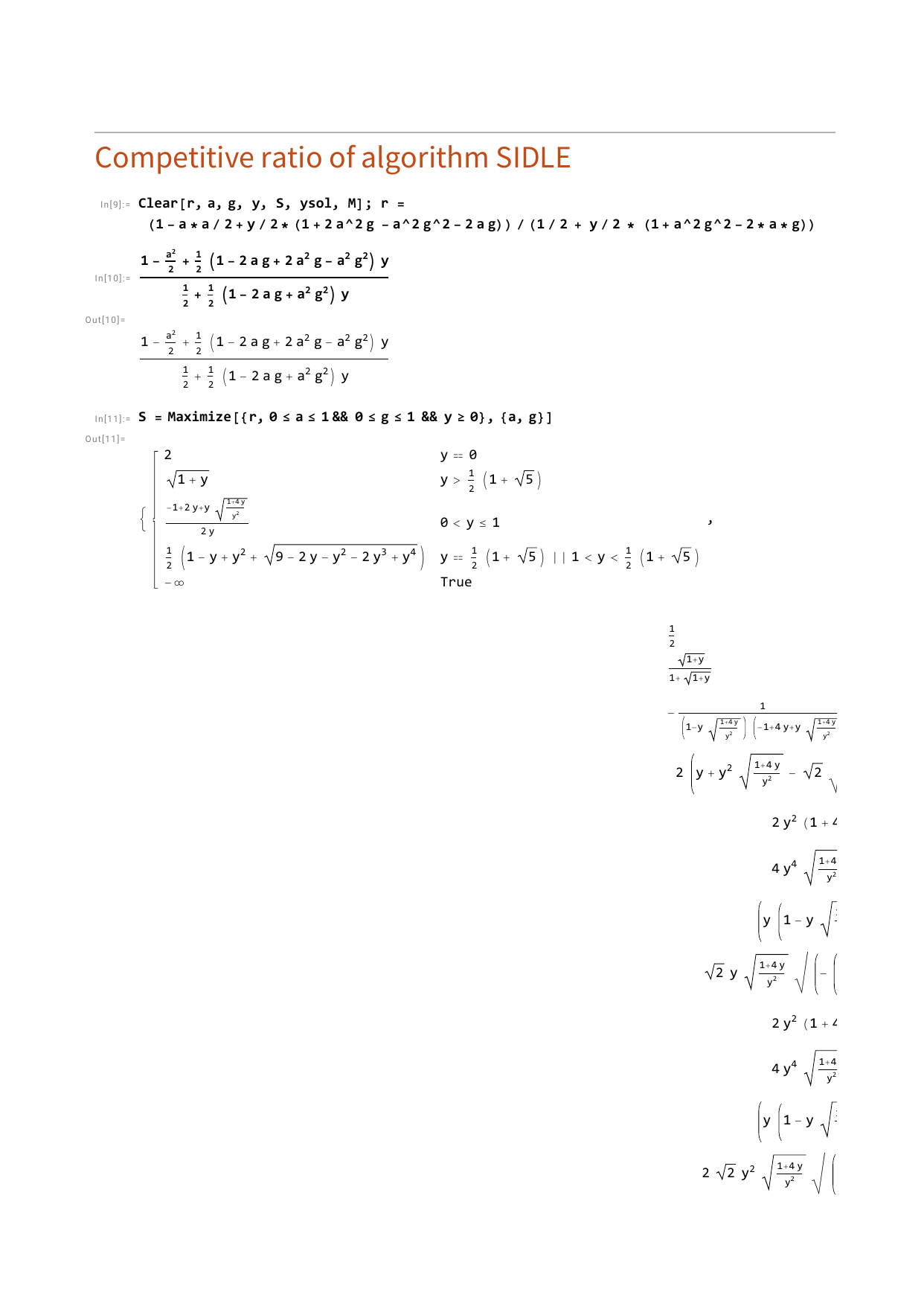}
\includepdf[pages=2,offset=2.75cm -1.75cm]{mathematica/SIDLE.pdf}
\includepdf[pages=3,offset=-2.75cm -1.75cm]{mathematica/SIDLE.pdf}
\includepdf[pages=4,offset=2.75cm -1.75cm]{mathematica/SIDLE.pdf}
\end{document}